\newcolumntype{x}[1]{>{\centering\arraybackslash}p{#1}}
\newcommand\blfootnote[1]{%
  \begingroup
  \renewcommand\thefootnote{}\footnote{#1}%
  \addtocounter{footnote}{-1}%
  \endgroup
}
\algnewcommand\algorithmicinput{\textbf{Input:}}
\algnewcommand\INPUT{\item[\algorithmicinput]}
\algnewcommand\algorithmicoutput{\textbf{Output:}}
\algnewcommand\OUTPUT{\item[\algorithmicoutput]}
\algnewcommand\algorithmicoptional{\textbf{Optional:}}
\algnewcommand\OPTIONAL{\item[\algorithmicoptional]}
\newcommand{\bp}{\mathbf{p}}
\newcommand{\bv}{\mathbf{v}}
\newcommand{\bx}{\mathbf{x}}
\newcommand\norm[1]{\left\lVert#1\right\rVert}
\definecolor{blue}{rgb}{0,0,1}
\definecolor{darkgreen}{rgb}{0,0.5,0}
\definecolor{red}{rgb}{1,0,0}
\definecolor{teal}{rgb}{0,0.5,0.7}
\newtheorem{theorem}{Theorem}
\newtheorem{lemma}{Lemma}
\newcommand{\boundellipse}[3]% center, xdim, ydim
{(#1) ellipse (#2 and #3)
}
\title{{\fontsize{17}{17}\selectfont \textbf{Separation of periodic orbits\\ in the delay embedded space of chaotic attractors}}\vspace{-.15in}}
\author{
  \normalsize{Prerna M Patil$^{1*}$, Eurika Kaiser$^{1}$, J. Nathan Kutz$^{2}$, Steven L. Brunton$^{1*}$} \\
  \footnotesize{$^1$ Department of Mechanical Engineering, University of Washington, Seattle, WA 98195, United States}\\
  \footnotesize{$^2$ Department of Applied Mathematics, University of Washington, Seattle, WA 98195, United States\vspace{-.2in}}
}
\date{}
\begin{document}
\maketitle

\blfootnote{$^*$ Corresponding authors (prernap@uw.edu,sbrunton@uw.edu).}%\\ %%%%%%%%%%%%

\begin{abstract}
This work explores the intersection of time-delay embeddings, periodic orbit theory, and symbolic dynamics. 
Time-delay embeddings have been effectively applied to chaotic time series data, offering a principled method to reconstruct relevant information of the full attractor from partial time series observations. In this study, we investigate the structure of the unstable periodic orbits of an attractor using time-delay embeddings. 
First, we embed time-series data from a periodic orbit into a higher-dimensional space through the construction of a Hankel matrix, formed by arranging time-shifted copies of the data.  We then examine the influence of the width and height of the Hankel matrix on the geometry of unstable periodic orbits in the delay-embedded space. The right singular vectors of the Hankel matrix provide a basis for embedding the periodic orbits.
We observe that increasing the length of the delay (e.g., the height of the Hankel matrix) leads to a clear separation of the periodic orbits into distinct clusters within the embedded space. Our analysis characterizes these separated clusters and provides a mathematical framework to determine the relative position of individual unstable periodic orbits in the embedded space. Additionally, we present a modified formula to derive the symbolic representation of distinct periodic orbits for a specified sequence length, extending the Polyá-Redfield enumeration theorem.\\
\textbf{Keywords:} Dynamical systems, Time-delay embedding, Latent variables, Unstable periodic orbits, Exact coherent structures, Symbolic dynamics, Hankel matrix, Chaotic dynamics
\end{abstract}

\section{Introduction}
% Data-driven modeling and machine learning are becoming increasingly important across various scientific and engineering domains. This trend has been driven by the increasing availability of training data, high-performance computing, and advanced algorithms for machine learning. Many neural network-based algorithms result in accurate time series prediction and forecasting, while a class of symbolic regression models provides interpretable and generalizable dynamical systems that are as simple as needed to describe the data. However, there are several fundamental challenges in data-driven dynamical systems, including chaos and limited or partial measurements of the system (e.g., neuroscience, climate, epidemiology), both of which we begin to address in this work. 
%There are several fundamental challenges in data-driven dynamical systems, including chaos and limited or partial measurements of the system (e.g., neuroscience, climate, turbulence, epidemiology). %, both of which we begin to address in this work. 
% Specifically, we explore the intersection of time delays for partial measurements and unstable periodic orbits, which have been useful for 
Unstable periodic orbits (UPOs) have been essential in our understanding of chaotic dynamics~\cite{cvitanovic1991periodic,cvitanovic2005chaos,gutzwiller2013chaos} of real world systems, such as fluids~\cite{kawahara2001periodic,cvitanovic2010geometry,viswanath2007recurrent,chandler2013invariant} and planetary dynamics \cite{broucke1968periodic,koon2000dynamical, arnas2021approximate,koon2000heteroclinic}. Similarly, time-delay embeddings provide a rigorous mathematical approach to understanding the full-state dynamics of systems that are only partially observed~\cite{takens1981detecting, sauer1991embedology}. Recently, delay embeddings have been increasingly used to build data-driven models from partial measurements, even for nonlinear systems~\cite{brunton2017chaos}. 
In both fields, the past decade has seen several breakthroughs based on advances in machine learning~\cite{lusch2018deep,champion2019data,otto2019linearly,gilpin2020deep,linot2020deep} and Koopman operator theory~\cite{brunton2017chaos,das2019delay,kamb2020time,arbabi2017ergodic}.
This work begins to examine UPOs through the lens of time-delay embeddings. In particular, we observe a separation of UPOs in an appropriate delay embedding and provide theoretical results to understand this phenomenon. 

Time-delay embeddings enable the reconstruction of a high-dimensional state-space from partial measurements of a system with latent variables~\cite{sauer1991embedology,gidea2007phase}.  The Whitney embedding theorem and the Takens embedding theorem~\cite{whitney1936differentiable,takens1981detecting,broomhead1986extracting,casdagli1991state,liebert1989proper,gibson1992analytic} provide conditions under which this reconstruction is possible. 
Time-delay embedding has found applications in linear system identification using the eigensystem realization algorithm (ERA)~\cite{juang1985eigensystem} and singular spectrum analysis (SSA)~\cite{broomhead1986extracting,broomhead1989time}. These methods identify the dominant modal content of data in delay coordinates~\cite{kamb2020time} from the singular value decomposition (SVD) of the Hankel matrix.   
% The structure of time delay embeddings for linear models has also been explored in \cite{pan2020structure}. 
This theory has recently been extended to nonlinear systems based on Koopman operator theory~\cite{giannakis2012nonlinear,brunton2017chaos}. 
% It has also been applied to the RTBP problem to obtain a linear representation~\cite{servadio2022dynamics}.  % MOVE LATER
Machine learning has further enabled the discovery and representation of the diffeomorphism alluded to by Takens~\cite{bakarji2023discovering}. 
%The use of time delay embeddings for phase space reconstruction has also been applied to the restricted three body problem (RTBP) \cite{gidea2007phase}. The Koopman operator has been applied to transform the nonlinear dynamics in a linear representation for RTBP \cite{servadio2022dynamics} and spectral analysis of the eigenmodes of the Koopman matrix is provided. Using this analysis a control application is developed based on Kooopman operator-based filter and controller to keep a spacecraft orbiting around the desired orbit.

In a parallel line of investigation, classical periodic orbit theory has been used to calculate the fractal dimension, the topological entropy, and the Lyapunov exponents of strange attractors~\cite{auerbach1987exploring,viswanath2001lindstedt,pawelzik1991unstable,so1997extracting,badii1994progress,lan2004variational,choe1999computing,broucke1968periodic}. Symbolic dynamics~\cite{williams2004introduction,hao1991symbolic,galias2009symbolic,fang1996symbolic,hao1998symbolic,viswanath2003symbolic,davidchack2000estimating} has been closely related to unstable periodic orbits as a tool for qualitative analysis of a dynamical system. It provides a concise way to depict these periodic orbits, as the numerical details are disregarded, but a high-level view of the dynamics, such as symmetry, periodicity is retained. Fig.~(\ref{fig:SymbDomainsLorenz}) illustrates a few UPOs for the chaotic Lorenz system along with the symbolic partitions of the attractor. The study of periodic orbits has been relevant in the context of celestial mechanics to solve the restricted three-body problem (RTBP)~\cite{broucke1968periodic,koon2000dynamical, arnas2021approximate,koon2000heteroclinic} and in the field of fluid dynamics, these periodic orbits are referred to as the \lq exact coherent structures\rq ~(ECS)~\cite{graham2021exact} or \lq exact recurrent flows\rq~\cite{chandler2013invariant,cvitanovic2013recurrent}. ECS are considered to be saddle points in state space around which chaotic dynamics are organized. Due to the high dimensionality of fluid flows, the computations of these structures are nontrivial. The development of numerical solvers~\cite{viswanath2007recurrent,viswanath2001lindstedt,guckenheimer2000computing,choe1999computing,doedel1998auto97} has enhanced our ability to find these UPOs in high-dimensional fluid flow systems. Unstable periodic solutions of the incompressible Navier-Stokes equations have been found for wall-bounded turbulent flows~\cite{kawahara2001periodic,chandler2013invariant,waleffe2001exact,waleffe1998three,gibson2008visualizing,graham2021exact} and have been instrumental in understanding the turbulence phenomenon. 
%Numerical solvers like the Newton-Krylov solver~\cite{viswanath2007recurrent} enabled the computation of periodic orbits and demonstrates the breakup and reformation of coherent structures near the wall. 
%The exact coherent structures for a wall-bounded turbulent flow, such as streamwise vortices and steaks and hairpin vortices near the wall, have been computed which has improved our understanding of shear turbulence~\cite{waleffe2001exact,waleffe1998three,gibson2008visualizing,graham2021exact}. 
It has also been shown that ECS solutions can be represented by a few resolvent modes~\cite{sharma2016correspondence, rosenberg2019computing} i.e., these modes form an efficient basis for the representation of ECS. The connections between resolvent analysis, ECS and Koopman operator have been shown in~\cite{sharma2016low}. The developments in data-driven analysis and Koopman operator theory have led to the identification of UPOs for turbulent data~\cite{page2020searching} as well as in other dynamical systems~\cite{bramburger2021data}. Deep learning techniques have helped in using the identified UPOs to construct low-dimensional representation of turbulent flow and predicting the statistics of the turbulent attractor~\cite{page2024recurrent}.

%The exploratory analysis of ECS has led to the study of turbulence control. It has been shown in~\cite{kushwaha2017near,yang2019modulation} that a flow can be driven toward a desired state by manipulating the near-wall dynamics. The original system can also be reduced to a low-dimensional system called {\lq shadowing decomposition\rq} in a Markov chain where the nodes form periodic orbits as shown in~\cite{yalniz2021coarse}. 
% Periodic orbit theory has also been closely related to symbolic dynamics \cite{williams2004introduction,hao1991symbolic,galias2009symbolic,fang1996symbolic,hao1998symbolic} which can be used as tool for qualitative analysis of a dynamical system. For chaotic attractors, it is possible to construct a symbolic representation which uniquely characterizes a periodic orbit. 
%Symbolic dynamics \cite{williams2004introduction,hao1991symbolic} can be considered as a tool for qualitative analysis of a dynamical system. 
\begin{figure}
    \centering
    \subfloat[Lorenz attractor]{\includegraphics[width=0.495\textwidth]{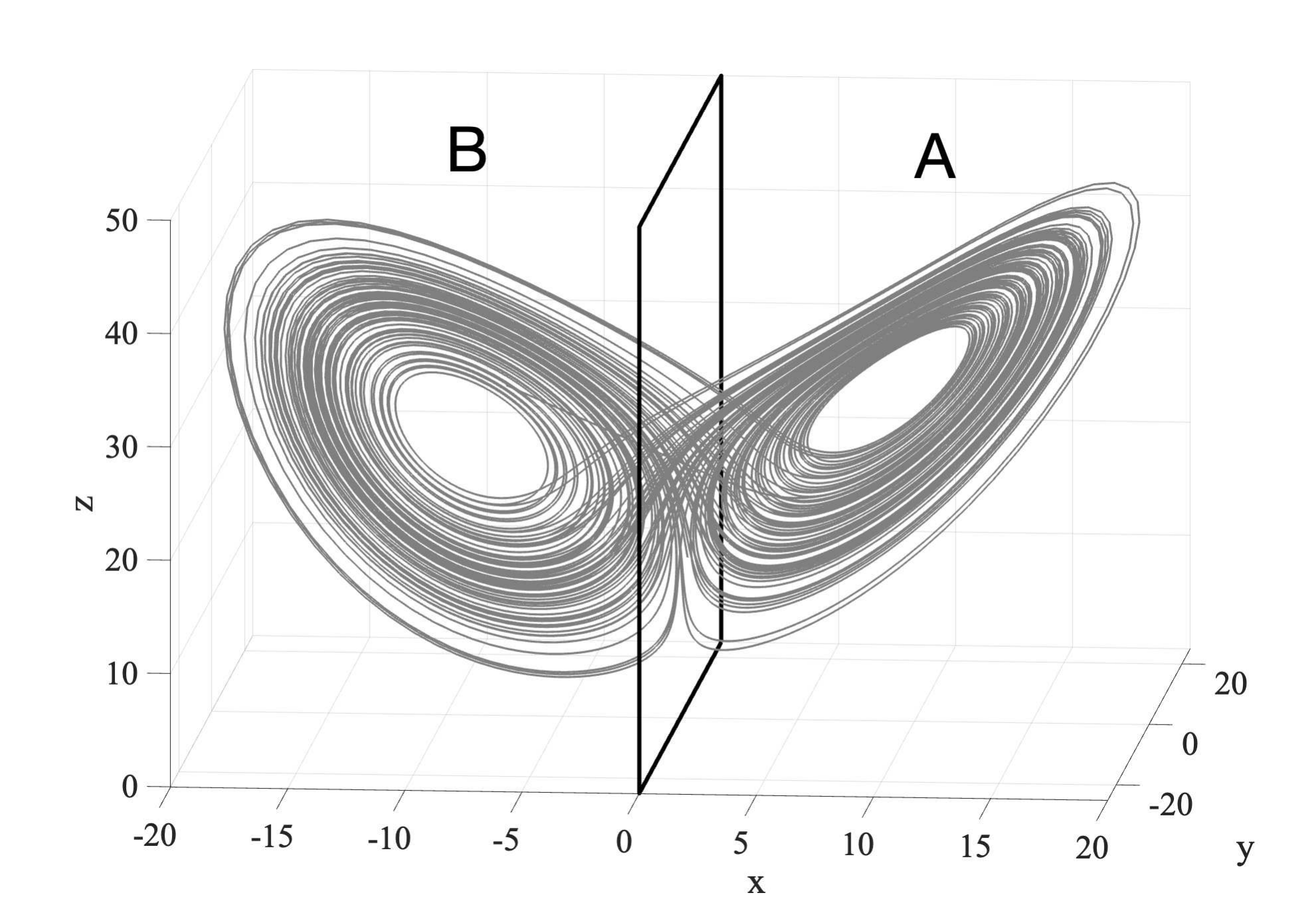}}
    \subfloat[Representations of UPOs]{\includegraphics[trim=12cm 0 12cm 0,clip,width=0.495\textwidth]{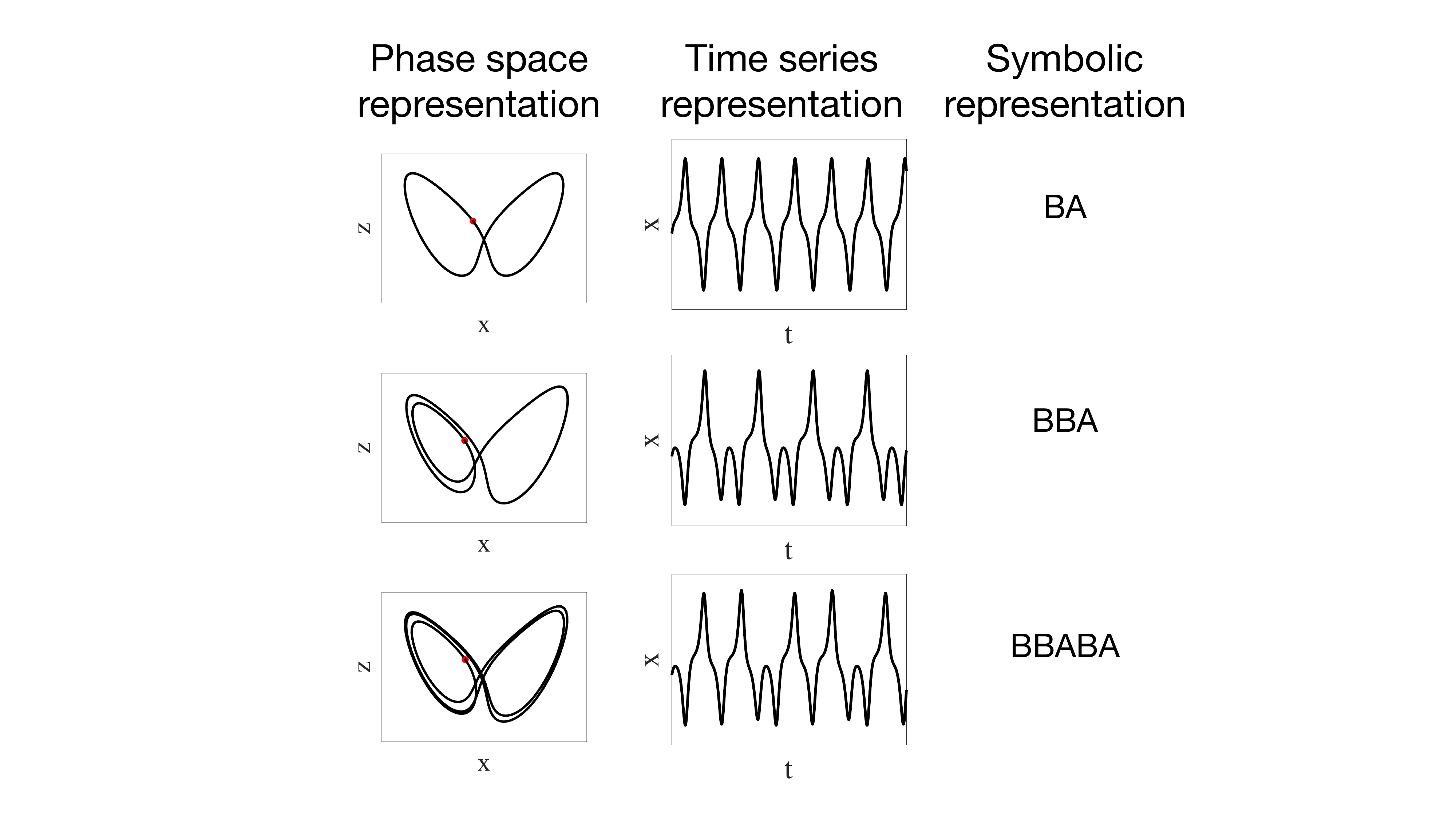}}
    \caption{(a) The two lobes of the Lorenz attractor are denoted by the symbolic dynamics notation. When the trajectory passes through the left lobe i.e., $x<0$ it is denoted as $B$ and when the trajectory passes through the right lobe i.e., $x\geq 0$ it is denoted as $A$. (b) Three representations of the unstable periodic orbits: phase space representation, time series representation and symbolic representation. We will investigate these orbits using time delays. }
    \label{fig:SymbDomainsLorenz}
\end{figure}
This work begins to explore the structure of unstable periodic orbits in delay-embedded spaces with long time delays.  We investigate this structure in the Lorenz and Rössler systems.  
% While time-delay embedding is a powerful technique, it has typically been applied to a single chaotic trajectory of a strange attractor. The goal of this paper is to observe separation of long time-delay embeddings of unstable periodic orbits for the Lorenz and Rössler attractors.  
The analysis presented in this paper is exploratory in nature and will improve the understanding of chaotic attractors by analyzing the building blocks of the attractor i.e., the periodic orbits. 
This paper contributes to the analysis of dynamical systems using periodic orbits in the following ways: 
\begin{itemize}
    \item First, it conducts an analysis of the singular vectors of the Hankel matrix derived from the long time-delay embeddings of unstable periodic orbits.
    \item Second, it offers a mathematical examination of disentangled unstable periodic orbits and their proximity to the plane of separation in the embedded space.
    \item Third, it presents a combinatorial formulation for calculating the number of unstable periodic orbits, given the sequence length and number of symbols, through a modification of the Redfield-Polyà enumeration theorem.
\end{itemize}  
% %% Application of this work 
% Insights gained from this work can be exploited to study control of orbits in chaotic attractors or in turbulent flows. %, for example through choosing a specified sequence of symbolic dynamics. 
% This will be relevant to stabilization of periodic orbits \cite{bramburger2021data} or the construction of the linear representation of these systems using HAVOK \cite{brunton2017chaos}.

%% Structure of the remaining paper 
The remainder of the paper is structured as follows: section \ref{sec:Background} details the Lorenz and Rössler attractors studied, their symbolic dynamics, and the datasets used for the subsequent analysis of their periodic orbits. In section \ref{sec:Theorems}, we introduce a theorem that explains the mechanism behind the separation of unstable periodic orbits in the embedded space. Finally, section \ref{sec:results} demonstrates the results of long time-delay embeddings for these unstable periodic orbits.
\section{Background}\label{sec:Background}
In this section, details about symbolic dynamics for unstable periodic orbits for the Lorenz and Rössler attractor are provided along with the dataset used for the subsequent construction of the Hankel matrices. In this paper, scalar variables are denoted by non-bold letters ($x,y,z$), vectors are denoted by bold lowercase letters ($\bx$) and matrices are denoted by non-bold uppercase letters ($U,V,\Sigma$) except for letters $A,B$ which are used to denote symbolic dynamics.
\subsection{Lorenz attractor}
The evolution equations of the state variables of the Lorenz system~\cite{lorenz1963deterministic} are given by the equations:
\begin{align}
    \dot{x}(t) &= \sigma(y-x),\nonumber\\
    \dot{y}(t) &= -x z+\rho x-y,\\
    \dot{z}(t) &= x y-\beta z.\nonumber
\end{align}
Here, the parameter values are taken to be $\beta=8/3, \sigma=10$ and $\rho=28$ which correspond to geometry of the system, Prandtl number and Rayleigh number respectively. The Lorenz system at the given values has three equilibrium values- one is the origin $\bp_0=[0,0,0]$, the other two are symmetric: $\bp_+$ and $\bp_-$ with coordinates $[\pm \sqrt{\beta(\rho-1)},\pm\sqrt{\beta(\rho-1)},\rho-1]$ for the parameter values considered here the equilibrium points are $[\pm 8.485, \pm 8.485, 27]$. 
\subsubsection{Symbolic dynamics of the Lorenz attractor}
In case of the Lorenz attractor, the phase space of the attractor is divided into two non-overlapping regions and the symbols $A$ and $B$ are assigned to each region. The two domains for the Lorenz attractor are depicted in Fig.~(\ref{fig:SymbDomainsLorenz}a). If the trajectory passes through the plane dividing the phase space and completes one loop in the region $x<0$, that part of the trajectory is labeled as $B$. Similarly, if the trajectory passes through the divider plane and completes one loop in the region $x >0$, that part of the trajectory is labeled as $A$~\cite{viswanath2003symbolic, barrio2015database}. 

A periodic orbit can now be represented as a finite sequence ($AB, ABBA, AABB,$ etc) that can be repeated indefinitely. In this work, the use of symbolic dynamics has been to mainly obtain a nomenclature for the periodic orbits.  

\subsubsection{Dataset of unstable periodic orbits of the Lorenz attractor}
Numerous algorithms have been proposed for the computation of the unstable periodic orbits of the Lorenz attractor \cite{boghosian2011unstable, viswanath2003symbolic, viswanath2001lindstedt,lan2004variational}. The unstable periodic orbits used in this analysis for the Lorenz attractor are obtained from the dataset of 111011 periodic orbits given in \cite{viswanath2003symbolic}. The computation of the periodic orbits is performed using an extension of the Lindsted-Poincaré algorithm \cite{viswanath2001lindstedt}. The computations exploit the symbolic dynamics of the attractor and notions from hyperbolicity theory. The dataset consists of all periodic orbits of sequence length of 20 or less. UPOs constructed in this dataset have values close to machine precision (14 accurate digits). This dataset is used for the subsequent analysis. The dataset can be found at \cite{LorenzDataset} and it contains the time series of each unstable periodic orbit for $x,y,z$ variables. The time periods of the orbits vary from $t_{min} =1.55865$ to $t_{max} = 9.15097$. This data will be referred to as the \textit{Divakar dataset} for the rest of this paper.

Due to the symmetry in the topology of the Lorenz attractor, if there exists a symbolic sequence $s_1 s_2 s_3...s_n$ where $s_i$ are either $A$ or $B$, then a corresponding mirror image of the sequence can be found by flipping the signs of $x$ or $y$ variables in the trajectory. For the purpose of analysis, UPOs are divided into three subsets: (a) unstable periodic orbits of type $A^n B$ and $B^nA$; (b) symmetric unstable periodic orbits of type $A^n B^n$; and (c) all symbolic sequences of length less than 8. Dividing the UPOs into these subsets will enhance our understanding of how periodic orbits separate in embedded space and the criteria for their clustering.

\subsection{Rössler attractor}\label{sec:RosslerPOTSD}
The evolution equations of the state variables of the Rössler system\cite{rossler1976equation,rossler1979equation} are given by the equations:
\begin{align}
    \dot{x}(t) &= -(y+z),\nonumber\\
    \dot{y}(t) &= x+ay,\\
    \dot{z}(t) &= b+z(x-c).\nonumber
\end{align}
Here, the chaotic Rössler attractor is studied for the parameter values $a=0.43295,b=2$ and $c=4$. The Rössler system has two equilibrium points given by $\bp_{+}$ and $\bp_{-}$ with the coordinates given by $[ap^{\pm},-p^{\pm},p^{\pm}]$
 where $p^{\pm} = \frac{c\pm \sqrt{c^2-4ab}}{2a}$. For values considered here, the equilibrium points are $\bp_{+}=[3.7703, -8.7085, 8.7085]$ and $\bp_{-}=[0.2297,  -0.5305, 0.5305]$. 
 The equilibrium points govern the behavior of the attractor. The $\bp_{-}$ point is at the center of the spiral divergence of the attractor whereas the $\bp_{+}$ point determine the folding of the attractor on itself. 
% \begin{figure}
%     \centering
%     \includegraphics[width=0.5\textwidth]{Figures/Rossler/RosslerWEqPointsa043295.eps}
%     \caption{Chaotic trajectory of the Rössler attractor. The red dots [\textcolor{red}{\textbullet}] represent the equilibrium points given by $P_+$ and $P_-$ with coordinates given by $(ap^{\pm},-p^{\pm},p^{\pm})$ where $p^{\pm} = \frac{c\pm \sqrt{c^2-4ab}}{2a}$. For the parameter values $a=0.43295, b=2$ and $c=4$ the equilibrium points are $(3.7703, -8.7085, 8.7085)$ and $(0.2297,-0.5305, 0.5305)$.}
%     \label{fig:RosslerWithEqPoints}
% \end{figure}
\subsubsection{Symbolic dynamics of the Rössler attractor}
To determine the symbolic dynamics of the Rössler attractor a Poincaré map is defined in Fig.~(\ref{fig:RosslerFRM}a). The Poincaré section is defined to be $x =ap^{-}$ which is a plane passing through the $\bp_{-}$ equilibrium point and $y<-p^{-}$. The first return map is computed for this Poincaré section and is shown in Fig.~(\ref{fig:RosslerFRM}b). 
In case of the Rössler attractor the phase space can be divided into two regions and the symbols $A$ and $B$ are assigned to each section. The two domains of the Rössler attractor are depicted in Fig.~(\ref{fig:RosslerFRM}c).
\begin{figure}
    \centering
    \subfloat[Poincaré section]{\includegraphics[width=0.33\textwidth]{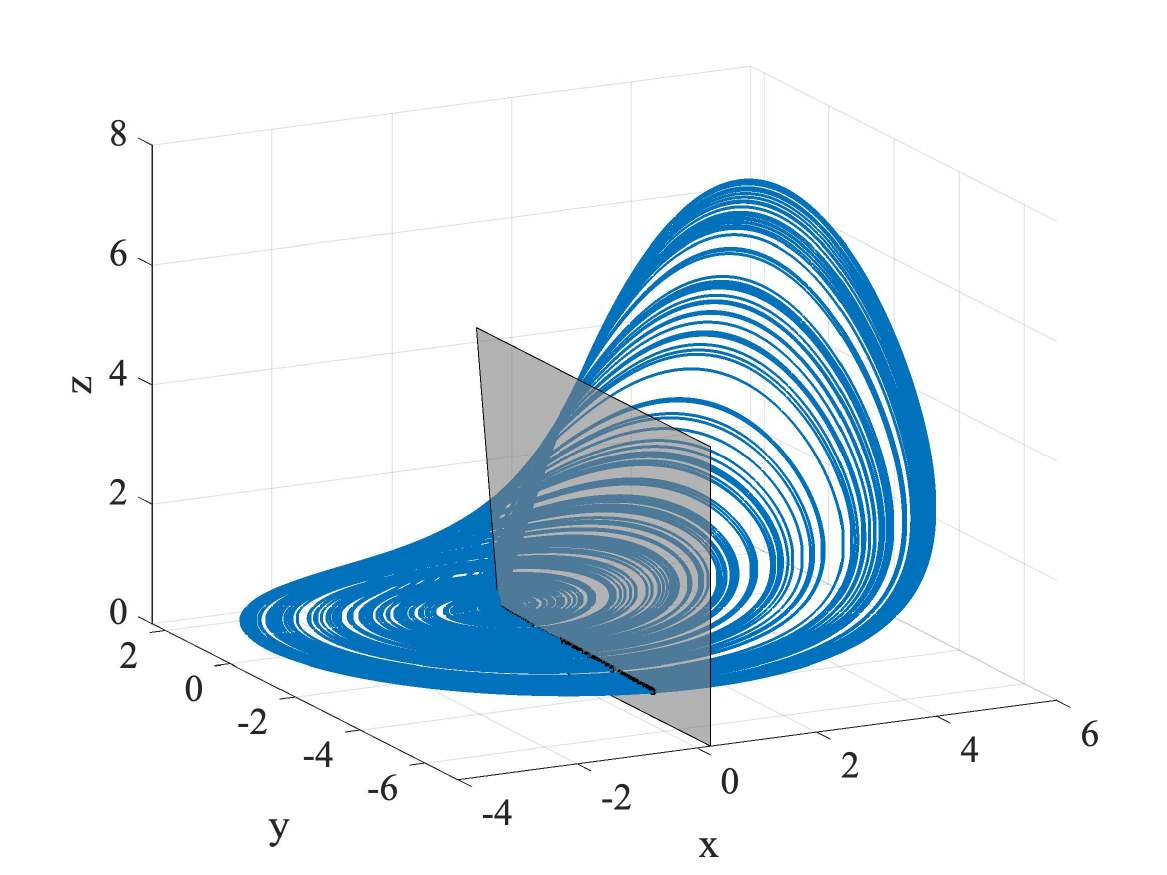}}
    \subfloat[First return map]{\includegraphics[width=0.33\textwidth]{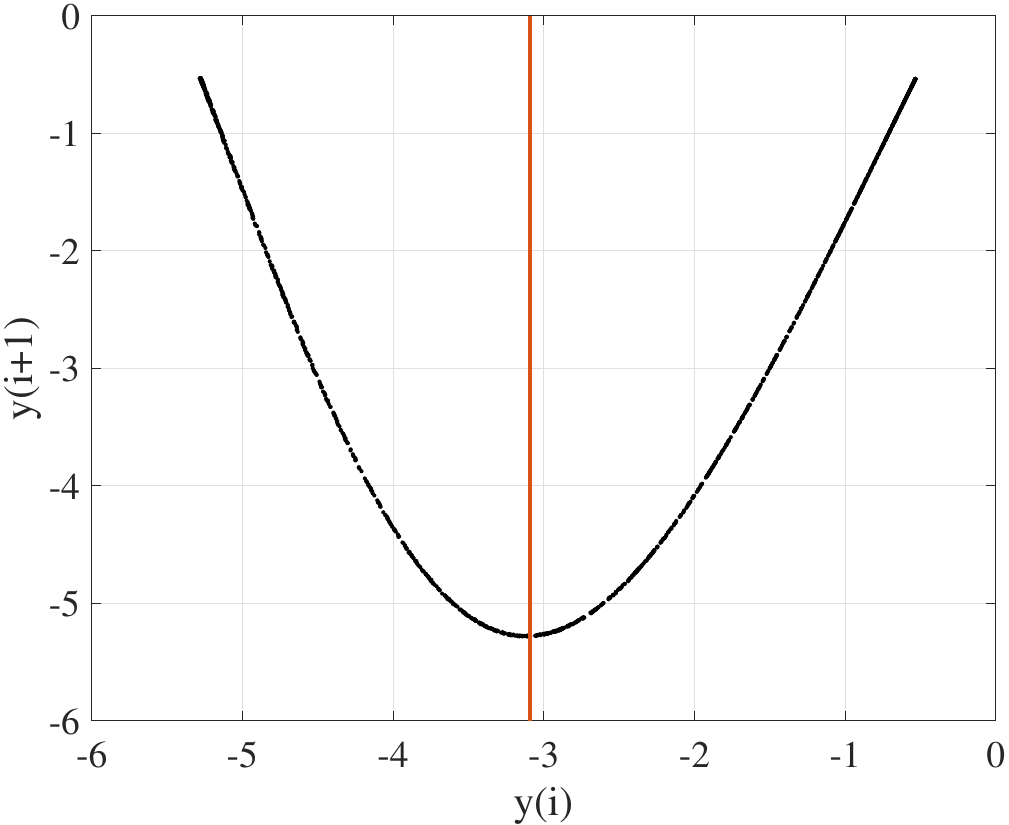}}
    \subfloat[Symbolic dynamics]{\includegraphics[width=0.33\textwidth]{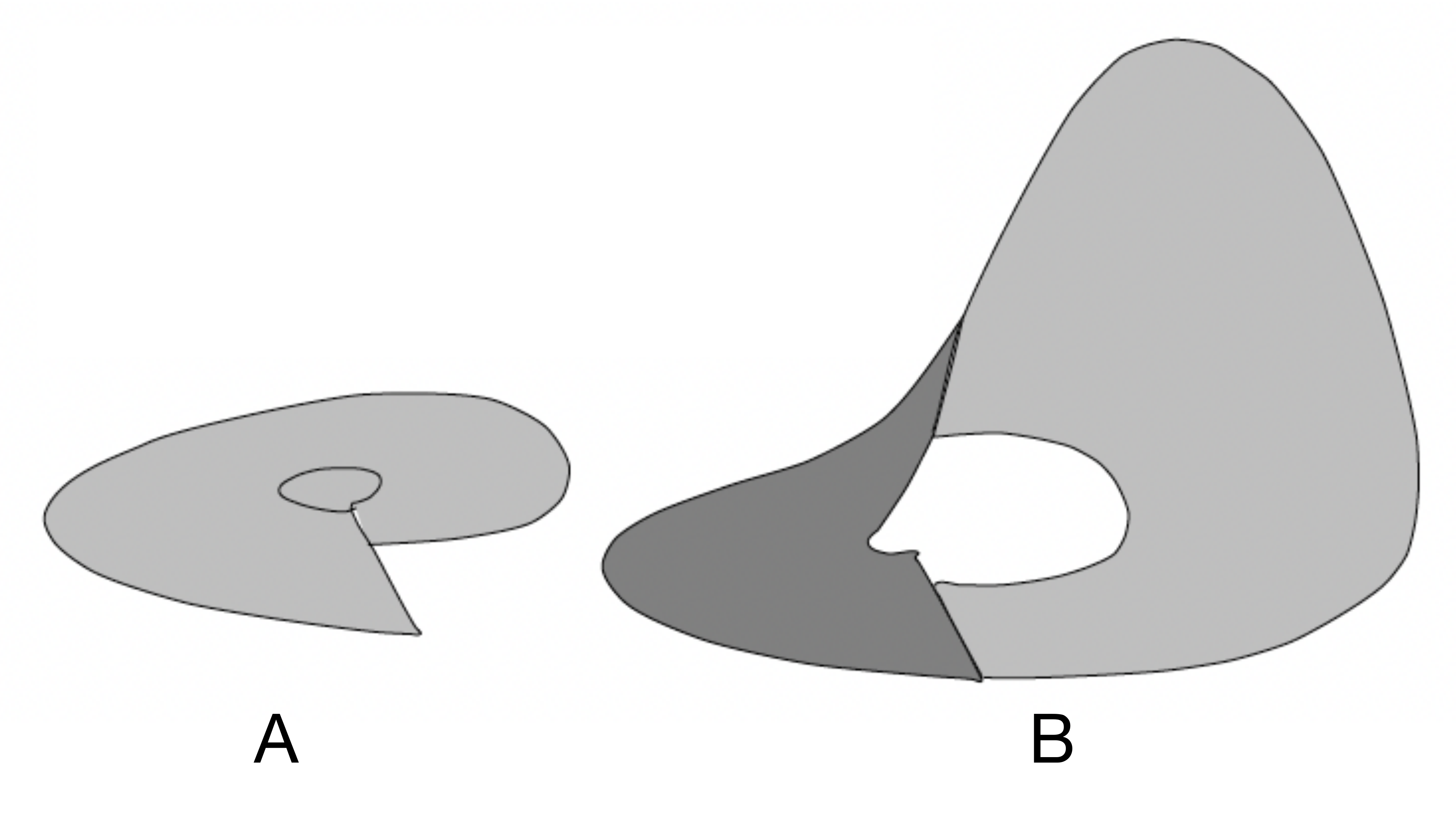}}
    \caption{Division of the phase space into two regions and symbolic notation  assigned to each region in the Rössler attractor.}
    \label{fig:RosslerFRM}
\end{figure}
The symbolic dynamics is given by
\begin{equation*}
    \begin{cases}
        A & \text{for }  y < y_c, \\
        B & \text{for }  y > y_c,
    \end{cases}
\end{equation*} 
where $y_c \approx -3.04$.
A periodic orbit can now be represented as a finite sequence ($AB,ABBA,AAAB$ etc) that can be repeated indefinitely. 
\subsubsection{Dataset of unstable periodic orbits of the Rössler attractor}
The unstable periodic orbits of the Rössler attractor are obtained from the dataset of periodic orbits given in \cite{letellier1994caracterisation}. The computation of the periodic orbits is performed using the method proposed by \cite{dutertre1995caracterisation}. In this case, the analysis is performed using 41 orbits computed for the value of parameters: $a = 0.43295, b = 2, c = 4.$. The UPOs are divided into two subsets: (a) unstable periodic orbits of type $A^nB$ and $AB^n$ and (b)  symbolic sequences of length less than 8. 

For the values of the parameters considered here, the population of periodic orbits is \textit{complete } i.e., all possible symbolic sequences are realized as distinct trajectories till sequence length 7. Also, the periodic orbits of sequence length 1 i.e., $A$ and $B$ are present as physical trajectories. 
The dataset of UPOs contains the values of the initial conditions for each of the unstable periodic orbits. The time series can be obtained by solving the system of equations for the Rössler attractor. The time period of the orbits varies from $t_{min}=6.225$ to $t_{max}=44.695$. This data will henceforth be referred to as the \textit{Christophe dataset}.
%\section{Time Delay Embeddings}\label{sec:Embeddings}
%In this section, we briefly discuss the Hankel matrix and its applications. 
\subsection{Hankel matrix and time delay embedding}
Consider a dynamical system,
\begin{equation}
\dot{\bx}(t) = {\bf f}(\bx),
\end{equation}
where $\bx(t)\in\mathbb{R}^d$ represents a vector of state variables of the dynamical system at time $t$ and $\mathbf{f}:\mathbb{R}^d \rightarrow \mathbb{R}^d$ is a nonlinear function that represents the evolution of the state variables.
We take measurements of the state vector and define a scalar observable function $g \in\mathbb{R}^1$.
In the simplest case, this observable measures a single component of the state, i.e. $y(t) := g(\bx(t)) = x(t)$.
The ``delay matrix", also known as the ``Hankel matrix", is then defined as

\begin{equation*}
{H}_{p,q} =\tikz[remember picture, baseline=(mat.center)]{\node[inner sep=0](mat){$\begin{bmatrix} 
    y(t_0) & y(t_0 + \tau) & y(t_0 + 2\tau) & \ldots & y(t_0 + (q-1)\tau) \\
    y(t_0 + \tau) & y(t_0 + 2\tau) & y(t_0 + 3\tau) & \ldots & y(t_0 + q\tau)\\
    \vdots & \vdots & \vdots & \ddots & \vdots \\
    y(t_0 + (p-1)\tau) & y(t_0 + p\tau) & y(t_0 + (p+1)\tau) & \ldots & y(t_0 + (q+p-2)\tau)
    \end{bmatrix}$};}
\begin{tikzpicture}[overlay,remember picture,
>=Triangle]
\draw[black,thick,->] node[anchor=south west] (nn1) at (mat.north west)
{$t_{width}=q\tau$} (nn1.east) -- (nn1-|mat.north east) 
node[midway,above,black]{};
\draw[black,thick,->] node[anchor=north west,align=center, inner xsep=0pt] (nn2) at 
(mat.north east)
{$t_{height}=p\tau$} (nn2.south) -- (nn2.south|-mat.south) 
node[midway,above,black,rotate=-90]{};
\end{tikzpicture}.
\end{equation*}
Here, $q$ is called the embedding dimension which denotes the number of columns in the delay vectors and $p$ denotes the number of rows of the Hankel matrix. $\tau$ denotes the time interval between consecutive columns of the delay vector. If $\tau = \Delta t$, then the samples are taken from consecutive time measurements. The total time of the trajectory is denoted by $t_{width} = q\tau$. Total time of the trajectory denoted by the height of the Hankel matrix is $t_{height}=p\tau$. The skew diagonal terms of the Hankel matrix are constant. 

For multi-component observations, we can define a more general form of Hankel matrix~\cite{dylewsky2022principal} 
\begin{equation}
    {H}_{pd,q} = \begin{bmatrix} 
    \bx(t_0) & \bx(t_0 + \tau) & \bx(t_0 + 2\tau) & \ldots & \bx(t_0 + (q-1)\tau) \\
    \bx(t_0 + \tau) & \bx(t_0 + 2\tau) & \bx(t_0 + 3\tau) & \ldots & \bx(t_0 + q\tau)\\
    \vdots & \vdots & \vdots & \ddots & \vdots \\
    \bx(t_0 + (p-1)\tau) & \bx(t_0 + p\tau) & \bx(t_0 + (p+1)\tau) & \ldots & \bx(t_0 + (q+p-2)\tau)
    \end{bmatrix}.
\end{equation}
\begin{figure}[t]
    \centering
    \includegraphics[width=0.99\textwidth]{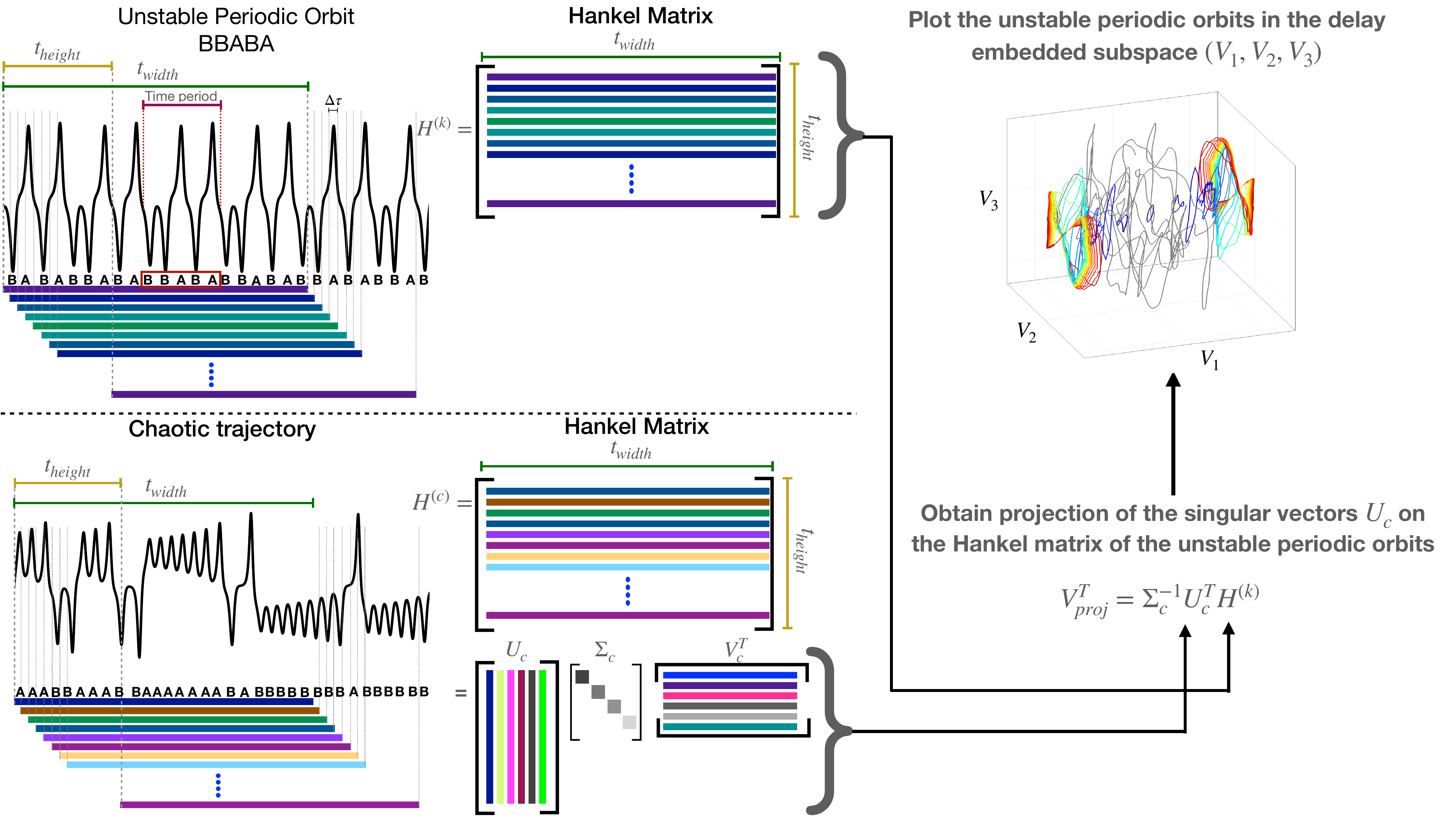}
    \includegraphics[width=0.99\linewidth]{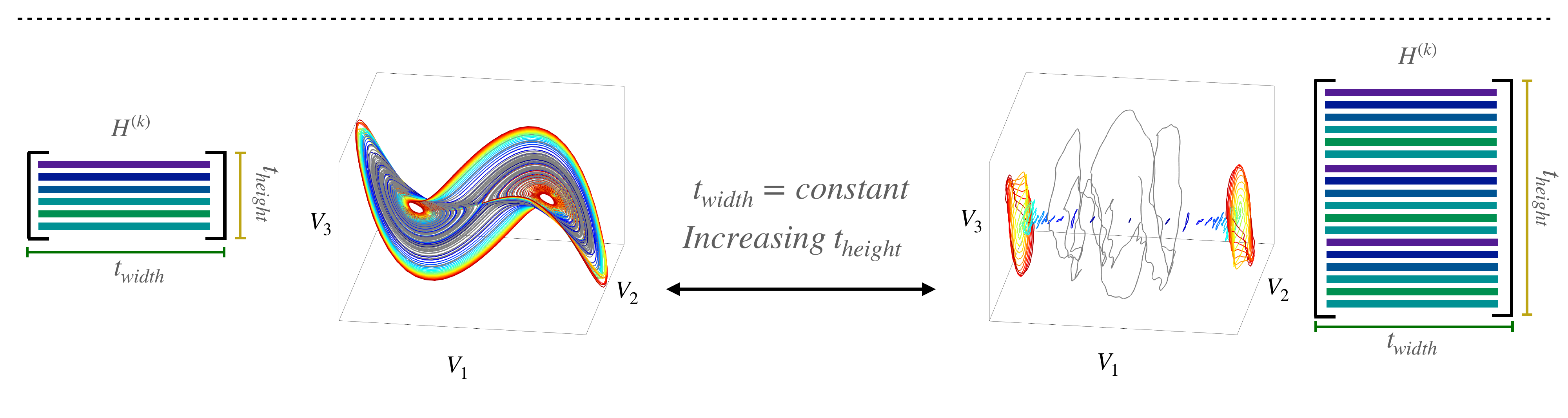}
    \caption{Construction of Hankel matrices for a single unstable periodic orbit and chaotic trajectory. The parameters for the construction of Hankel matrices are $t_{width}, t_{height}, \Delta \tau$ (denoted in the figure above). After performing the singular value decomposition of both the Hankel matrices the projection of the singular vectors is obtained $V_{proj}^T$. The projected vectors are then plotted in the embedded space for all the periodic orbits. The periodic orbits in the embedded space for increasing height of the Hankel matrix is shown in the panel at the bottom. For smaller $t_{height}$ the diffeomorphic form of the Lorenz attractor is obtained. However, as the height is subsequently increased the periodic orbits separate and form clusters.}
    \label{fig:HankelMatrix}
\end{figure}

 In the context of chaotic attractors, Whitney's embedding theorem~\cite{whitney1936differentiable} states that a smooth $m$-dimensional compact manifold $M$ can be diffeomorphically embedded into $\mathbb{R}^{2m+1}$, meaning that $M$ and its image are diffeomorphic. Takens extended Whitney's theorem~\cite{takens1981detecting} to delay-coordinate maps, showing that under generic conditions, a delay-coordinate map from a $m$-dimensional manifold to $\mathbb{R}^{2m+1}$ is also a diffeomorphism, making it accessible for reconstructing system dynamics from experimental time series data. 
 The quality of the reconstruction can be controlled using the time intervals between observations $\tau$ and the embedding dimensions ($p$ and $q$). 

The singular value decomposition (SVD) of the Hankel matrix is given by,
\begin{equation}
    {H}_{p,q} = {U \Sigma V^T}.
\end{equation}
The left singular vectors of the Hankel matrix have been shown to be an approximation of the space-time proper orthogonal decomposition (POD) modes~\cite{frame2023space}. The modes can also be interpreted as principal component trajectories (PCT), which can be used to reconstruct complex trajectories~\cite{dylewsky2022principal}. From time-delay scalar measurements, the principal components obtained from SVD form a temporal basis, with functions similar to a Fourier or wavelet basis. It has been shown that for long time-delay embeddings, the delay coordinate SVD converges to Fourier modes~\cite{vautard1989singular}. 

While the aforementioned methods, construct the Hankel matrix from the time series of the partial measurements of the dynamical system; in this work, we construct the Hankel matrix from the measurements of a single unstable periodic orbit of the attractor. The next section describes more details about the construction of the delay-embedded space using projections.
\section{Long time delay embeddings of the periodic orbits}\label{sec:Theorems}
We construct the Hankel matrix for both the chaotic trajectories and the associated periodic orbits. 
The \texttt{ode45} function from MATLAB is used to advance the Lorenz and Rössler ordinary differential equations for both chaotic trajectory and trajectories of the unstable periodic orbits.
The Hankel matrix associated with the chaotic trajectory can be written as,
\begin{equation}\label{eq:ChoaticHankel}
    {H}^c_{p,q} = \begin{bmatrix} 
    x^c(t_0) & x^c(t_0 + \tau) & \ldots & x^c(t_0 + (q-1)\tau) \\
    x^c(t_0 + \tau) & x^c(t_0 + 2\tau)  & \ldots & x^c(t_0 + q\tau)\\
    \vdots & \vdots & \ddots & \vdots \\
    x^c(t_0 + (p-1)\tau) & x^c(t_0 + p\tau) & \ldots & x^c(t_0 + (q+p-2)\tau)
    \end{bmatrix}_{p \times q}.
\end{equation}
The superscript $x^c$ denotes the time series data obtained from the chaotic trajectory. Similarly, we also compute the Hankel matrices for the individual UPOs. Due to the chaotic nature of the dynamical systems and numerical errors in the computation of solution due to time-stepping, the UPOs deviate from the periodic trajectory. The data is artificially augmented by finding the time period of the UPO and replicating the copies of the time series of a single period until $t_{end}=1000$ which represents the temporal length of the augmented time series. Consider a UPO numbered $k$, the Hankel matrix can be denoted by ${H}^{(k)}_{p,q}$ and can be represented by, 
\begin{equation}\label{eq:UPOHankel}
     {H}^{(k)}_{p,q} = \begin{bmatrix} 
    x^{(k)}(t_0) & x^{(k)}(t_0 + \tau) & \ldots & x^{(k)}(t_0 + (q-1)\tau) \\
    x^{(k)}(t_0 + \tau) & x^{(k)}(t_0 + 2\tau)  & \ldots & x^{(k)}(t_0 + q\tau)\\
    \vdots & \vdots & \ddots & \vdots \\
    x^{(k)}(t_0 + (p-1)\tau) & x^{(k)}(t_0 + p\tau) & \ldots & x^{(k)}(t_0 + (q+p-2)\tau)
    \end{bmatrix}_{p\times q}.
\end{equation}
The subscript denoting the size of the Hankel matrix has been dropped in the subsequent text to maintain conciseness in the notation. We perform SVD of the Hankel matrix obtained from the chaotic trajectory. 
\begin{equation}
    {H}^c = {U}^c {\Sigma}^c {V^c}^T,
\end{equation}
where $U^c, V^c$ represent the left singular singular vectors and right singular singular vectors respectively and $\Sigma^c$ represents a diagonal matrix with the singular values as the diagonal terms arranged in descending value.  It is then possible to use this basis to embed a periodic orbit as
\begin{equation}
    {V}_{proj}^T = {\Sigma^c}^{-1} {U^c}^T {H}^{(k)},
\end{equation}
where ${V}_{proj}^T$ represents the projection of the time series onto the left singular vectors of the chaotic trajectory. Projecting ${\Sigma^c}^{-1} {U^c}^T$ onto the Hankel matrices $H^{(k)}$ separates the UPOs in the embedded space obtained by the basis vectors of $V_{proj}$ given by $\bv_{proj}^1, \bv_{proj}^2, \bv_{proj}^3$. The next section presents a mathematical analysis on the separation of UPOs, the characteristics of the resulting clusters, and the relative positions of individual UPOs in the embedded space.

\subsection{Separation of UPOs into clusters}
We now demonstrate the properties of the Hankel matrices of both the chaotic trajectory and unstable periodic orbits. First, we present a lemma which shows that the range of the Hankel matrix obtained from the UPO, $\mathcal{R}({H}^{(k)}_{p,q})$, is contained within the range of the Hankel of the chaotic trajectory, $\mathcal{R}(H^c_{p,q})$,  as the width of the Hankel matrices is increased for a constant height of the Hankel matrices. Here $\mathcal{R}$ denotes the range of the matrices. 

We then proceed to a theorem which shows the mechanism of the separation of the UPOs based on the projection of the vector $\bm{\rho}$ which is normal to the plane of separation. This shows that the separation of UPOs depends on the time spent by the UPOs in the different symbolic regions.

We finally show the existence of a vector similar to $\bm{\rho}$ in the delay-embedded space for which separation of the UPOs is possible.  
% -------------------- LEMMA 1 ------------------------------% 
\begin{lemma}
    Let ${H}^c_{p,q}$ be the Hankel matrix obtained from the chaotic trajectory as shown in Eq.~(\ref{eq:ChoaticHankel}) and let ${H}^{(k)}_{p,q}$ be the Hankel matrix obtained from a periodic orbit as shown in Eq.~(\ref{eq:UPOHankel}). Then, there exists a natural number $n$ such that $\mathcal{R}({H}^{(k)}_{p,q}) \subset \mathcal{R}(H^c_{p,q})$ for all $q \geq n$.
\end{lemma}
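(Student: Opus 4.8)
The plan is to interpret each column of a Hankel matrix as a \emph{delay vector}, i.e.\ a point in the delay-embedded space $\bR^p$, and to compare the two ranges as linear subspaces of $\bR^p$ with $p$ held fixed. Let $\mathcal{A}\subset\bR^p$ denote the set of all delay vectors $[\,x(t),x(t+\tau),\dots,x(t+(p-1)\tau)\,]^T$ generated by trajectories on the attractor, and set $W:=\mathrm{span}(\mathcal{A})$, a subspace of dimension $m\le p$. The first, purely structural observation is that every column of $H^c_{p,q}$ and every column of $H^{(k)}_{p,q}$ is an element of $\mathcal{A}$, because both the chaotic trajectory and the UPO lie on the attractor (the UPOs form its skeleton). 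Consequently $\mathcal{R}(H^{(k)}_{p,q})\subseteq W$ and $\mathcal{R}(H^c_{p,q})\subseteq W$ for every $q$.

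First I would record the easy half: since every delay vector of the UPO lies in $\mathcal{A}\subseteq W$, we have $\mathcal{R}(H^{(k)}_{p,q})\subseteq W$ for all $q$, with no condition at all. The entire content of the lemma therefore reduces to showing that the chaotic range eventually fills $W$, i.e.\ that there is an $n$ with $\mathcal{R}(H^c_{p,q})=W$ for all $q\ge n$. Monotonicity turns this into a ``once reached, stays reached'' statement: appending a column can only enlarge a column space, so $\mathcal{R}(H^c_{p,q})\subseteq\mathcal{R}(H^c_{p,q+1})$, and since all these spaces are trapped inside the finite-dimensional $W$, the sequence is nondecreasing and must stabilize.

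The crux is to show that the stable value is all of $W$ and to exhibit a concrete $n$; here I would invoke ergodicity/density of the chaotic trajectory. I would choose a basis $\mathbf{a}_1,\dots,\mathbf{a}_m\in\mathcal{A}$ of $W$, realized at attractor times $t_1,\dots,t_m$. Because the chaotic trajectory is dense on the attractor, for any $\varepsilon>0$ it passes within $\varepsilon$ of each $\mathbf{a}_i$, so I can select chaotic delay vectors $\mathbf{c}_1,\dots,\mathbf{c}_m$ (columns of $H^c_{p,q}$ for $q$ large) with $\norm{\mathbf{c}_i-\mathbf{a}_i}<\varepsilon$. Since linear independence is an open condition (the relevant $m\times m$ determinant is continuous and nonzero at the $\mathbf{a}_i$), for $\varepsilon$ small the $\mathbf{c}_i$ remain independent; and as each $\mathbf{c}_i$ itself lies in $W$, these $m$ independent vectors span all of $W$. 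Taking $n$ to be the smallest $q$ whose sampling window $[t_0,t_0+(q-1)\tau]$ contains the required times gives $\mathcal{R}(H^c_{p,q})=W$ for every $q\ge n$, whence $\mathcal{R}(H^{(k)}_{p,q})\subseteq W=\mathcal{R}(H^c_{p,q})$. The containment is proper precisely when a single orbit fails to span the full attractor subspace, which is the generic case; in the common situation where the attractor spans $\bR^p$ one simply has $W=\bR^p$ and $H^c$ attaining full row rank.

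The main obstacle I anticipate is the density step: one must assume (or import from ergodic theory) that the chaotic trajectory visits every neighborhood on the attractor, and that the recorded data is long enough that the required sampling times fall inside the window $[t_0,t_0+(q-1)\tau]$. The accompanying topological point---that closeness of delay vectors preserves linear independence, so that near-basis chaotic vectors which \emph{already lie in} $W$ must span $W$ exactly rather than merely approximately---is what converts the approximate density statement into the exact subspace equality the lemma demands.
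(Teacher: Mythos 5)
Your proposal is correct and follows essentially the same route as the paper: both arguments rest on density/ergodicity of the chaotic trajectory on the attractor, continuity of the delay map, and stabilization of the nested chaotic column spaces inside a finite-dimensional ambient space. The only cosmetic difference is that you package the argument through the common subspace $W$ spanned by all attractor delay vectors and show the chaotic range saturates it (via openness of linear independence), whereas the paper approximates each UPO column directly by chaotic columns and passes to the closure of the stabilized union of ranges.
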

\begin{proof} Consider a column of $H^{(k)}_{p,q}$ denoted by $\begin{bmatrix}
    x^{(k)}(t_0)\\x^{(k)}(t_1)\\ \vdots \\ x^{(k)}(t_{p-1})
\end{bmatrix}$. \\
By continuity of the flow, $\forall \epsilon > 0$ there exists $\delta >0$ such that $\lVert x(t_0)-\tilde{x}(t_0) \rVert < \delta$ then,
\begin{equation*}
    \norm{\begin{bmatrix}
        x(t_0)\\x(t_1)\\ \vdots \\ x(t_{p-1})
    \end{bmatrix} - \begin{bmatrix}
        \tilde{x}(t_0)\\ \tilde{x}(t_1)\\ \vdots \\ \tilde{x}(t_{p-1})
    \end{bmatrix}} < \epsilon
\end{equation*}
By ergodicity, $\exists$ $t_{\epsilon}$ such that $\lVert x^{(k)}(t_0)-{x}^c(t_\epsilon) \rVert < \delta$ 
\begin{equation*}
    \norm{\begin{bmatrix}
        x^{(k)}(t_0)\\x^{(k)}(t_1)\\ \vdots \\ x^{(k)}(t_{p-1})
    \end{bmatrix} - \begin{bmatrix}
        {x}^c(t_\epsilon)\\ {x}^c(t_\epsilon+ \tau)\\ \vdots \\ {x}^c(t_{\epsilon}+(p-1)\tau)
    \end{bmatrix}} < \epsilon
\end{equation*}
$\exists \tilde{q}$ such that $\overline{x}^c = \begin{bmatrix}
    {x}^c(t)\\ {x}^c(t+\tau)\\ \vdots \\ {x}^c(t+(p-1)\tau)
\end{bmatrix}$ is a column of the $H^c_{p,\tilde{q}}$. 
Hence, $\overline{x}^c \in \bigcup\limits_{\tilde{q}}\mathcal{R}(H^c_{p,\tilde{q}})$.\\
\begin{equation}\label{eq:lemma1.1}
    \therefore \mathcal{R}(H^{(k)}_{p,\tilde{q}}) \subset \overline{\bigcup\limits_{\tilde{q}} \mathcal{R}(H^c_{p,\tilde{q}})}.
\end{equation}
Here, the overline denotes the closure of union of range of the Hankel matrix. Now, $\exists n$ such that $\overline{\bigcup\limits_{\tilde{q}} \mathcal{R}(H^c_{p,\tilde{q}})} = \mathcal{R}(H^c_{p,n})$ i.e.,
\begin{equation}\label{eq:lemma1.2}
    \mathcal{R}(H^c_{p,1}) \subset \mathcal{R}(H^c_{p,2}) \subset \cdots \subset \mathcal{R}(H^c_{p,n-1}) \subset \mathcal{R}(H^c_{p,n}) = \mathcal{R}(H^c_{p,n+1}) =\mathcal{R}(H^c_{p,n+2})
\end{equation}
Let $d_k$ be the dimension of $\mathcal{R}(H^c_{p,k}) $ and $d_k$ be bounded above.  
\begin{equation*}
    d_1 \leq d_2 \leq d_3 \leq \cdots \leq d_k.
\end{equation*}
By the monotone convergence theorem, $d_k$ converges to $d$.\\
$\forall \epsilon > 0$ there exists a $n_q$ such that $\left| d_k - d \right| < \epsilon$ for all $k\geq n_q$.

Using Eq.~(\ref{eq:lemma1.1}) and Eq.~(\ref{eq:lemma1.2}), for all $q$,
\begin{equation*}
    \mathcal{R}(H^{(k)}_{p,q}) \subset \overline{\bigcup\limits_{\tilde{q}} \mathcal{R}(H^c_{p,\tilde{q}})} =\mathcal{R}(H^c_{p,n}) = \mathcal{R}(H^c_{n+1}).
\end{equation*}
Take $q\geq n$, 
\begin{align*}
    \mathcal{R}(H^{(k)}_{p,q}) \subset \mathcal{R}(H^c_{p,n}) = \mathcal{R}(H^c_{p,\tilde{n}})\\
    \therefore \mathcal{R}(H^{(k)}_{p,\tilde{n}}) \subset \mathcal{R}(H^c_{p,\tilde{n}}) \qquad \forall \tilde{n} \geq n.
\end{align*}

\end{proof}

% -------------------- THEOREM 1 ------------------------------% 
\begin{theorem}\label{th:SeparationOfUPOs}
Let $\bx:\mathbb{R} \to \mathbb{R}^d$ be a periodic orbit with period $\tilde{t}$, that is, $\bx(t) = \bx(t+\tilde{t})$ for every $t \in \mathbb{R}$ and let $H_{pd,q}$ be the $pd \times q$ Hankel matrix formed by sampling this trajectory at uniform intervals $\Delta t$ starting from $t_0$ with $\Delta \theta := \Delta t / \tilde{t}$ irrational.
Given $\bm{\sigma} \in \mathbb{R}^d$ defining regions $A = \{ \bx \in \mathbb{R}^d : \bm{\sigma}^T \bx \geq 0 \}$ and $B = \{ \bx \in \mathbb{R}^d : \bm{\sigma}^T \bx < 0 \}$, let $\mathcal{I}_A = \{t \in [0,\tilde{t}) : \bx(t) \in A \}$, $\mathcal{I}_B = \{t \in [0,\tilde{t}) : \bx(t) \in B \}$, and
$$
\alpha = \frac{1}{|\mathcal{I}_A|} \int_{\mathcal{I}_A} \bm{\sigma}^T \bx(t) \ \text{d}t, \qquad
\beta = \frac{1}{|\mathcal{I}_B|} \int_{\mathcal{I}_B} \bm{\sigma}^T \bx(t) \ \text{d}t.
$$
Then we have
$$
\lim_{p \to \infty} \frac{1}{p} \left(\mathds{1}_p \otimes \bm{\sigma} \right)^T H_{pd,q}
= \left( \alpha \frac{|\mathcal{I}_A|}{\tilde{t}} + \beta \frac{|\mathcal{I}_B|}{\tilde{t}} \right) \mathds{1}_q^T.
$$
\end{theorem}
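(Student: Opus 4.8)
The plan is to reduce the matrix limit to a one-dimensional statement about time averages of the scalar observable $\phi(t) := \bm{\sigma}^T \bx(t)$, and then invoke Weyl equidistribution. First I would expand the left-hand side column by column. Since $\mathds{1}_p \otimes \bm{\sigma}$ stacks $p$ copies of $\bm{\sigma}$ and the $(i,j)$ block of $H_{pd,q}$ is $\bx(t_0 + (i+j-2)\Delta t)$, the $j$-th entry of the row vector $(\mathds{1}_p \otimes \bm{\sigma})^T H_{pd,q}$ is exactly $\sum_{i=1}^{p} \phi\bigl(t_0 + (j-1)\Delta t + (i-1)\Delta t\bigr)$. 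Dividing by $p$ therefore produces, in each column, the discrete time average of $\phi$ over $p$ samples spaced $\Delta t$ apart, beginning at the phase $t_0 + (j-1)\Delta t$.

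The core of the argument is that this discrete average converges, as $p \to \infty$, to the continuous average of $\phi$ over one period, and crucially that the limit does not depend on the starting phase (hence not on $j$). I would normalize by writing each sample time modulo $\tilde{t}$ as a point on the circle with phase increment $\Delta\theta = \Delta t / \tilde{t}$. Because $\Delta\theta$ is irrational by hypothesis, Weyl's equidistribution theorem guarantees that the iterates $\{\, t_0/\tilde{t} + (j-1)\Delta\theta + (i-1)\Delta\theta \bmod 1 \,\}_{i=1}^{p}$ equidistribute on $[0,1)$ for every fixed starting phase. Since $\bx$ solves a smooth ODE, $\phi$ is continuous and hence Riemann integrable, so equidistribution yields $\frac{1}{p}\sum_{i=1}^{p}\phi \to \frac{1}{\tilde{t}}\int_0^{\tilde{t}}\phi(t)\,dt$, the same value for every column index $j$. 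This already shows the limit is a constant multiple of $\mathds{1}_q^T$.

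It then remains to identify the constant. I would split the period integral over the two symbolic regions, $\int_0^{\tilde{t}}\phi(t)\,dt = \int_{\mathcal{I}_A}\bm{\sigma}^T\bx(t)\,dt + \int_{\mathcal{I}_B}\bm{\sigma}^T\bx(t)\,dt$, and substitute the definitions of $\alpha$ and $\beta$, which give $\int_{\mathcal{I}_A}\bm{\sigma}^T\bx = \alpha|\mathcal{I}_A|$ and $\int_{\mathcal{I}_B}\bm{\sigma}^T\bx = \beta|\mathcal{I}_B|$. Dividing by $\tilde{t}$ reproduces the claimed coefficient $\alpha|\mathcal{I}_A|/\tilde{t} + \beta|\mathcal{I}_B|/\tilde{t}$. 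The main obstacle, and the only genuinely nontrivial step, is the passage from the discrete sample average to the period integral: this is exactly where the irrationality of $\Delta\theta$ is indispensable (a rational ratio would sample only finitely many phases and generally fail to reproduce the integral), and some care is needed to confirm that continuity of $\phi$ suffices for Weyl's theorem and that the convergence holds across the fixed, finite set of columns $j = 1,\dots,q$ so as to conclude convergence of the whole row vector to $\left(\alpha|\mathcal{I}_A|/\tilde{t} + \beta|\mathcal{I}_B|/\tilde{t}\right)\mathds{1}_q^T$.
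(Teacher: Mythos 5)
Your proof follows essentially the same route as the paper's: both reduce each column of $\frac{1}{p}\left(\mathds{1}_p\otimes\bm{\sigma}\right)^T H_{pd,q}$ to a discrete time average of the scalar observable $\bm{\sigma}^T\bx(t)$ sampled along an irrational rotation of the circle parametrizing the orbit, conclude that this average converges to the period average independently of the starting phase, and then split the period integral over $\mathcal{I}_A$ and $\mathcal{I}_B$ to identify the constant $\alpha|\mathcal{I}_A|/\tilde{t}+\beta|\mathcal{I}_B|/\tilde{t}$. The one substantive difference is the classical result invoked for the middle step: the paper maps the orbit to the unit circle and appeals to Birkhoff's ergodic theorem for the irrational rotation, whereas you appeal to Weyl equidistribution. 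Your choice is in fact the slightly sharper one. Birkhoff's theorem only guarantees convergence for almost every starting point, while the statement requires convergence at the $q$ specific phases $t_0+(j-1)\Delta t$; Weyl equidistribution (equivalently, unique ergodicity of the irrational rotation applied to the continuous observable $\bm{\sigma}^T\bx$) gives convergence for \emph{every} starting phase, which both closes that gap and makes transparent why the limit is the same constant in every column.
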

\begin{proof}
Consider a Hankel matrix of $q$ columns ($width=q$) and $p$ rows ($height=p$),
\begin{equation*}
    {H}^{(k)}_{pd,q} = \begin{bmatrix}
        \mathbf{x}^{(k)}(t_0) & \mathbf{x}^{(k)}(t_0+\tau) & \mathbf{x}^{(k)}(t_0 + 2\tau) & \ldots & \mathbf{x}^{(k)}(t_0 + (q-1)\tau) \\ 
        \mathbf{x}^{(k)}(t_0+\tau) & \mathbf{x}^{(k)}(t_0+2\tau) & \mathbf{x}^{(k)}(t_0 + 3\tau) & \ldots & \mathbf{x}^{(k)}(t_0 + q\tau) \\
        \vdots & \vdots &  \vdots &\ddots & \vdots \\
        \mathbf{x}^{(k)}(t_0+(p-1)\tau) & \mathbf{x}^{(k)}(t_0+p\tau) & \mathbf{x}^{(k)}(t_0 + (p+1)\tau) & \ldots & \mathbf{x}^{(k)}(t_0 + (p+q-2)\tau)
    \end{bmatrix}.
\end{equation*}
Here, ${H}^{(k)}_{pd,q}$ represents the Hankel matrix of the $k$-th UPO and size of the Hankel matrix is $\text{rows} \times \text{columns} = pd\times q$ denoted by the subscripts. Each element of the Hankel matrix is the state vector of the UPO. In case of the Lorenz attractor the state vector will be $\mathbf{x} = [x \quad y \quad z]^T$. 

Consider a vector $\bm{\sigma}$ which is the normal to the plane of separation of symbolic dynamics. The plane of separation divides the phase space into two regions $A = \{ \bx \in \mathbb{R}^d : \bm{\sigma}^T \bx \geq 0 \}$ and $B = \{ \bx \in \mathbb{R}^d : \bm{\sigma}^T \bx < 0 \}$. 
Let, 
\begin{equation*}
    \vec{\bm{\sigma}}_p = \frac{1}{p} \mathds{1} \otimes \bm{\sigma} = \frac{1}{p}\begin{bmatrix}
        \bm{\sigma}\\ \bm{\sigma} \\ \bm{\sigma} \\ \vdots \\ \bm{\sigma}
    \end{bmatrix}_{pd \times 1}.
\end{equation*}
The vector $\vec{\bm{\sigma}}_p$ represents the repetition of the $\bm{\sigma}$ vector $p$ times. Now consider, 
\begin{align*}
    \vec{\bm{\sigma}}_p ^T  {H}^{(k)}_{pd,q} &= \frac{1}{p} \begin{bmatrix}
        \bm{\sigma}^T\quad \bm{\sigma}^T \quad  \ldots \quad \bm{\sigma}^T
    \end{bmatrix}_{1 \times pd} \begin{bmatrix}
        \mathbf{x}^{(k)}(t_0) & \mathbf{x}^{(k)}(t_0+\tau)  & \ldots & \mathbf{x}^{(k)}(t_0 + (q-1)\tau) \\ 
        \mathbf{x}^{(k)}(t_0+\tau) & \mathbf{x}^{(k)}(t_0+2\tau)  & \ldots & \mathbf{x}^{(k)}(t_0 + q\tau) \\
        \vdots & \vdots & \ddots & \vdots \\
        \mathbf{x}^{(k)}(t_0+(p-1)\tau) & \mathbf{x}^{(k)}(t_0+p\tau) & \ldots & \mathbf{x}^{(k)}(t_0 + (p+q-2)\tau)
    \end{bmatrix}_{pd \times q}\\
    &= \frac{1}{p} \begin{bmatrix}
        \sum_{m=0}^{p-1} \bm{\sigma}^T \mathbf{x}^{(k)}(t_m) &  \sum_{m=1}^{p} \bm{\sigma}^T\mathbf{x}^{(k)}(t_m) & \ldots & \sum_{m=q-1}^{p+q-2} \bm{\sigma}^T \mathbf{x}^{(k)}(t_m)
    \end{bmatrix}_{1\times q}\\
\end{align*}
\begin{figure}[t!]
    \centering
    \includegraphics[width=0.70\textwidth]{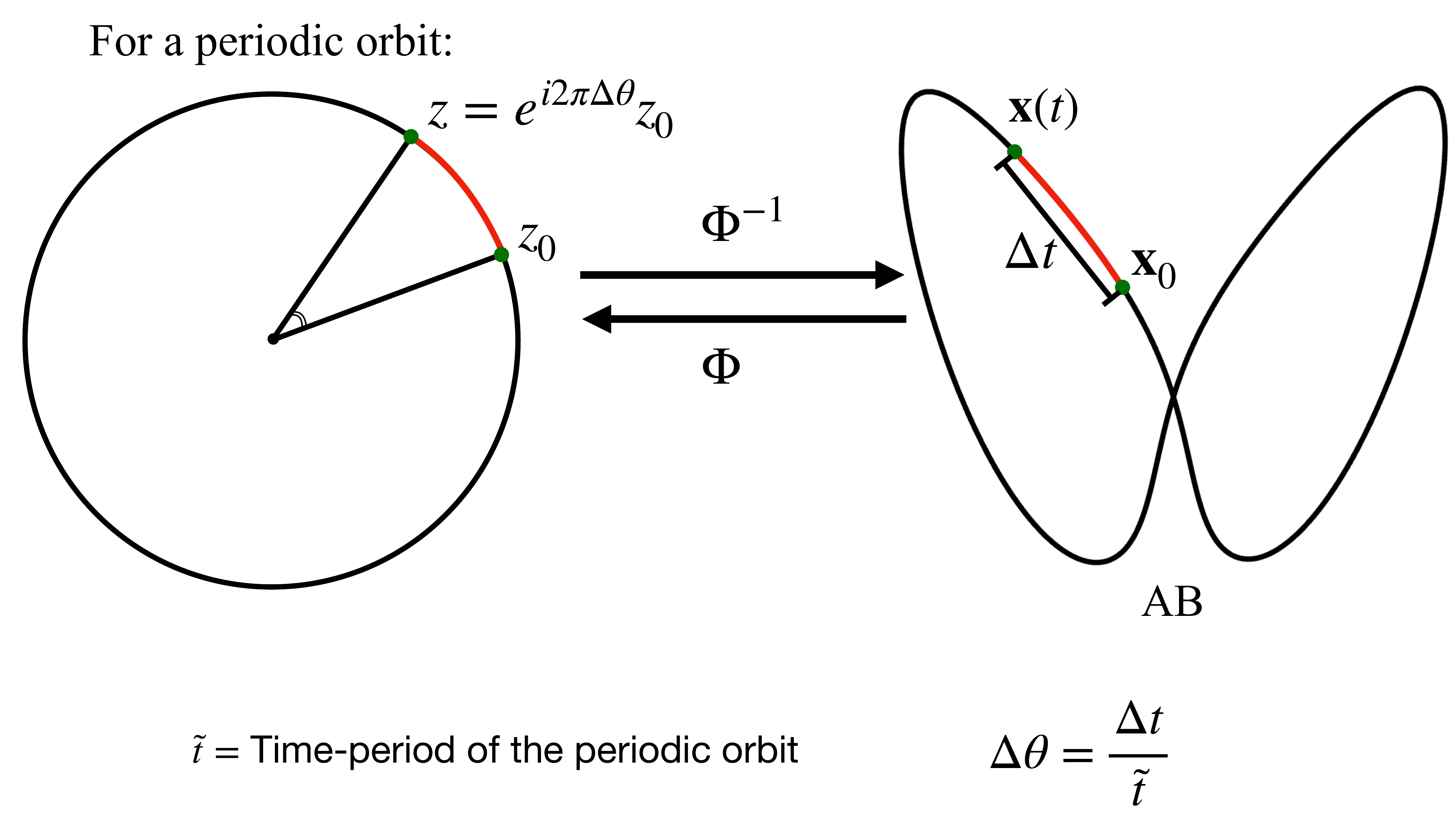}
    \caption{Schematic of the transformation of a UPO onto a unit circle and every time-step of the UPO represents the rotation along the circle.}
    \label{fig:CircleUPO}
\end{figure}
% We now consider a transformation $\Phi$ which transforms the unstable periodic orbit to a representation along a circle: $\Phi^{(k)}(z(\Delta\theta)) = \mathbf{x}^{(k)}(t)$. 
As shown in Fig.~(\ref{fig:CircleUPO}), consider the diffeomorphism of the periodic orbit to a unit circle. The function $\Phi^{(k)}(\bm{x}^{(k)}) = z$ is used to obtain the transformation from the periodic orbit to the point along a unit circle. This function $\Phi^{(k)}$ is invertible i.e., $\left(\Phi^{(k)}\right)^{-1}(z) = \bm{x}^{(k)} $. We consider a transformation denoted as $T_{\Delta\theta}$ such that $T_{\Delta\theta}(z) = e^{i 2\pi\Delta\theta}z$. This transformation corresponds to a rotation on a circle, where the angle evolves by $2\pi\Delta \theta$ radians for each step. If $\Delta\theta$ is rational, then, for any $z_0$, the sequence of iterates will visit only finite points, and the process is not ergodic. On the other hand, if $\Delta \theta$ is irrational then $T_{\Delta\theta}^n$ never repeats and we can show that the process is ergodic~\cite{cornfeld2012ergodic}.

Thus, using the transformation $\left(\Phi^{(k)}\right)(z) = \bm{x}^{(k)}$, we can say that for the periodic orbit to be ergodic, we require $\frac{\Delta t}{\tilde{t}^{(k)}}$ should be irrational, where $\tilde{t}^{(k)}$ represents the time period of the $k$-th periodic orbit. Given this condition is satisfied, the periodic orbit is ergodic and therefore, we can use the Birkoff's ergodic theorem~\cite{cornfeld2012ergodic} which states that, given an ergodic transformation on a space, the discrete time average of a function under this transformation converges to the space average of that function under the measure:
\begin{equation}\label{eq:SigmaTH}
    \lim_{p \to \infty}\frac{1}{p}\begin{bmatrix}
        \sum_{m=0}^{p-1} \bm{\sigma}^T \mathbf{x}^{(k)}(t_m) &  \sum_{m=1}^{p} \bm{\sigma}^T \mathbf{x}^{(k)}(t_m) & \ldots & \sum_{m=q-1}^{p+q-2} \bm{\sigma}^T \mathbf{x}^{(k)}(t_m)
    \end{bmatrix}
\end{equation}
Now, considering the definition of symbolic dynamics of the given attractor, where the trajectory is divided into regions defined by symbols, 
\begin{align*}
    A = \{ \bx \in \mathbb{R}^d : \bm{\sigma}^T \bx \geq 0 \}, \\
    B = \{ \bx \in \mathbb{R}^d : \bm{\sigma}^T \bx < 0 \}.
\end{align*}
Let the time spent by the unstable periodic orbit in each of the regions be,
\begin{align*}
    \mathcal{I}_A = \{t \in [0,\tilde{t}) : \bx(t) \in A \},\\
    \mathcal{I}_B = \{t \in [0,\tilde{t}) : \bx(t) \in B \}.
\end{align*}
For a given symbol, the average distance of the trajectory from the plane of separation is given by,  
\begin{align}\label{eq:IaIb}
    \alpha = \frac{1}{|\mathcal{I}_A|} \int_{\mathcal{I}_A} \bm{\sigma}^T \bx(t) \ \text{d}t, \\
\beta = \frac{1}{|\mathcal{I}_B|} \int_{\mathcal{I}_B} \bm{\sigma}^T \bx(t) \ \text{d}t.
\end{align}
For a given UPO, the value of $\bm{\sigma}^T \bx$ can be simplified to,
\begin{equation}
   \frac{1}{\tilde{t}}\int_0^{\tilde{t}} \bm{\sigma}^T \bx(t) dt = \left( \alpha \frac{|\mathcal{I}_A|}{\tilde{t}} + \beta \frac{|\mathcal{I}_B|}{\tilde{t}} \right).
\end{equation}
Thus, Eq.~(\ref{eq:SigmaTH}) can be rewritten as,
\begin{equation*}
    \lim_{p \to \infty}\frac{1}{p}\begin{bmatrix}
        \sum_{m=0}^{p-1} \bm{\sigma}^T \mathbf{x}^{(k)}(t_m) &  \sum_{m=1}^{p} \bm{\sigma}^T \mathbf{x}^{(k)}(t_m) & \ldots & \sum_{m=q-1}^{p+q-2} \bm{\sigma}^T \mathbf{x}^{(k)}(t_m)
    \end{bmatrix} = \left( \alpha \frac{|\mathcal{I}_A|}{\tilde{t}} + \beta \frac{|\mathcal{I}_B|}{\tilde{t}} \right) \mathds{1}_q^T.
\end{equation*}
\end{proof}
In case of the Lorenz attractor, the plane of separation is given by, $x=0$ and $\bm{\sigma} = [1 \quad 0 \quad 0]$. With this choice, the values of $\alpha$ and $\beta$ are nearly independent of the choice of UPO and satisfy $\beta \approx -\alpha$ by symmetry. This has been computed numerically and compared for a few UPOs and shown in Table~(\ref{tab:AlphavsBeta}).
\begin{table}
    \centering
    \begin{tabular}{|c|c|c|c|}
         \hline
         UPO     & $\alpha$ & $\beta$ \\
         \hline
         AB      & 6.0399   & -5.9294 \\
         AAB     & 5.9420   & -6.1885 \\
         ABBA    & 6.1807   & -6.2399 \\ 
         ABABA   & 5.9550   & -6.1489 \\ 
         AABABB  & 6.0217   & -6.1235 \\ 
         AABBBAA & 6.1883   & -6.6153 \\
         \hline
    \end{tabular}
    \caption{Numerical analyses of the values of $\alpha$ and $\beta$ computed from data for different UPOs. Column 1 denotes the symbolic name of a UPO. Columns 2 and 3 denote the values of $\alpha =\frac{1}{|\mathcal{I}_A|} \int_{\mathcal{I}_A} \bm{\sigma}^T \bx(t) \ \text{d}t $ and $\beta =\frac{1}{|\mathcal{I}_B|} \int_{\mathcal{I}_B} \bm{\sigma}^T \bx(t) \ \text{d}t$ respectively. We see that the the value of $\beta$ is negative and approximately equal to the value of $\alpha$: $\beta \approx \alpha$.}
    \label{tab:AlphavsBeta}
\end{table}

\begin{table}
    \centering
    \begin{tabular}{|c|c|c|c|}
         \hline
         UPO & $\frac{B}{A+B}$ & $\frac{\text{Time spent in B lobe}}{\text{Total time period}}$  & \% Error\\
         \hline
         AB      & 0.5      & 0.5    &  0\\
         ABB     & 0.6667   & 0.6688 & 0.3247\\
         ABBB    & 0.75     & 0.7438 & 0.8264\\ 
         ABBBB   & 0.80     & 0.7869 & 1.6421\\ 
         ABBBBB  & 0.833333 & 0.8167 & 1.9909\\
         ABBBBBB & 0.8571   & 0.8374 & 2.3017\\
         AABABA  & 0.3363   & 0.3333 & 0.89 \\ 
         AABABB  & 0.5049   & 0.5000 & 0.9918 \\
         BBABAB  & 0.6637   & 0.6667	& 0.4462\\
         \hline
    \end{tabular}
    \caption{Analysis of the relation between the symbolic dynamics of the Lorenz attractor and the time spent by the UPOs in the $B-$lobes. The first column denotes the symbolic name of the UPO. The second column shows the ratio of number of $B$ symbols in the UPO with the total sequence length of the UPO. In the third column, we compute the number of time steps spent by the UPO in the $B-$lobe vs the total time steps in 1 time period of the unstable periodic orbit. The error between columns 2 and 3 is calculated in column 4.}
    \label{tab:UPORatioAtoBData}
\end{table}
For a UPO $\bx^{(k)}(t)$ taking $M$ trips around lobe $A$ and $N$ trips around lobe $B$, the fractions of time spent on each lobe are approximately $|\mathcal{I}_A|/\tilde{t} \approx M/(M+N)$, and $|\mathcal{I}_B|/\tilde{t} \approx N/(M+N)$. This is verified numerically and the results are presented in Table~(\ref{tab:UPORatioAtoBData}). The table shows data for a few unstable periodic orbits which relates the symbolic dynamics of the Lorenz attractor to the time spent by the UPO in each lobe. The second column in the table denotes the ratio of the $B$-symbols in a UPO to the total number of symbols in the UPO. The third column denotes the ratio of the time spent by the UPO in $B$-lobe to the total time period of the UPO. We observe that the two values match each other and the difference in the values for higher symbolic length is less than 5\%. From this data, we observe that the time spent by the periodic orbit for one loop in either $A$ or $B$-lobe is approximately equal. The speed of the evolution of the loops depends on the distance from the equilibrium point. Hence for smaller loops i.e., closer to the equilibrium point, the periodic orbit loop evolves slowly whereas for larger loops the evolution is faster as the distance is larger. Thus, for a UPO whose symbolic name is $ABBAA$ with $\tilde{t}$ as the time-period, the periodic orbit spends 3/5-ths of $\tilde{t}$ in the $A$-lobe and 2/5-ths of $\tilde{t}$ in the $B$-lobe. 

Defining the ratio $\rho_k = M/N$ and applying the theorem above with these approximations gives
$$
\lim_{p\to\infty} \frac{1}{p}\left(\mathds{1}_p^T \otimes \begin{bmatrix}
    1 & 0 & 0
\end{bmatrix} \right) H_{p,q}^{(k)}
\approx
\alpha \left( \frac{\rho_k - 1}{\rho_k + 1} \right) \mathds{1}_q^T.
$$
This means that for sufficiently large $p$, the projection of the Hankel matrix along a certain direction takes a constant value measuring the ratio of $A$-loops to $B$-loops around the UPO.

The Lorenz attractor is a special case, where the number of the symbols in the symbolic name of the UPO corresponds to its distance from the plane of separation which is $x=0$. However, in case of the Rössler attractor the ratio $\rho^{(k)}$ will denote the ratio of time spent by the UPO in $y<0$ to the amount of time spent by the UPO in $y>0$.

% -------------------- LEMMA 2 ------------------------------% 
\begin{theorem}
    Let $H = U \Sigma V^T$ and $\tilde{H} = \tilde{U}\tilde{\Sigma}\tilde{V}^T$ be the economy SVDs of two real matrices with $m$ rows and let $\bm{\rho} \in \mathbb{R}^m$.
    If $\mathcal{R}(\tilde{H}) \subset \mathcal{R}(H)$, then there exists a vector $\hat{\bm{\rho}} \in \mathbb{R}^{\text{rank}(H)}$ satisfying
    $
    \bm{\rho}^T \tilde{H} 
    = \hat{\bm{\rho}}^T \Sigma^{-1} U^T \tilde{H}.
    $
\end{theorem}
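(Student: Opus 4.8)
The plan is to exploit the fact that for the economy SVD $H = U\Sigma V^T$, the columns of $U$ form an orthonormal basis for $\mathcal{R}(H)$, so that $P := UU^T$ is precisely the orthogonal projector onto $\mathcal{R}(H)$. The whole argument hinges on reducing the claimed identity to the action of this projector on $\tilde{H}$, and the range-containment hypothesis is exactly what makes that action trivial.

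First I would record the projector identity. Since $U \in \mathbb{R}^{m \times \text{rank}(H)}$ has orthonormal columns with $\mathcal{R}(U) = \mathcal{R}(H)$, the matrix $P = UU^T$ satisfies $P\bw = \bw$ for every $\bw \in \mathcal{R}(H)$. Next I would invoke the hypothesis $\mathcal{R}(\tilde{H}) \subset \mathcal{R}(H)$: every column of $\tilde{H}$ lies in $\mathcal{R}(H)$, so applying $P$ columnwise gives $UU^T \tilde{H} = \tilde{H}$. This is the only place the containment enters, and it is the conceptual crux; it connects directly to the preceding lemma, which supplies precisely this range containment for the Hankel matrices of interest.

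With this in hand the proof is a direct substitution. I would define the candidate vector explicitly as $\hat{\bm{\rho}} := \Sigma U^T \bm{\rho}$, which lies in $\mathbb{R}^{\text{rank}(H)}$ because $U^T \bm{\rho} \in \mathbb{R}^{\text{rank}(H)}$ and $\Sigma$ is the invertible $\text{rank}(H) \times \text{rank}(H)$ diagonal block of the economy SVD. Then I would compute, using that $\Sigma$ is symmetric so $\hat{\bm{\rho}}^T = \bm{\rho}^T U \Sigma$, and that $\Sigma \Sigma^{-1} = I$,
$$
\hat{\bm{\rho}}^T \Sigma^{-1} U^T \tilde{H} = \bm{\rho}^T U \Sigma \Sigma^{-1} U^T \tilde{H} = \bm{\rho}^T U U^T \tilde{H}.
$$
Substituting the projector identity $UU^T\tilde{H} = \tilde{H}$ collapses the right-hand side to $\bm{\rho}^T \tilde{H}$, which is exactly the desired equality.

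Because every step is an equality rather than an estimate, there is no genuine analytic difficulty here. The only things to get right are the bookkeeping of dimensions—that $\hat{\bm{\rho}}$ really lands in $\mathbb{R}^{\text{rank}(H)}$ and that $\Sigma^{-1}$ is well defined on the economy factorization—and the clean justification that $\mathcal{R}(\tilde{H}) \subset \mathcal{R}(H)$ forces $UU^T$ to act as the identity on $\tilde{H}$. I expect that last point to be the only step worth spelling out in detail; everything else is a one-line verification once $\hat{\bm{\rho}}$ is chosen.
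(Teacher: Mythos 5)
Your proof is correct, and it takes a genuinely more direct route than the paper's. You define $\hat{\bm{\rho}} := \Sigma U^T \bm{\rho}$ explicitly and close the argument with the single observation that $UU^T$ is the orthogonal projector onto $\mathcal{R}(U) = \mathcal{R}(H)$, so the hypothesis $\mathcal{R}(\tilde{H}) \subset \mathcal{R}(H)$ gives $UU^T\tilde{H} = \tilde{H}$ and the identity collapses. The paper instead first proves that $\tilde{U}^T U$ is surjective (if $U^T\tilde{U}\bm{q}=0$ then $\tilde{U}\bm{q} \in \mathcal{R}(U) \cap \mathcal{R}(U)^{\perp} = \{0\}$, forcing $\bm{q}=0$), uses that surjectivity to obtain a preimage $\bm{w}$ with $\tilde{U}^T U \bm{w} = \tilde{U}^T\bm{\rho}$, sets $\hat{\bm{\rho}} = \Sigma\bm{w}$, and then inserts the projector $\tilde{U}\tilde{U}^T$ (which fixes $\tilde{H}$) to pass from $\bm{\rho}^T\tilde{U}$ to $\bm{\rho}^T\tilde{H}$. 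The two constructions are compatible: $\bm{w} = U^T\bm{\rho}$ is a valid preimage precisely because $(I - UU^T)\bm{\rho} \perp \mathcal{R}(\tilde{U})$, which recovers your $\hat{\bm{\rho}}$ exactly; but your version is shorter, fully constructive, and avoids the paper's slightly loose intermediate claims (e.g., the assertion that $\tilde{U}\tilde{U}^T = I$, which is only true as an identity on $\mathcal{R}(\tilde{H})$, not as a matrix equation). One point worth keeping explicit in your write-up, as you partly anticipate: it is the economy SVD's retention of only the nonzero singular values that guarantees both the invertibility of $\Sigma$ and the equality $\mathcal{R}(U) = \mathcal{R}(H)$ on which your projector identity rests.
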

\begin{proof}
    We first show that $\tilde{U}^T U$ is surjective (onto). Let there exist $\bm{q}$ such that $\bm{q} \perp \mathcal{R}(\tilde{U}^T U)$. 
    \begin{equation*}
        U^T \tilde{U} \bm{q} = 0  \implies \tilde{U}\bm{q} \in \mathcal{N}(U^T) =\mathcal{R}(U)^{\perp},
    \end{equation*}
    where $\mathcal{N}(U^T)$ denotes the null space of $U^T$. \\
    Hence, 
    \begin{equation*}
        \tilde{U}\bm{q} \in \mathcal{R}(U) \cap \mathcal{R}(U)^{\perp} = \{0\}\\
        \implies \tilde{U} \bm{q} = 0 
    \end{equation*}
    We, thus show that $\bm{q}=\{ 0 \}$ for arbitrary $\bm{q}$ in $\mathcal{R}(\tilde{U}^T U)^{\perp}$ then, $\mathcal{R}(\tilde{U}^T U)$ becomes the whole space $W$ as $W = \mathcal{R}(\tilde{U}^T U) \oplus \mathcal{R}(\tilde{U}^T U)^{\perp}$.  
    Hence, $\tilde{U}^T U$ is surjective.
    
    Now, $\exists$ $\bm{w}$ such that $\tilde{U}^T \bm{\rho} = \tilde{U}^T U \bm{w}$. Let $\hat{\bm{\rho}} = \Sigma \bm{w}$, then,
    \begin{align*}
        \tilde{U}^T U \Sigma^{-1} \hat{\bm{\rho}} &= \tilde{U}^T U \Sigma^{-1} \Sigma \bm{w} \\
                                             &= \tilde{U}^T U \bm{w} \\
                                             &= \tilde{U} \bm{\rho} 
    \end{align*}
    \begin{equation}\label{eq:lemma1}
        \therefore \bm{\rho}^T \tilde{U} = \hat{\bm{\rho}}^T \Sigma^{-1} U^T \tilde{U}.
    \end{equation}
    Now, using the property of left singular vectors of SVD $\tilde{U}\tilde{U}^T= I$ we can obtain,
    \begin{align*}
        \bm{\rho}^T \tilde{H} &= \bm{\rho}^T \tilde{U}\tilde{U}^T \tilde{H} \\
                         &= \hat{\bm{\rho}}^T\Sigma^{-1} U^T \tilde{U} \tilde{U}^T \tilde{H} \qquad \text{Using Eq.~(\ref{eq:lemma1})}\\
                         &= \hat{\bm{\rho}}^T \Sigma^{-1} U^T \tilde{H}.
    \end{align*}
 
\end{proof}

\subsection{Computing the number of unstable periodic orbits for a given sequence length}
Finding the total number of unique periodic orbits for a given sequence length and number of symbols is an interesting combinatorics problem. In this work, we utilize the Redfield-Polyá enumeration theorem and impose a few more constraints to find the total number of periodic orbits for a given sequence length $n$ and $k-$symbols. In 1937, George Polyá published a paper~\cite{polya1937kombinatorische} in combinatorial analysis describing the permutations of configurations with inherent symmetry~\cite{de1971survey, riordan1957combinatorial}. Although this theorem was popularized by Polyá, it was first published by JH Redfield~\cite{redfield1927theory} in 1927 and hence is sometimes referred to as the Redfield-Polyá (R-P) enumeration theorem. The theorem is a generalization of the Burnside's lemma~\cite{burnside1911theory} which takes in account symmetry for computing the permutations. The theorem has found applications in theoretical physics, number theory and graph theory. It has been particularly used in chemistry to find the enumeration of compounds.

The Redfield-Polyá (R-P) enumeration theorem can be used to answer questions in combinatorics such as: how many non-equivalent combinations can be formed for a $n$-bead necklace using beads of $k$-colors?
The concept of finding the total number of combinations of periodic orbits of a given $n-$sequence length with $k$-symbols ($A,B,C,...$) is exactly the same as the necklace enumeration problem. The total number of periodic orbits for a given sequence length can be computed using the R-P enumeration theorem by enforcing constraints on the number of combinations. Using the R-P enumeration theorem all the cyclic combinations of the UPOs can be obtained however, we notice that sequences $AB$ and $ABAB$ are considered to be the same and need to be eliminated from the unique UPOs computation. The sequence $ABAB$ can be considered as a periodic orbits of sequence length 2. The larger sequence $n$ is non-unique in cases where smaller repeating sequences are formed within the larger sequence.  Besides this, mono-symbolic sequences of type $A^n$ and $B^n$ are considered invalid in case of Lorenz i.e., we impose a constraint at all symbols should be used for a symbolic sequence to be valid. Hence, after imposing the above mentioned constraints on the R-P theorem, the number of unique periodic orbits of $n$ sequence length for $k$ symbols is given by a recursive function $P(n)$,
\begin{align}
    P(n)= \begin{cases}
        \frac{1}{n} \sum_{i=1}^n k^{gcd(i,n)} - \sum_{i\vert n} P(i) - k,  &n > k, \\ 
          (n-1)! & n=k, \\ 
          0 & n<k.
         \end{cases}
\end{align}
where, $i \vert n$ is shorthand for \lq$i$ divides $n$\rq and $gcd(i,n)$ is the greatest common divisor between two numbers $i$ and $n$.

The problem of $n$-ary necklace differs from finding the unique symbolic sequences for UPOs due to additional constraints imposed on the periodic orbits. In symbolic dynamics, repeating sequences are considered equivalent, since the entire sequence is supposed to repeat infinitely. To eliminate this repetition, it is crucial to find the divisors of $n$ and eliminate those sequences to maintain the uniqueness of the UPOs. The term $P(i)$ takes care of this redundancy in the equation for $P(n)$.  
Table~(\ref{tab:CountUPOs}) refers to the number of UPOs obtained for the Lorenz attractor for 2 symbols ($A$ and $B$) up to the sequence length 14 ($k=2$ and $n=14$). Column 1 denotes the sequence length. The second column denotes the naïve combinations that can be computed in cases where the cyclic nature and symmetry are not taken into account. The third and fourth columns denote the number of combinations obtained from the R-P theorem and constrained R-P theorem. Fig.~(\ref{fig:NumUPOs}b) denotes naïve total possible combinations of UPOs, number of possible combinations of UPOs obtained from Redfield-Pólya enumeration theorem and the unique UPOs obtained by a modification to the Redfield-Pólya enumeration theorem. 

The symbolic dynamics assigns a distinct label to each distinct trajectory; however, there might be symbolic sequences which are not realized as trajectories. If all possible symbolic sequences are realized as physical trajectories then the symbolic dynamics is called \textit{complete}, if some sequences are not realized as physical trajectories then the symbolic dynamics is \textit{pruned}~\cite{cvitanovic1993symmetry}. The parameter values taken for both the Lorenz and the Rössler case provide a \textit{complete} set of UPOs.

\begin{table}
    \centering
    \begin{tabular}{|c|c|c|c|}
    \hline
         Sequence  & Total non-cyclic  & Combinations from  & Modified Redfield-Pólya   \\
         length & possible combinations & Redfield-Pólya enumeration & enumeration theorem\\
          & &  theorem & \\
          \hline
          2 &   $4$ &   3    &   1    \\
          3 &   $8$ &  4     &    2   \\
          4 &   $16$ &   6    &    3   \\
          5 &   $32$ &   8    &   6    \\
          6 &   $64$ &   14    &  9     \\
          7 &   $128$ &    20   &  18     \\
          8 &   $256$ &   36    &   30   \\
          9 &   $512$ &   60    &    56   \\
          10 &  $1024$ &  108     &   99    \\
          11 &  $2048$ &   188    &    186   \\
          12 &   $4096$ &  352     &    335   \\
          13 &   $8192$ &  632     &    630   \\
          14 &   $16384$ & 1182 &  1161 \\
          \hline
    \end{tabular}
    \caption{The possible combinations of unique UPOs obtained for a given sequence length. The second column denotes the possible combinations of the sequences without taking into account the cyclicity of the combinations. The Redfield-Pólya enumeration theorem takes into account the cyclicity and the symmetries in the possible combinations and provides the total combinations. In the context of UPOs for the Lorenz attractor, symbolic sequences of just one letter (example, $A^n$ and $B^n$) are not feasible. Also the sequences of type $AB$ and $ABAB$ are considered to be equal. After eliminating these recurring UPOs the modified Redfield-Pólya enumeration theorem provides the number of unique UPOs for the Lorenz attractor.}
    \label{tab:CountUPOs}
\end{table}
\begin{figure}[t]
    \centering
    \subfloat[Tree Hierarchy]{\includegraphics[width=0.46\textwidth]{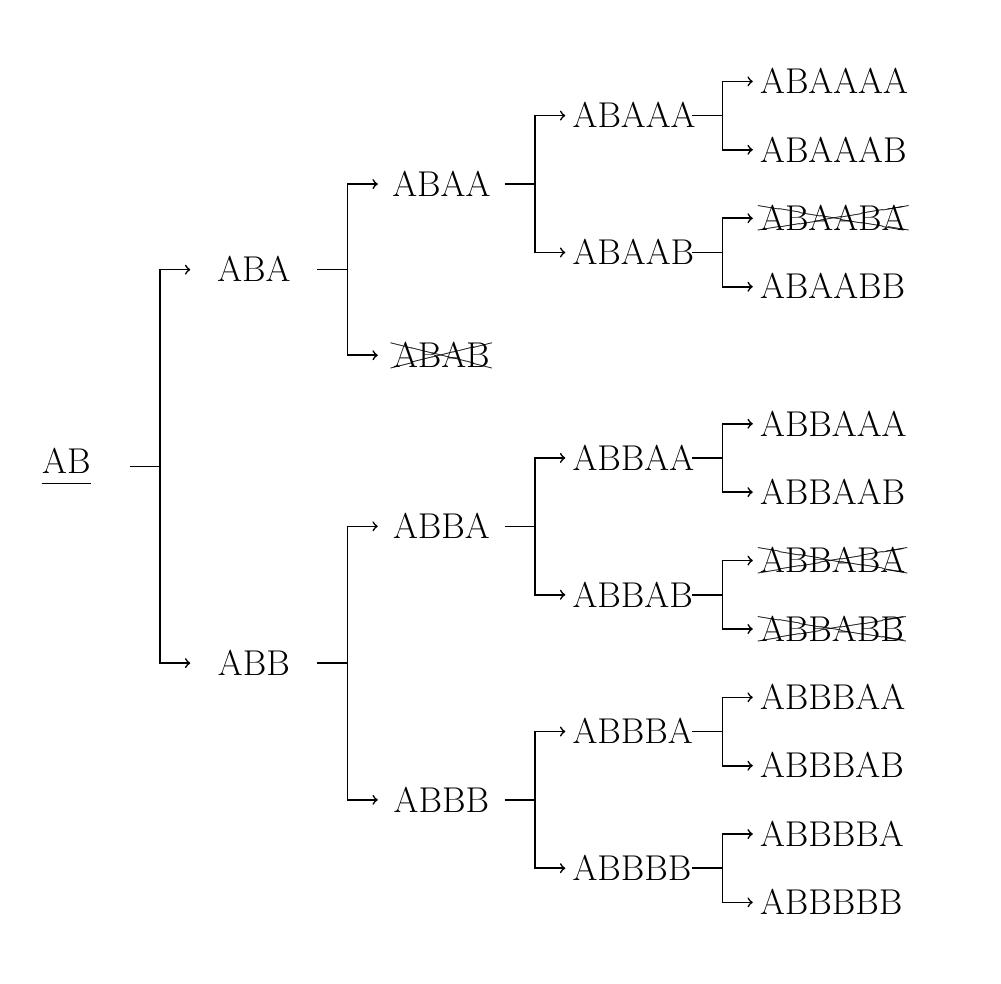}}
    \subfloat[Number of unique UPOs for a given sequence length]{\includegraphics[width=0.53\textwidth]{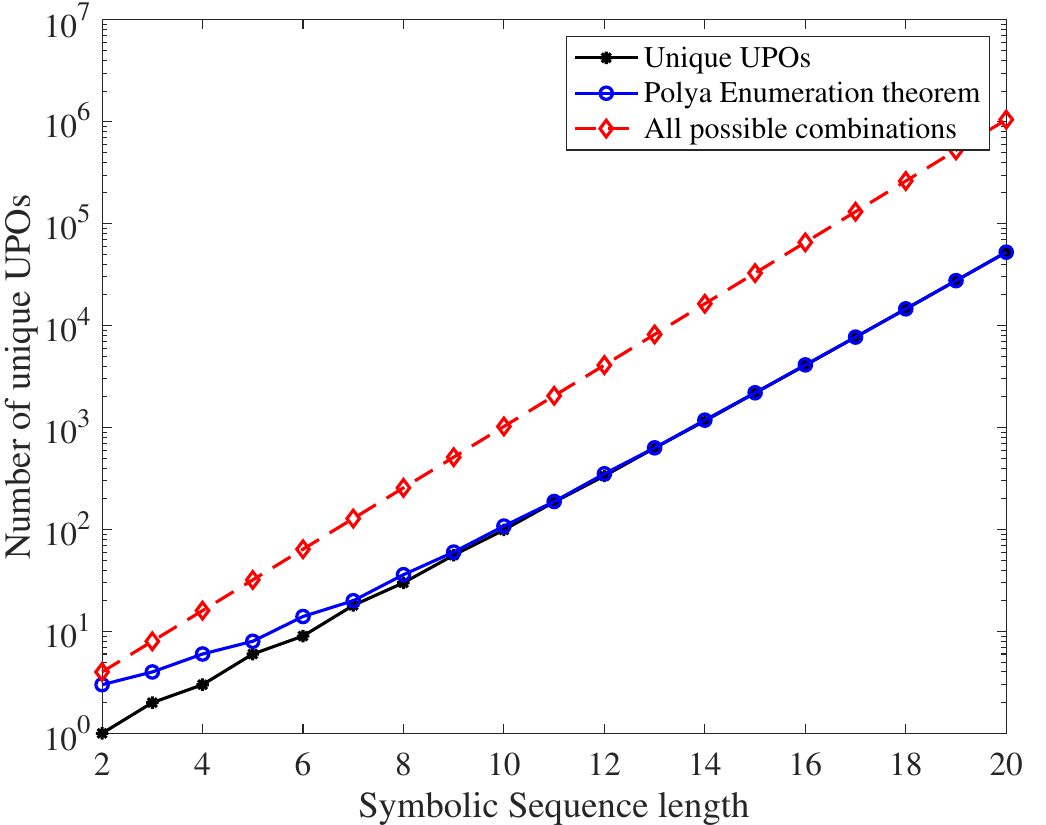}}
    \caption{(a) Hierarchical clustering of the UPOs where repeated UPOs are eliminated from the tree. For example, the UPOs $AB$ and $ABAB$ are the same and so $ABAB$ is eliminated from the tree. (b) This figure shows the comparison between the number of unique UPOs obtained for a given symbolic sequence length. All possible values are computed using the linear permutations which can be given as $2^{N}$, where $N$ is the sequence length. The Redfield-Pólya enumeration theorem considers the circular permutations and eliminates the values which are repeated due to the rotation symmetry. Finally the number of unique UPOs are computed by eliminating the repeated UPOs such as $AB$ and $ABAB$.}
    \label{fig:NumUPOs}
\end{figure}
\section{Results}\label{sec:results}
In this section, we show the results for the unstable periodic orbits in the delay-embedded space. The separation of the UPOs is observed as the height of the Hankel matrix increases. We demonstrate the separation of the UPOs for the three datasets of the Lorenz attractor and two datasets of the Rössler attractor.
\subsection{Lorenz attractor}
%The Lorenz attractor is considered for the Saltzmann parameters values $\beta=8/3, \sigma=10$ and $\rho=28$. 
The initial conditions for the chaotic trajectory are taken to be $[x(t_0),y(t_0),z(t_0)]=[11,14,27]$. The $\Delta t$ for the evolution of the ODE is $0.01$, the value of $\tau$, which is value of time interval between the consecutive columns of Hankel matrix is also taken to be $0.01$, and $\Delta t=\tau$.
\subsubsection{Case I: Unstable periodic orbits of type \texorpdfstring{$\mathbf{A^n B}$ and $\mathbf{B^nA}$}{An B and BnA}}
The data for this case is taken from the \textit{Divakar dataset} where the UPOs of the type $A^nB$ and their symmetric counterparts $AB^n$ are considered where $n=1,2,3,4,\cdots,23$. This dataset includes $45$ unstable periodic orbits: 23 UPOs each from $A^nB$ and $AB^n$. The redundant UPO, $AB$ present in both sets was removed, resulting in 45 unique UPOs. The results of the long time delay embeddings are obtained by varying the $t_{height}$ of the Hankel matrices. The Hankel matrices are computed for $t_{height} = [0.5,1,3,5,10,15]$ and a constant value for the number of columns of the Hankel matrix i.e., $t_{width}=100$ is considered. The results are compiled in the Fig.~(\ref{fig:Case1LorenzUPOUnfolding}a-f). For $t_{height} =[0.5,1,3,5]$ we observe that the UPOs overlap. As longer embeddings are taken into account the chaotic attractor starts to untangle and the corresponding  UPOs start to separate. For longer and longer embeddings the UPOs become circles in the embedded space where they converge to the Fourier modes. As observed in Fig.~(\ref{fig:Case1LorenzUPOUnfolding}d-f), $A^nB$ and $AB^n$ sets of orbits cluster around the diametric opposite ends of the unfolded chaotic trajectory. 

As the UPOs separate in the embedded space, spatial proximity of two UPOs in the $(x,y,z)$-phase space corresponds to spatial proximity in the embedded space. Fig.~(\ref{fig:UPOProximity}a) demonstrates the proximity of the UPOs in the phase space. It is observed that the trajectories of UPOs $AB^{23}$ and $AB^{17}$ stay close to each other. This trend is replicated with the trajectories in the embedded space. We observe in Fig.~(\ref{fig:UPOProximity}b) that after the attractor unfolds the two UPOs $AB^{23}$ and $AB^{17}$ still maintain proximity in the embedded space. Whereas, the UPO $AB$ shares no common trajectory with the other two UPOs and hence we see the separation between $AB$ and the other two UPOs in the embedded space in fact the separation is amplified in the embedded space. The distance between the UPOs or the distance of the UPOs from the plane of separation $v_1=0$ is given by Theorem(\ref{th:SeparationOfUPOs}). In case of UPO $AB^{23}$ and $AB^{17}$ the value of the function $f(\rho)=\frac{\rho^{(k)}-1}{\rho^{(k)}+1}$ is $0.91667$ and $0.8889$ where, $\rho$ refers to the ratio of the A-symbols vs B-symbols for a given UPO of the Lorenz attractor. Whereas, the value of $f(\rho)$ for $AB$ is $f(1)=0$. Thus, we can observe that the UPOs $AB^{23}$ and $AB^{17}$ in the embedded space are closer to each other than the UPO $AB$. 
We observe that $\lim_{\rho \to \infty} f(\rho) = 1$ and $\lim_{\rho \to \infty} f(1/\rho) = -1$. We expect the $A$-heavy and $B$-heavy UPO to lie on the opposite ends of the attractor in the delay-embedded space.  

\begin{figure}
\setkeys{Gin}{width=\linewidth}
    \begin{minipage}{0.93\textwidth}
\hspace*{-0.02\linewidth}\begin{subfigure}[b]{0.36\linewidth}
  \includegraphics{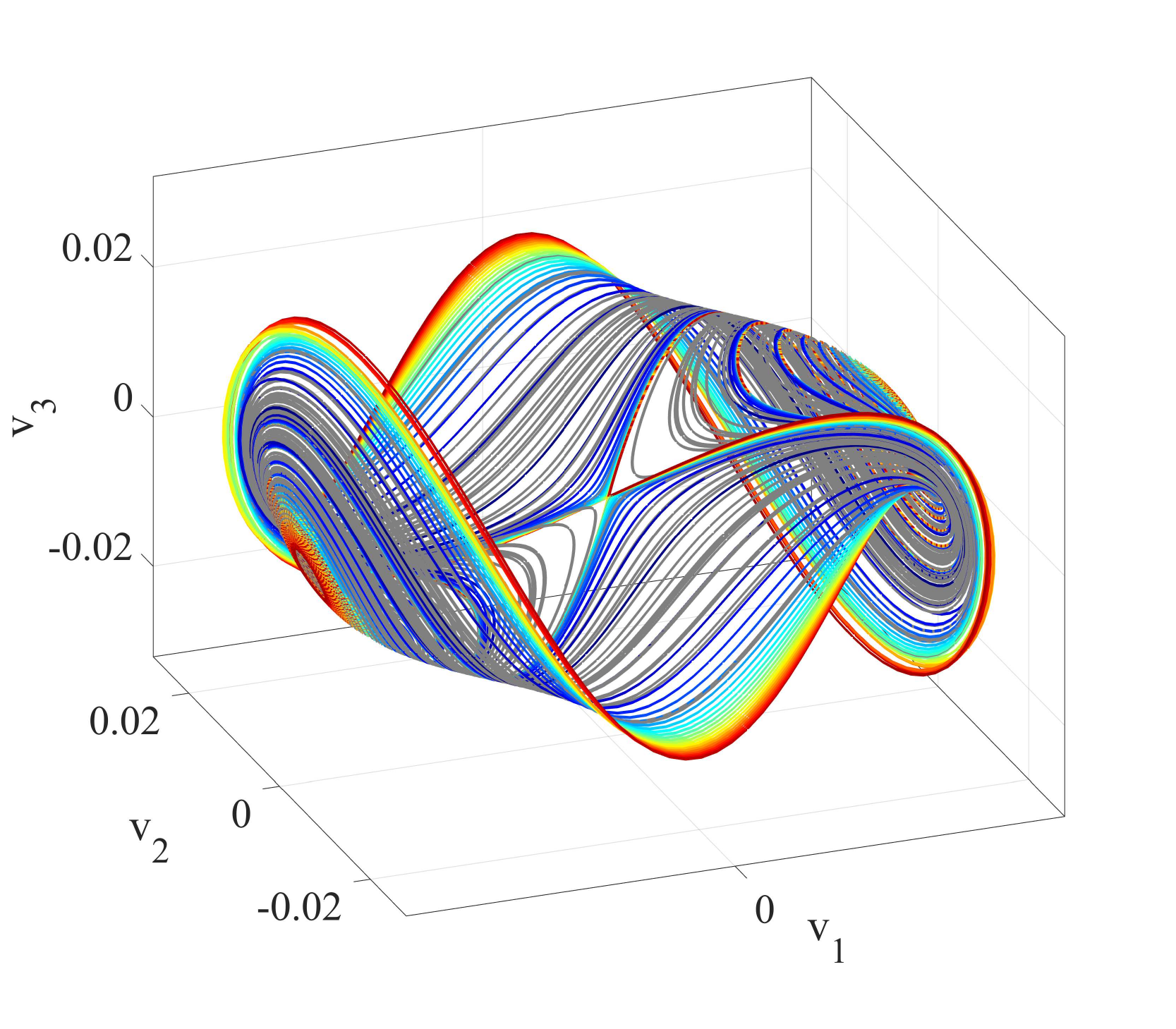}
  \caption{$t_{height}=0.5$}
  \label{fig:Case1Fig1}
\end{subfigure}
\hfill
\hspace*{-0.04\linewidth}\begin{subfigure}[b]{0.36\linewidth}
  \includegraphics{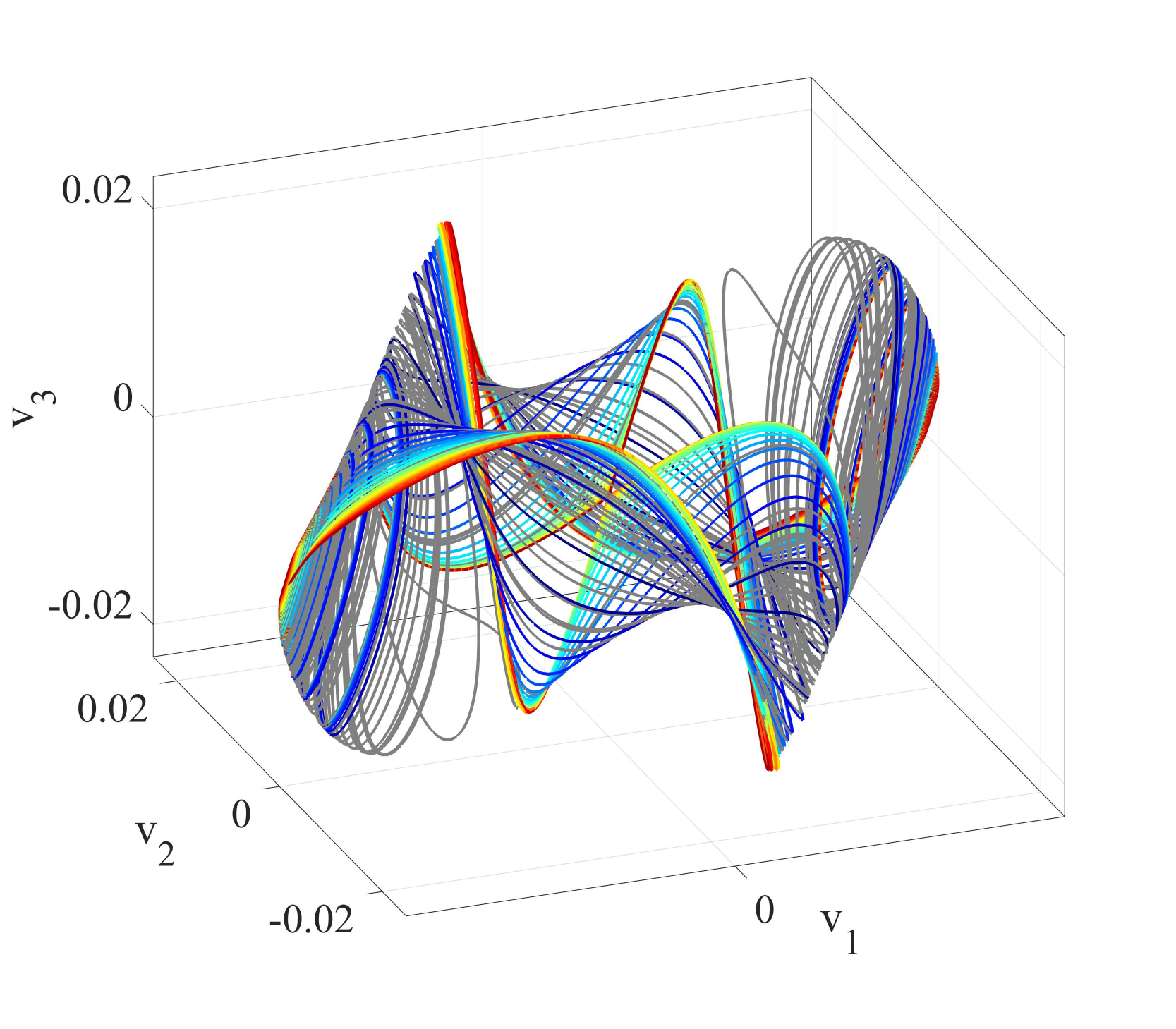}
  \caption{$t_{height}=1$}
  \label{fig:Case1Fig2}
\end{subfigure}
\hfill
\hspace*{-0.04\linewidth}\begin{subfigure}[b]{0.36\linewidth}
  \includegraphics{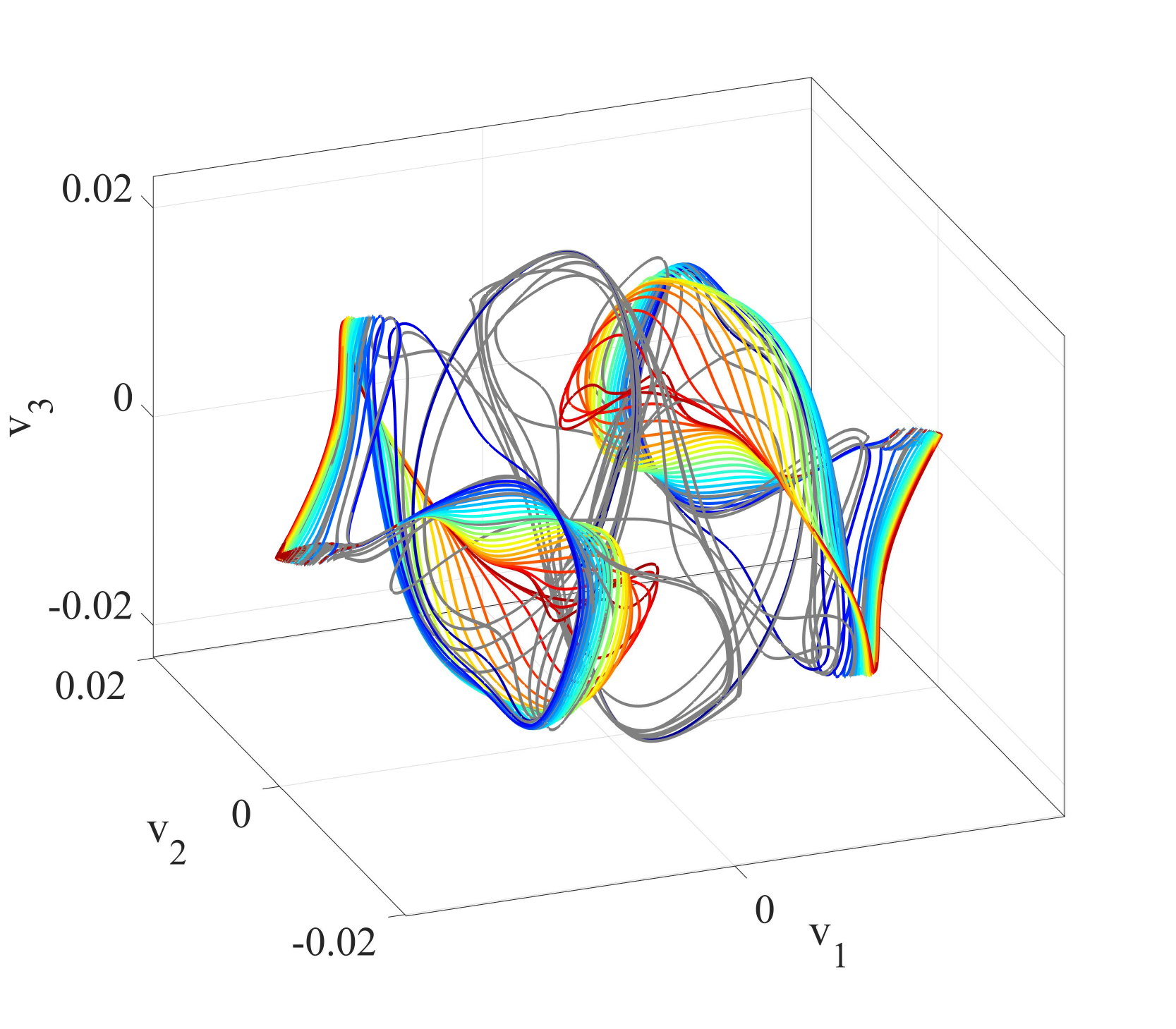}
  \caption{$t_{height}=3$}
  \label{fig:Case1Fig3}
\end{subfigure}
\hfill
\hspace*{-0.02\linewidth}\begin{subfigure}[b]{0.36\linewidth}
  \includegraphics{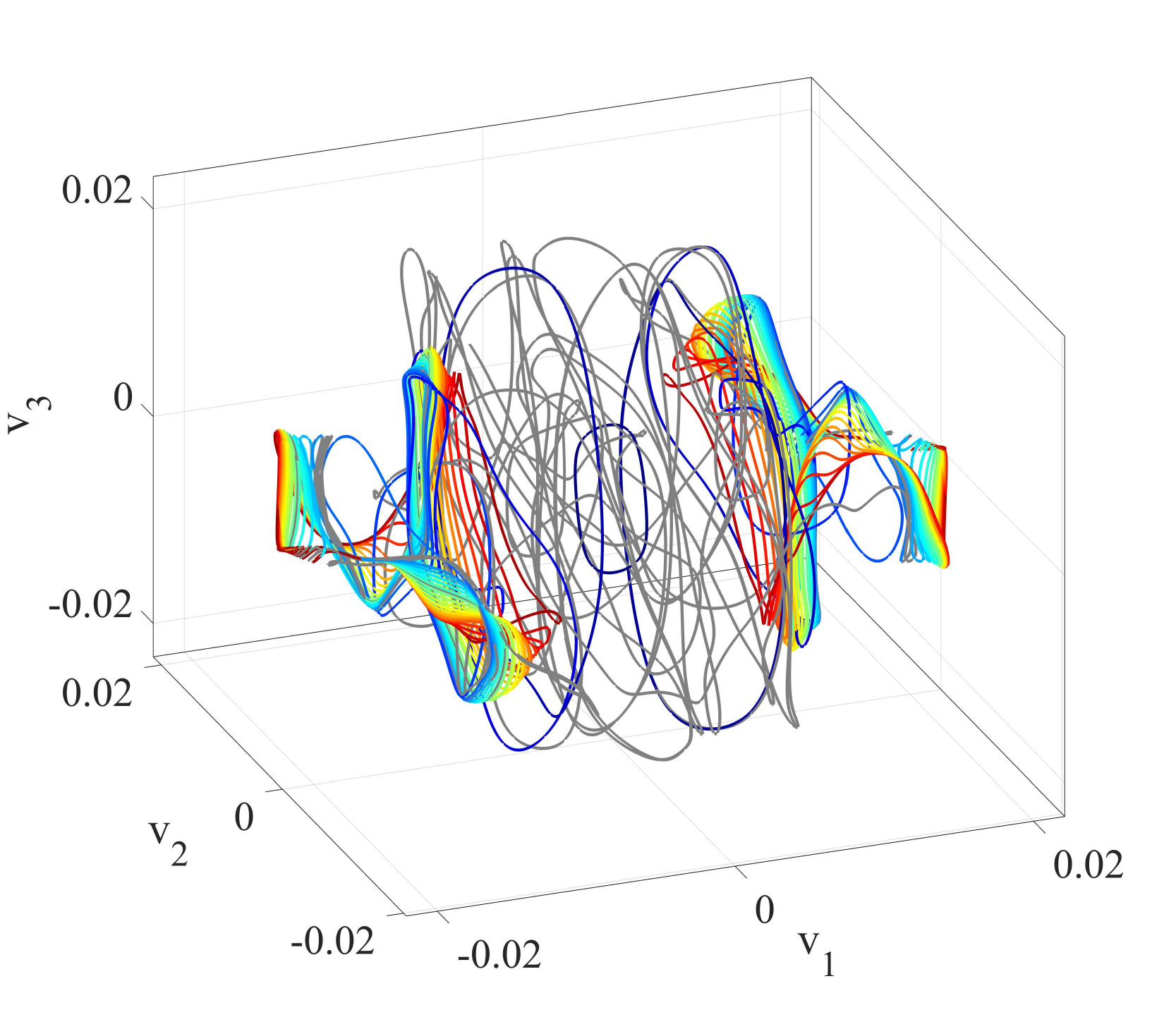}
  \caption{$t_{height}=5$}
  \label{fig:Case1Fig4}
\end{subfigure}
\hfill
\hspace*{-0.04\linewidth}\begin{subfigure}[b]{0.36\linewidth}
  \includegraphics{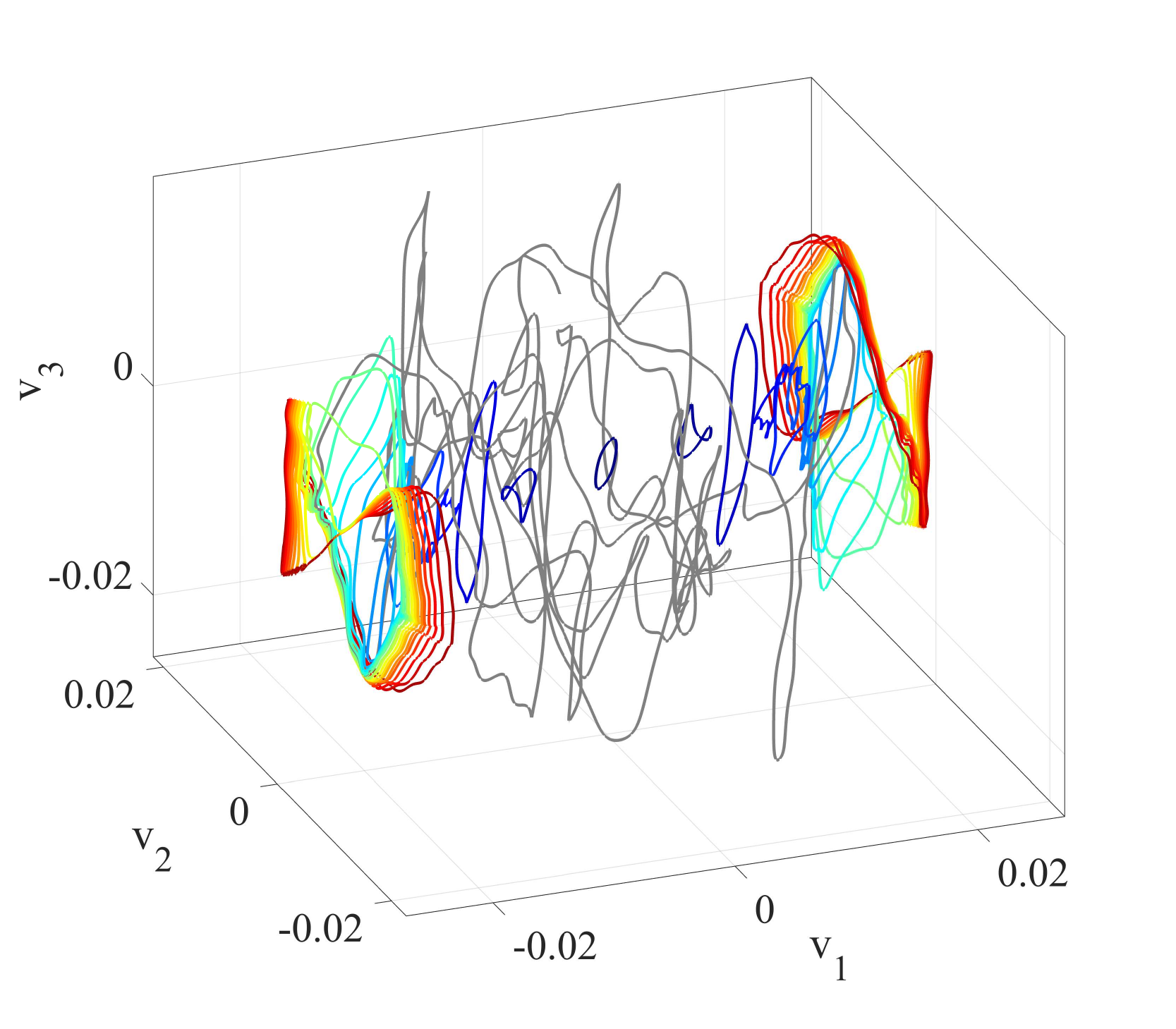}
  \caption{$t_{height}=10$}
  \label{fig:legendcase1}
\end{subfigure}
\hfill
\hspace*{-0.04\linewidth}\begin{subfigure}[b]{0.36\linewidth}
  \includegraphics{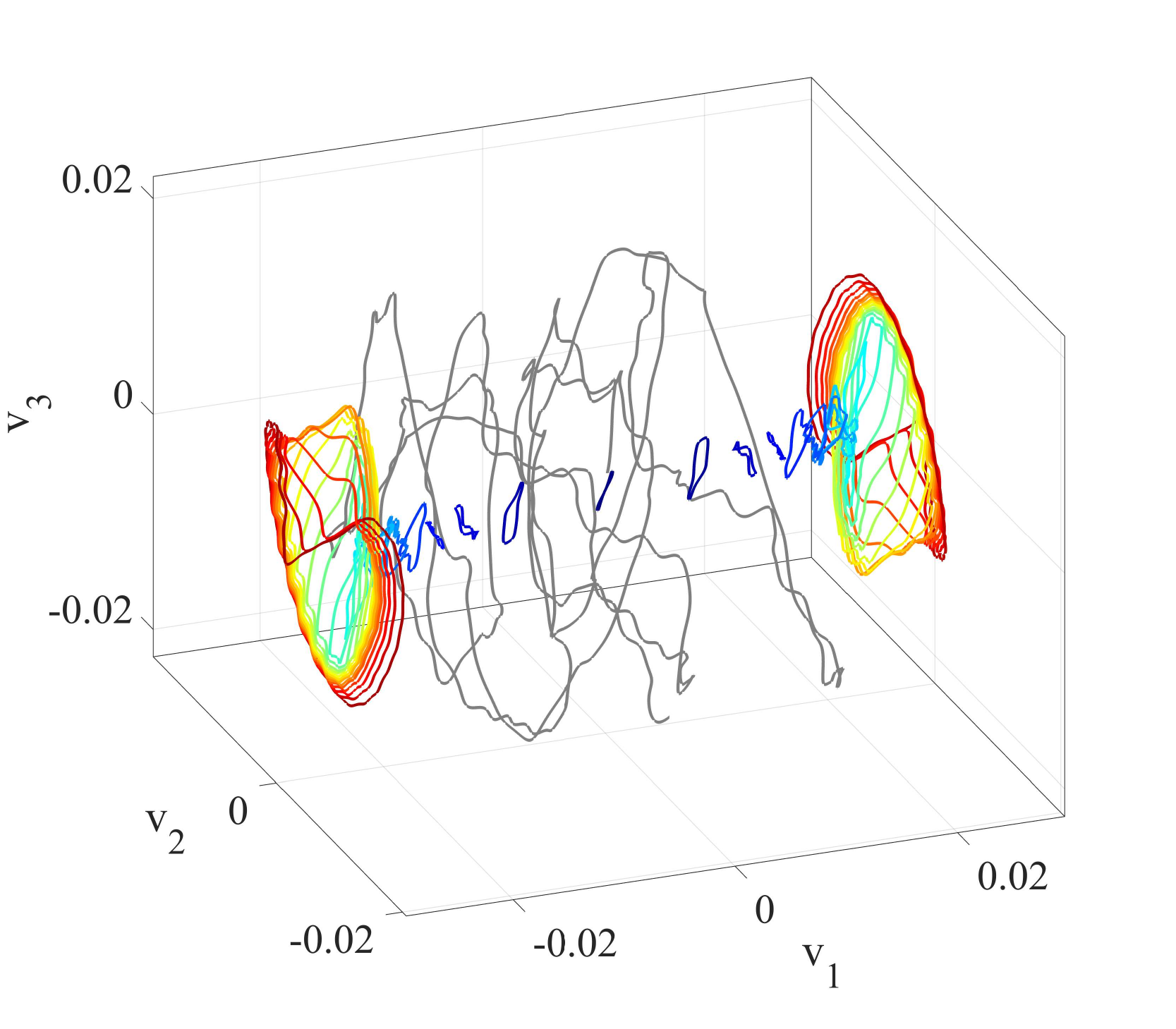}
  \caption{$t_{height}=15$}
  \label{fig:AnBBnALTDEUPOs}
\end{subfigure}
\hfill
    \end{minipage}
\hfill
\hspace{-0.01\linewidth}\begin{minipage}{0.07\textwidth}
\includegraphics{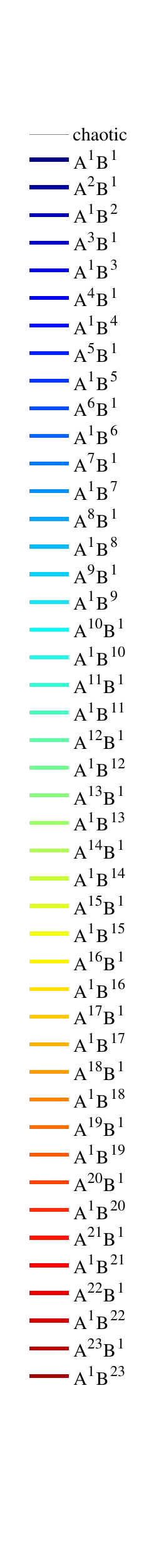}
\end{minipage}
\caption{Lorenz attractor: Unfolding of the attractor for the unstable periodic orbits of the type $A^nB$ and $B^nA$. We observe that the orbits cluster around the diametric opposite ends of the unfolded chaotic trajectory. The grey represents the chaotic trajectory.}
\label{fig:Case1LorenzUPOUnfolding}
\end{figure}
\begin{figure}[H]
    \centering
    \subfloat[Three UPOs in the phase space]{\includegraphics[width=0.35\textwidth]{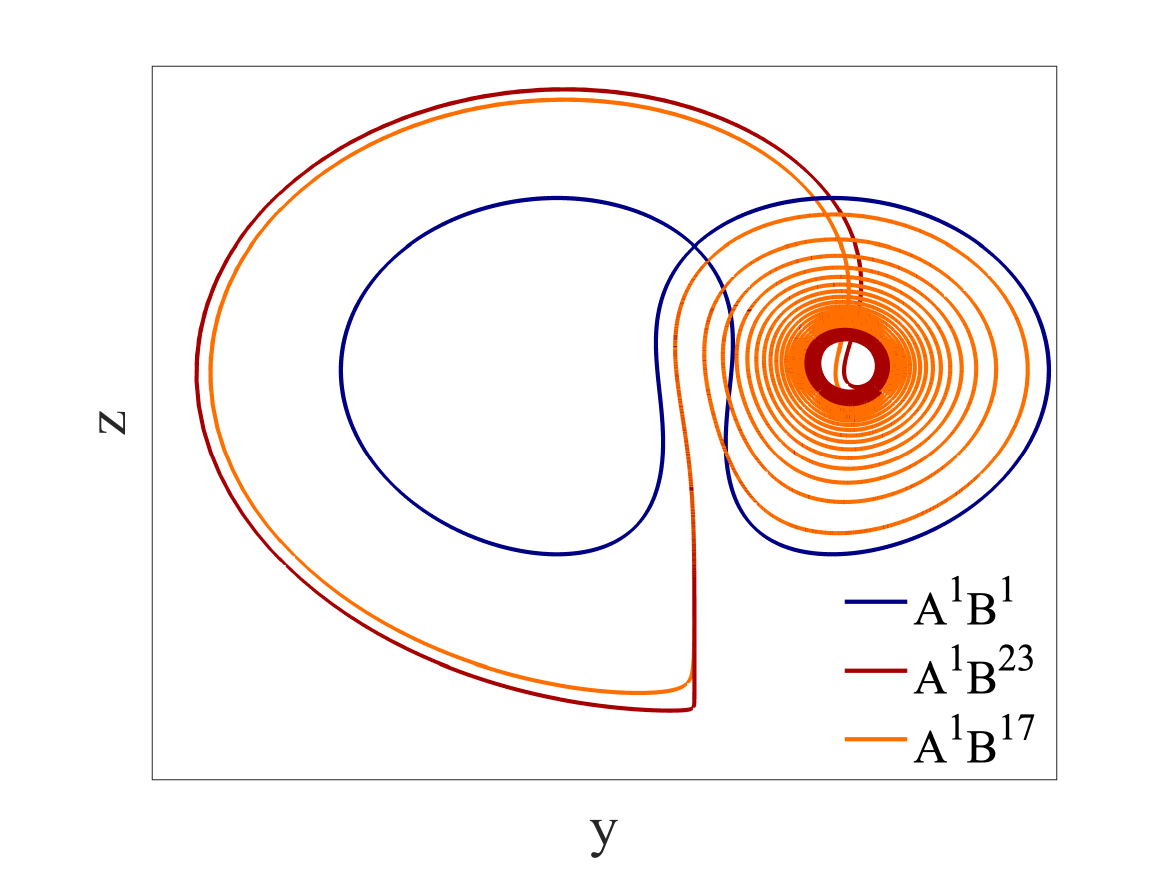}}
    \subfloat[Three UPOs in the embedded space]{\includegraphics[width=0.35\textwidth]{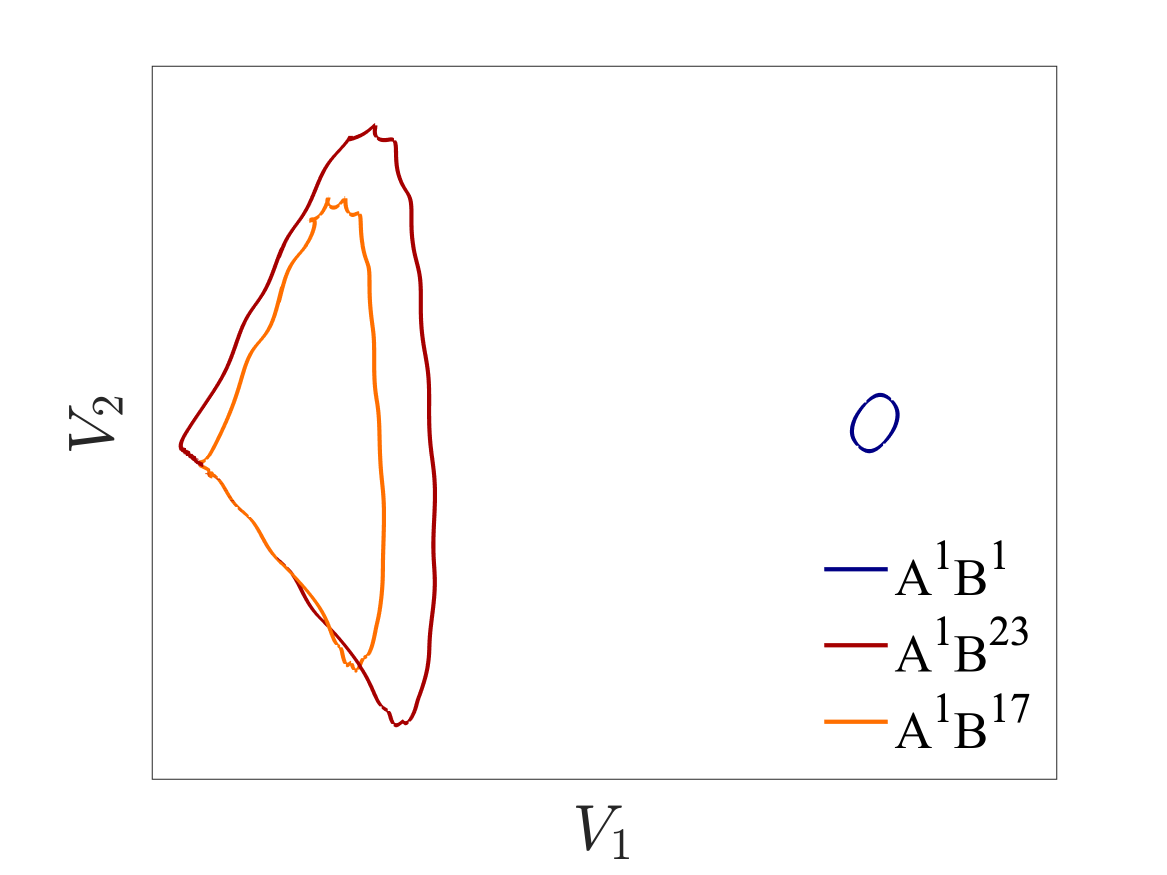}}
    \caption{Comparison between the proximity of the UPOs in the phase space and the embedded space for the Lorenz attractor. }
    \label{fig:UPOProximity}
\end{figure}
\subsubsection{Case II: Unstable periodic orbits of type \texorpdfstring{$\mathbf{A^n B^n}$}{An Bn}}

The data for this case is taken from the \textit{Divakar dataset} where the symmetric UPOs of the type $A^nB^n$ are considered where $n=1,2,\cdots,19$. This data-set contains 19 unstable periodic orbits. 
The results of the long time delay embeddings are obtained by varying the $t_{height}$ of the Hankel matrices. The Hankel matrices are computed for $t_{height}=[0.5,1,3,5,10,15]$ and $t_{width}=100$. The results for this case are compiled in Fig.~(\ref{fig:AnBnLTDEUPOs}a-f). 
For $t_{height}=[0.5,1,3,5]$ we observe that the UPOs overlap. As longer embeddings are taken into account the chaotic attractor starts to untangle the UPOs separating them in the process. The $A^nB^n$ orbits cluster at the center of the unfolded chaotic trajectory. Similar to the previous case, proximity in the phase space translated to proximity in the embedded space of the attractor. 
The relative positions of the two sets of UPOs (Case I and Case II) is depicted in Fig.~(\ref{fig:AHeavyBHeavyUPOs}a). In this figure, we observe that the UPOs with higher number of $A$ symbols gravitate towards one side of the unfolded attractor, whereas the UPOs higher number of $B$ symbols gravitate towards diametrically opposite side of the unfolded attractor. In comparison, the symmetric UPOs with equal number of $A$ and $B$ components occupy the center of the unfolded chaotic attractor. 
\begin{figure}
\setkeys{Gin}{width=\linewidth}
    \begin{minipage}{0.925\textwidth}
\hspace*{-0.02\linewidth}\begin{subfigure}[b]{0.36\linewidth}
  \includegraphics{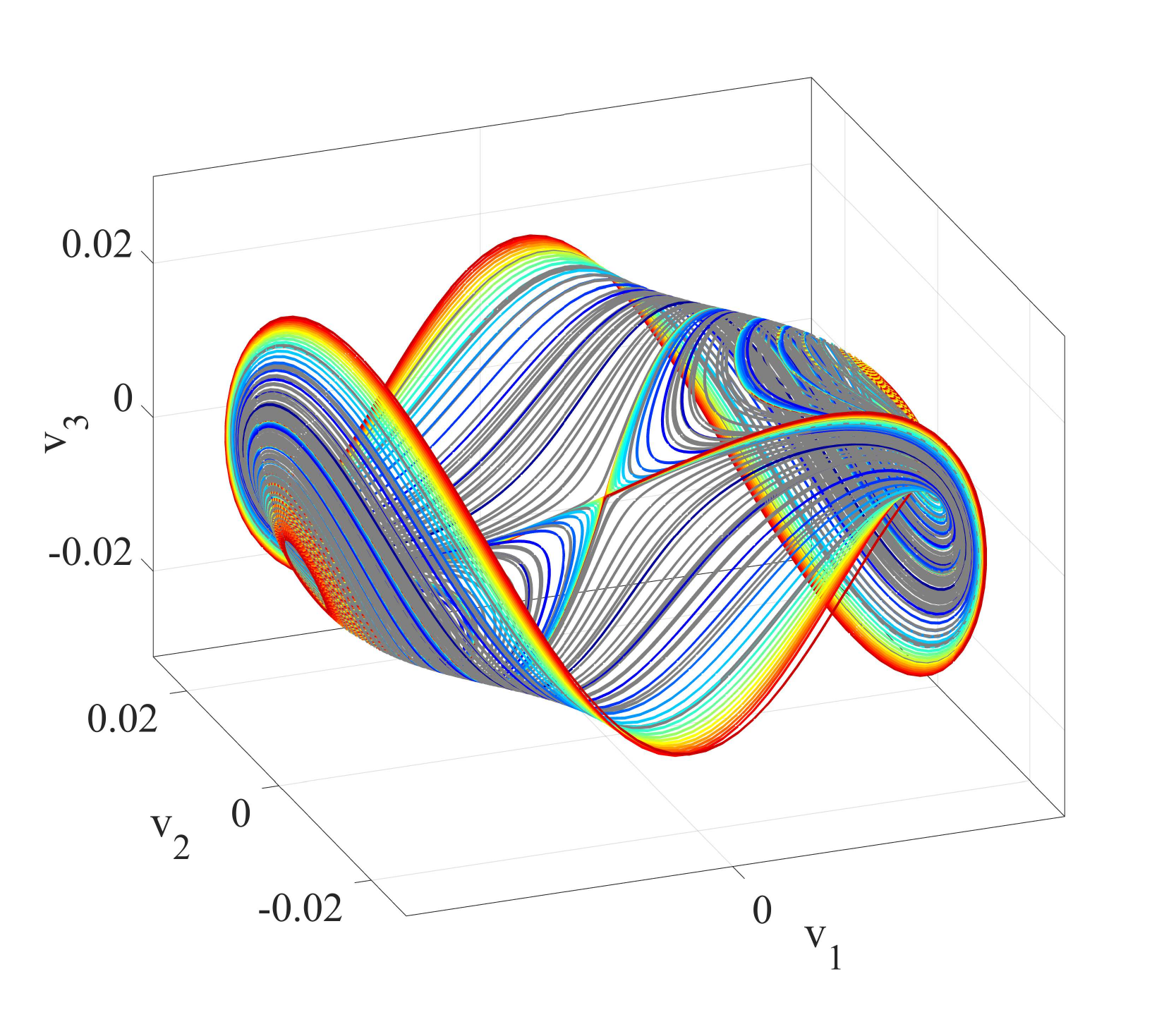}
  \caption{$t_{height}=0.5$}
  \label{fig:Case2Fig1}
\end{subfigure}
\hfill
\hspace*{-0.04\linewidth}\begin{subfigure}[b]{0.36\linewidth}
  \includegraphics{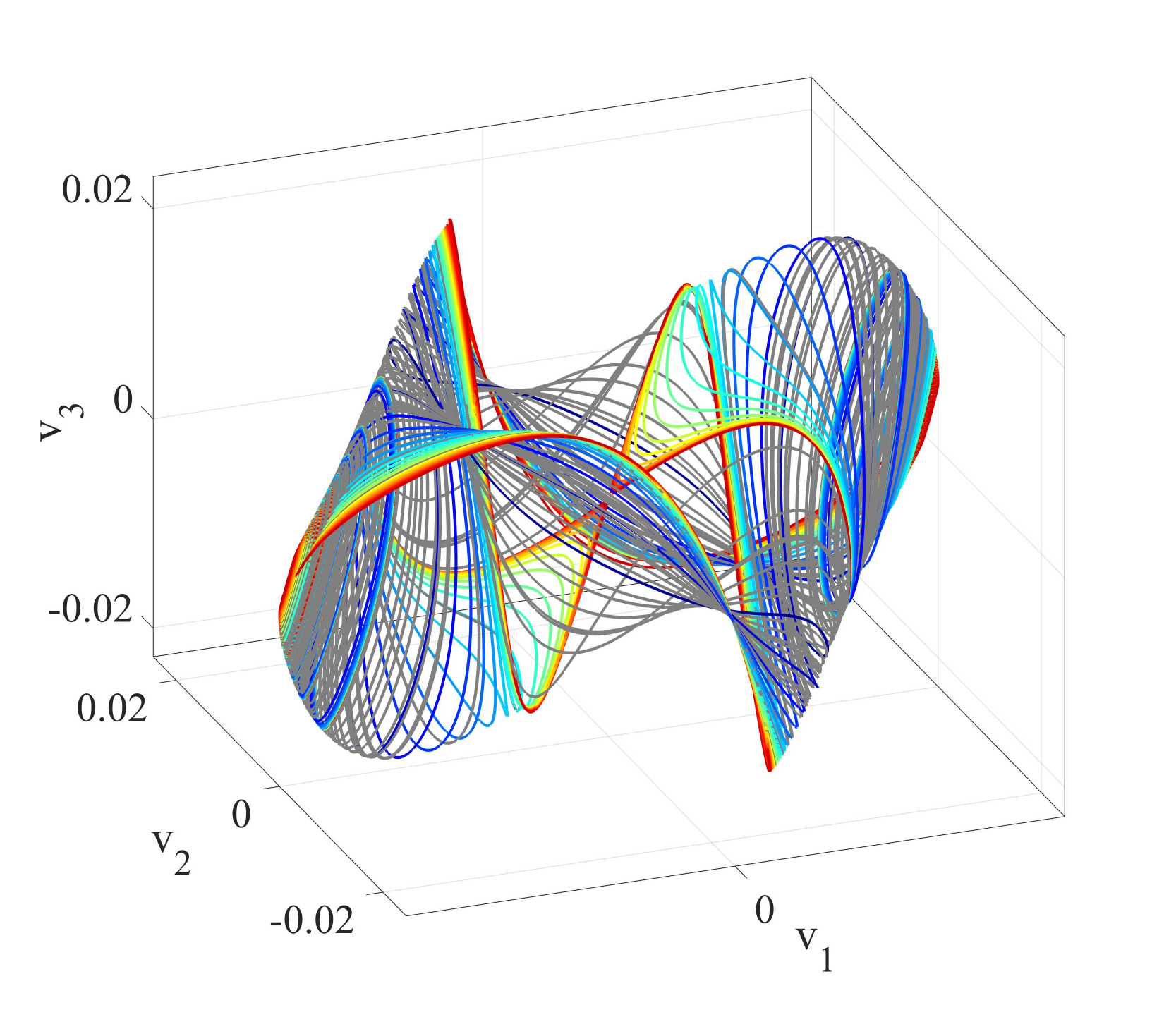}
  \caption{$t_{height}=1$}
  \label{fig:Case2Fig2}
\end{subfigure}
\medskip
\hspace*{-0.04\linewidth}\begin{subfigure}[b]{0.36\linewidth}
  \includegraphics{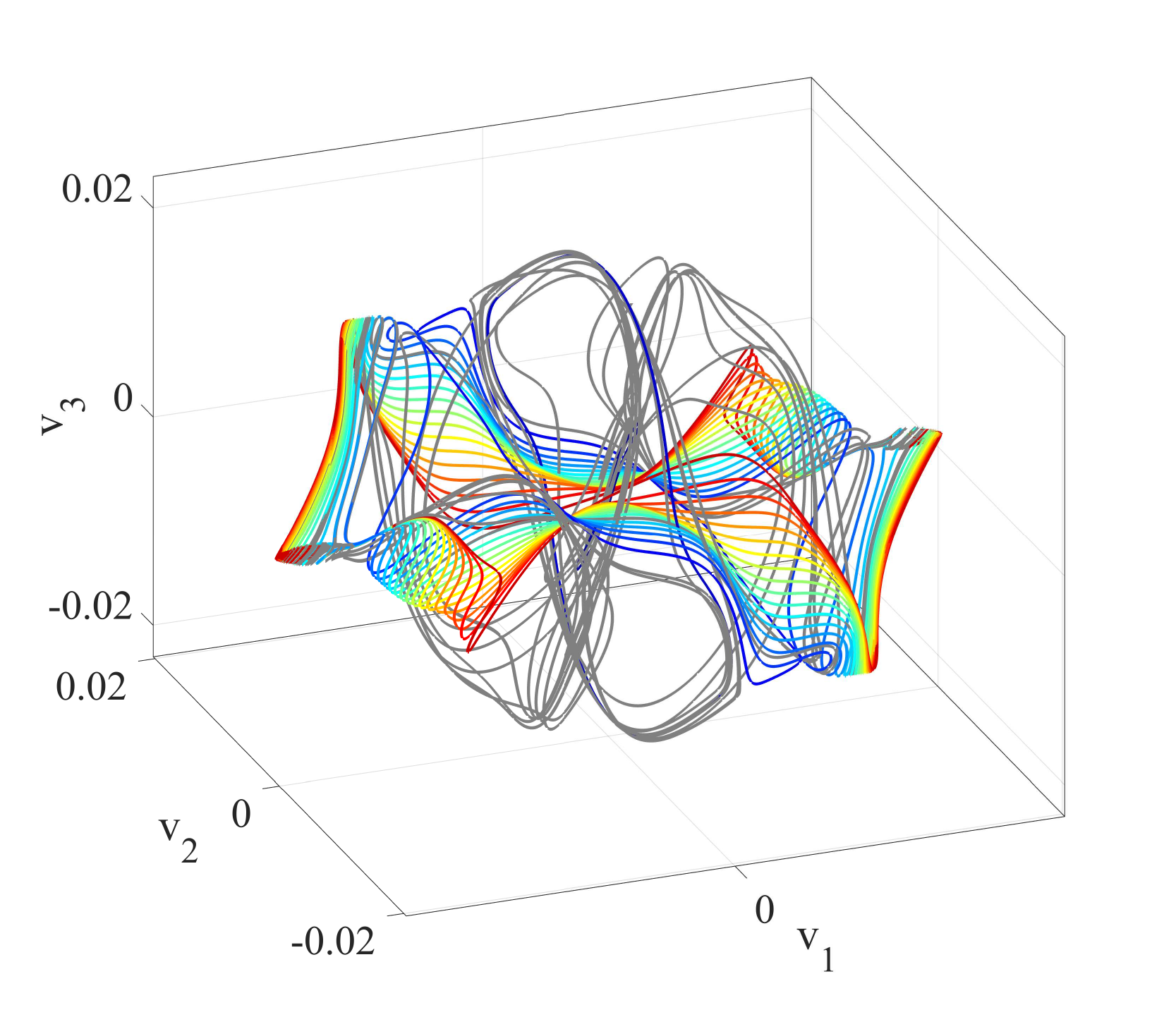}
  \caption{$t_{height}=3$}
  \label{fig:Case2Fig3}
\end{subfigure}
\hfill
\hspace*{-0.02\linewidth}\begin{subfigure}[b]{0.36\linewidth}
  \includegraphics{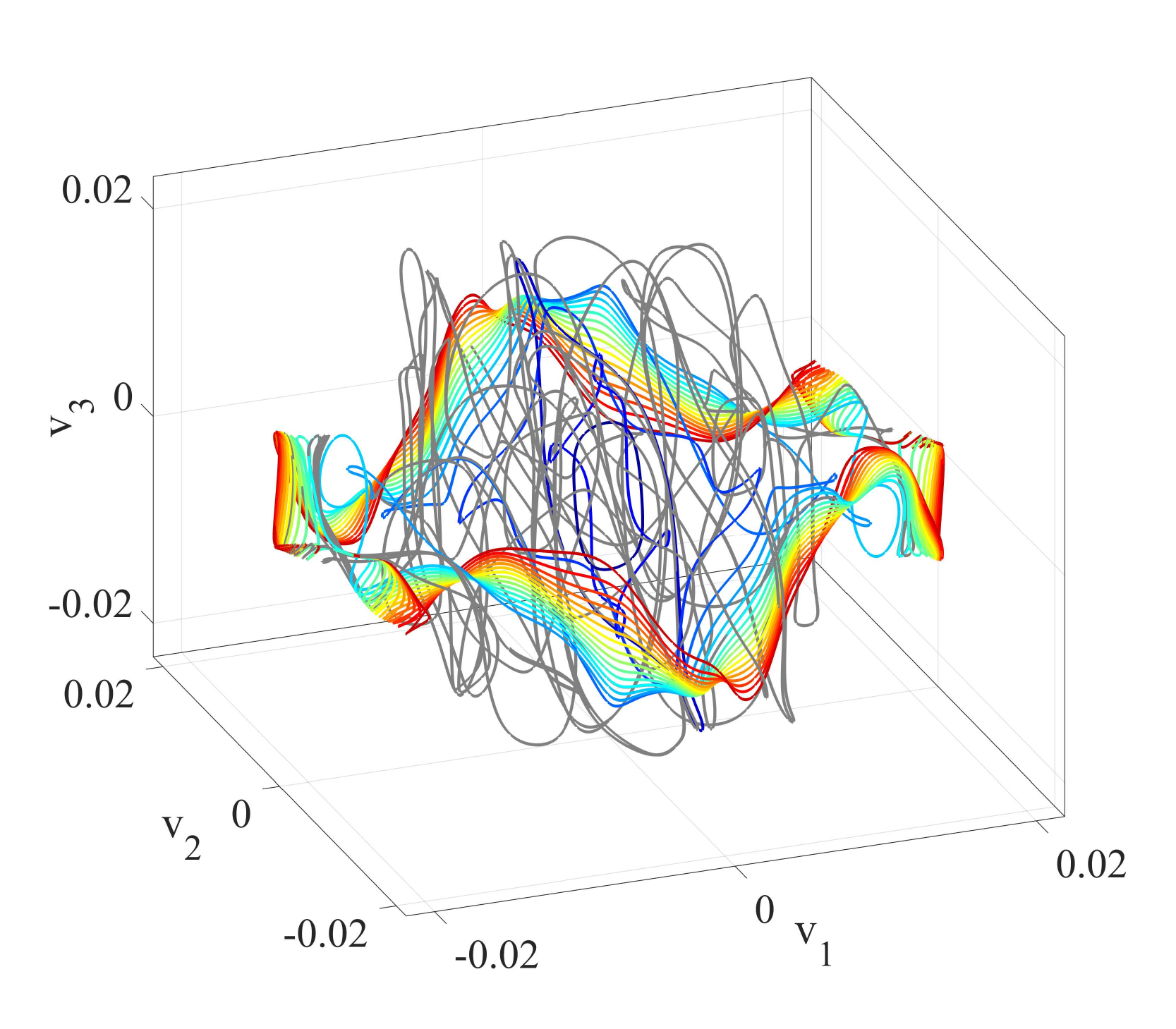}
  \caption{$t_{height}=5$}
  \label{fig:Case2Fig4a}
\end{subfigure}
\hfill
\hspace*{-0.04\linewidth}\begin{subfigure}[b]{0.36\linewidth}
  \includegraphics{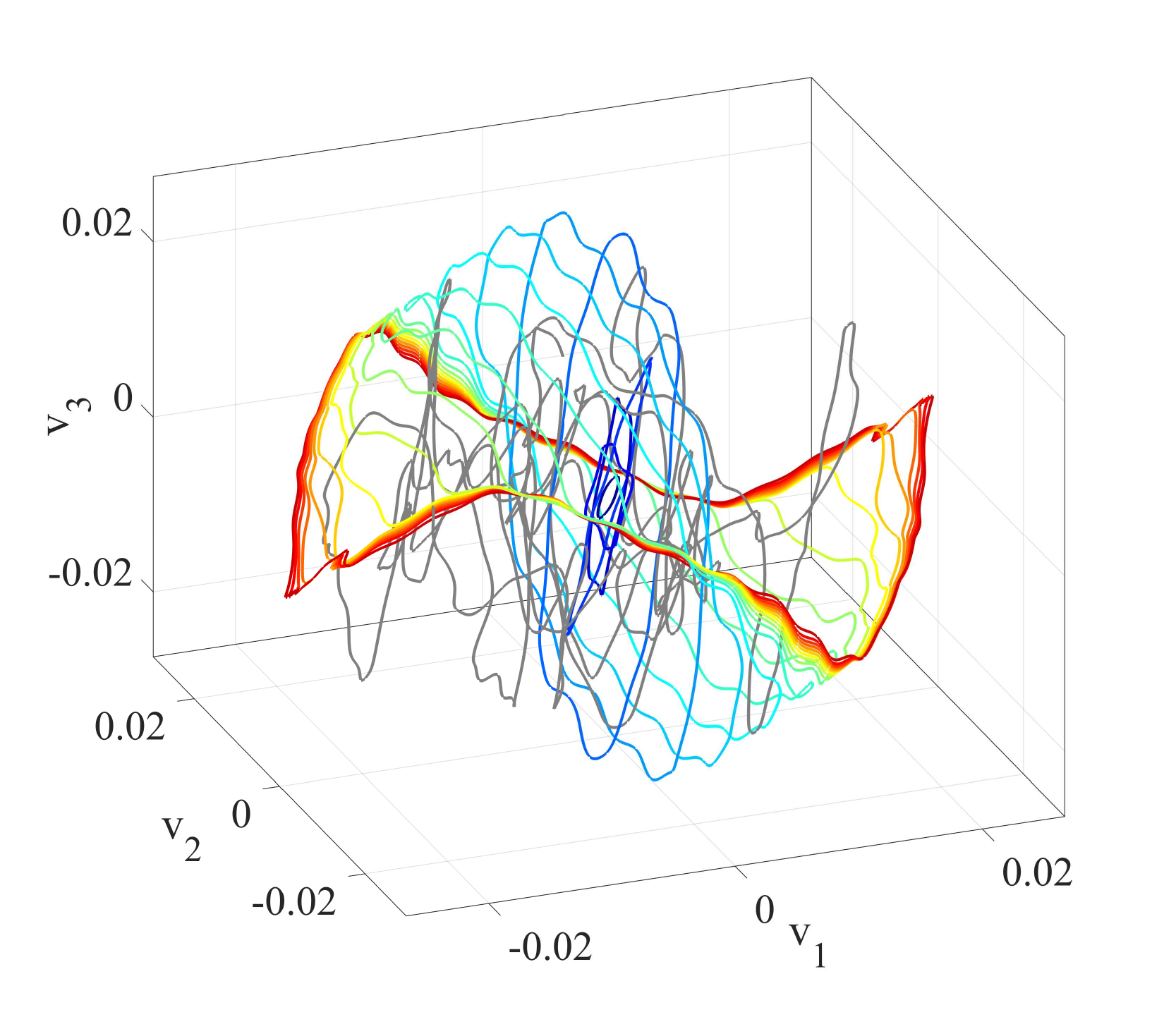}
  \caption{$t_{height}=10$}
  \label{fig:Case2Fig4b}
\end{subfigure}
\hfill
\hspace*{-0.04\linewidth}\begin{subfigure}[b]{0.36\linewidth}
  \includegraphics{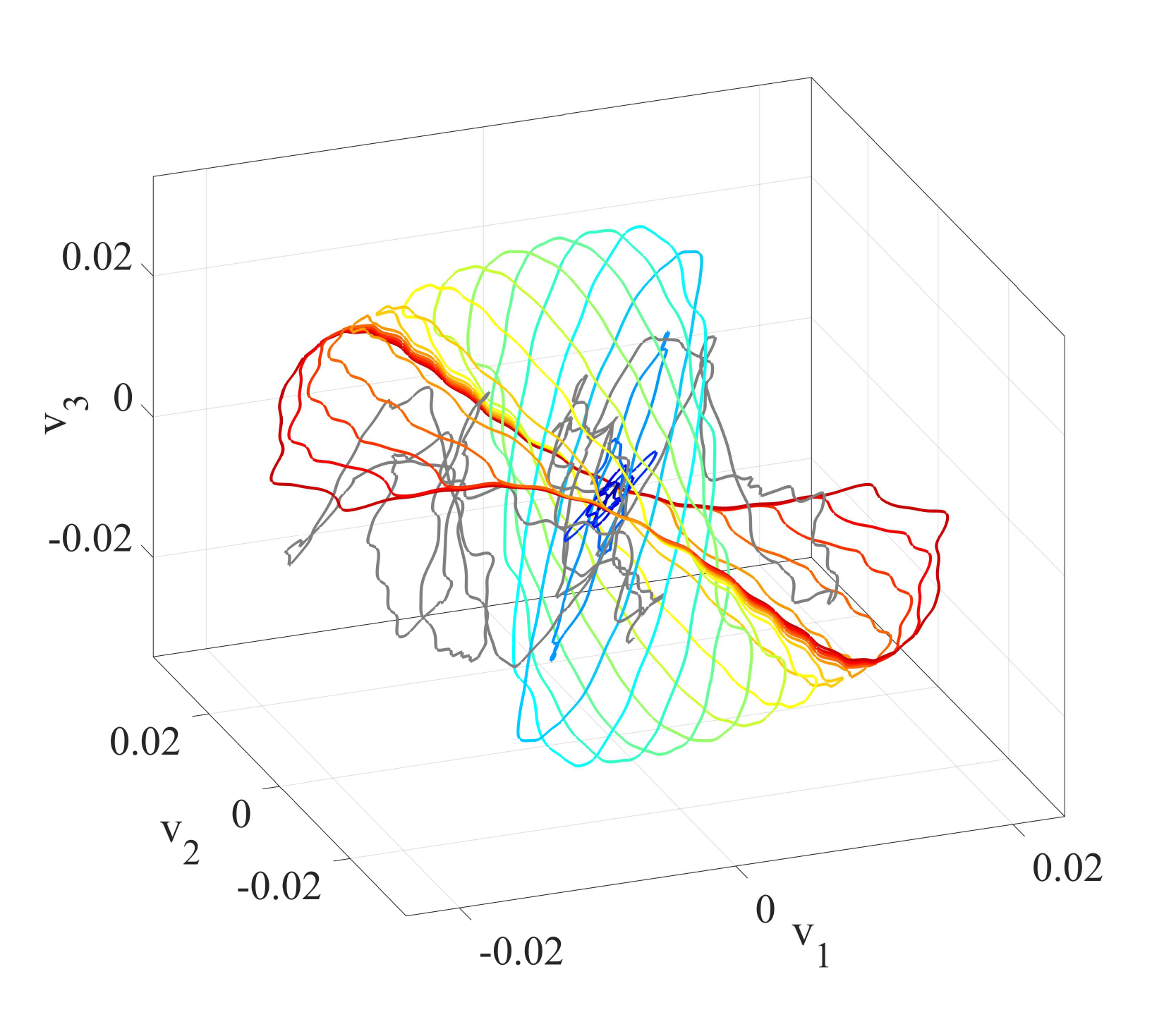}
  \caption{$t_{height}=15$}
  \label{fig:Case2Fig5}
\end{subfigure}
\hfill
    \end{minipage}
\hfill
\hspace{-0.02\linewidth}\begin{minipage}{0.09\textwidth}
\includegraphics{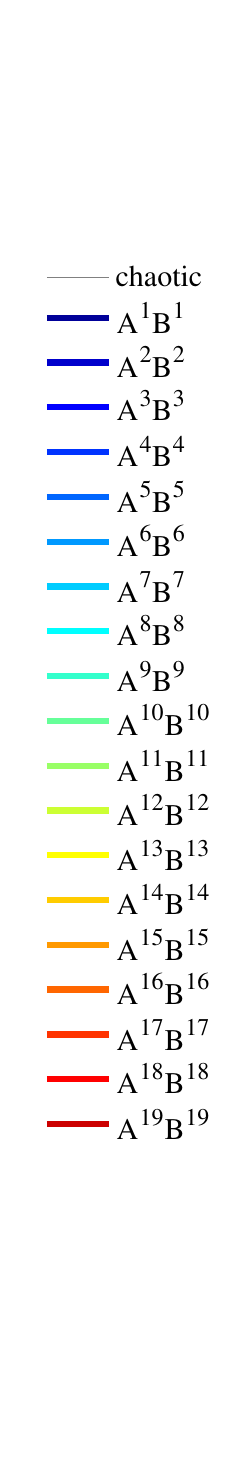}
\end{minipage}
\caption{Unfolding of the attractor for the symmetric unstable periodic orbits of the type $A^nB^n$. We observe that the UPOs cluster around the center of the unfolded chaotic trajectory.}
\label{fig:AnBnLTDEUPOs}
\end{figure}
\subsubsection{Case III: All symbolic sequences of length less than 8}
The data for this case is once again taken from the \textit{Divakar dataset} where all the symbolic sequences for sequence length 13 and less are provided. Summing the unique UPOs obtained for all sequences lengths 13 and less in Table.(\ref{tab:CountUPOs}) gives 1375. However, for the sake of clarity and to avoid overcrowding of the figure, only UPOs of sequence length less than 8 considered. Thus, only 39 UPO are considered. In this case, the separation of UPOs into clusters is observed in Fig~(\ref{fig:SeqLenLTDEUPOs}a-f). The cluster grouping happens based on the ratio of $A$ symbols to $B$ symbols for $A$-heavy UPOs and ratio of $B$ symbols to $A$ symbols for the $B$-heavy UPOs. As observed in the previous case, if the number of $A$ symbols and $B$ symbols are the same in a given UPO it is situated at the center of the unfolded chaotic trajectory. The order of these symbolic sequences does not matter for the arrangement of the UPOs in the delay-embedded space.
\begin{figure}[H]
    \centering
    \subfloat[]{\includegraphics[width=0.49\textwidth]{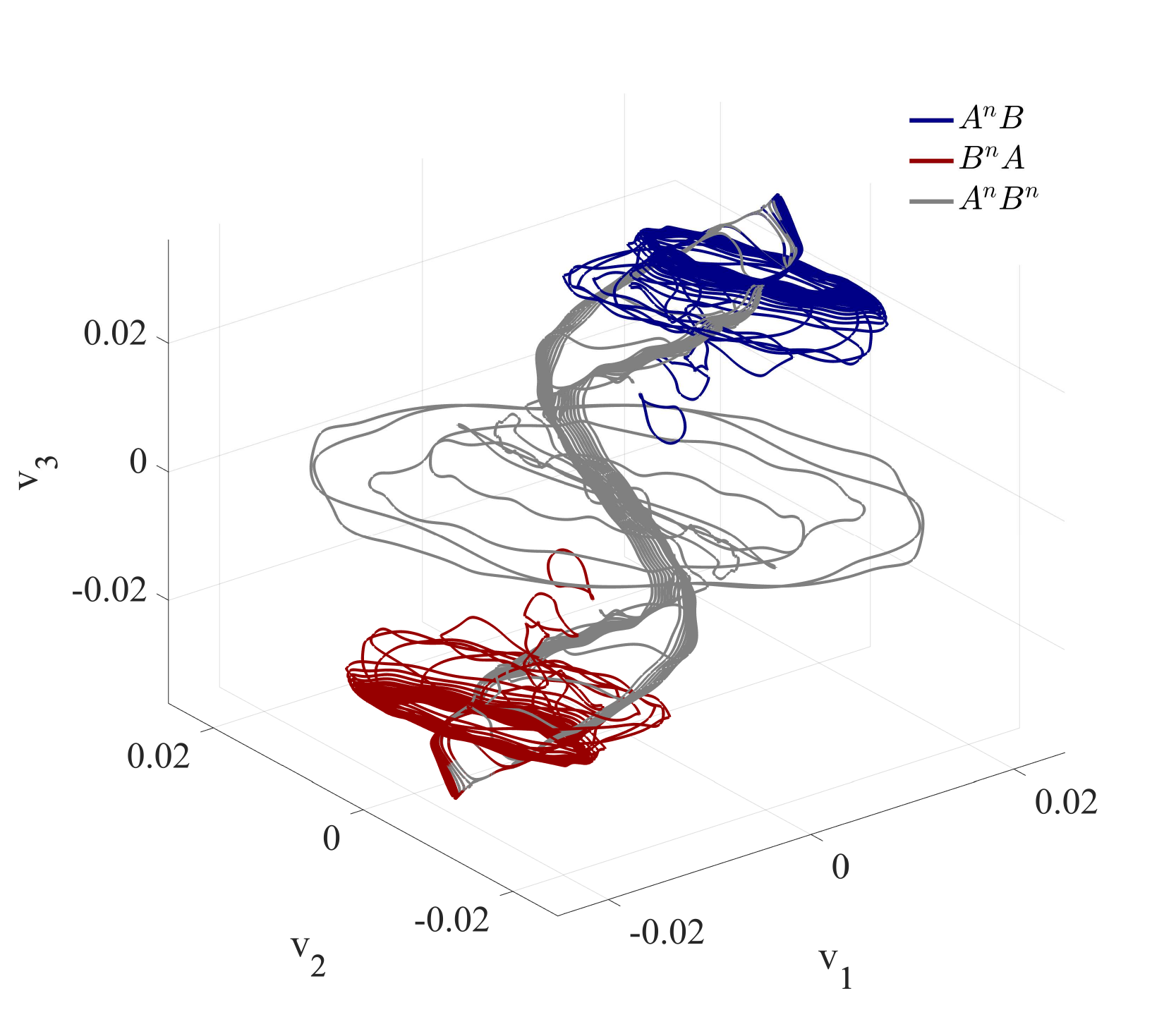}}
    \subfloat[]{
    \includegraphics[width=0.49\textwidth]{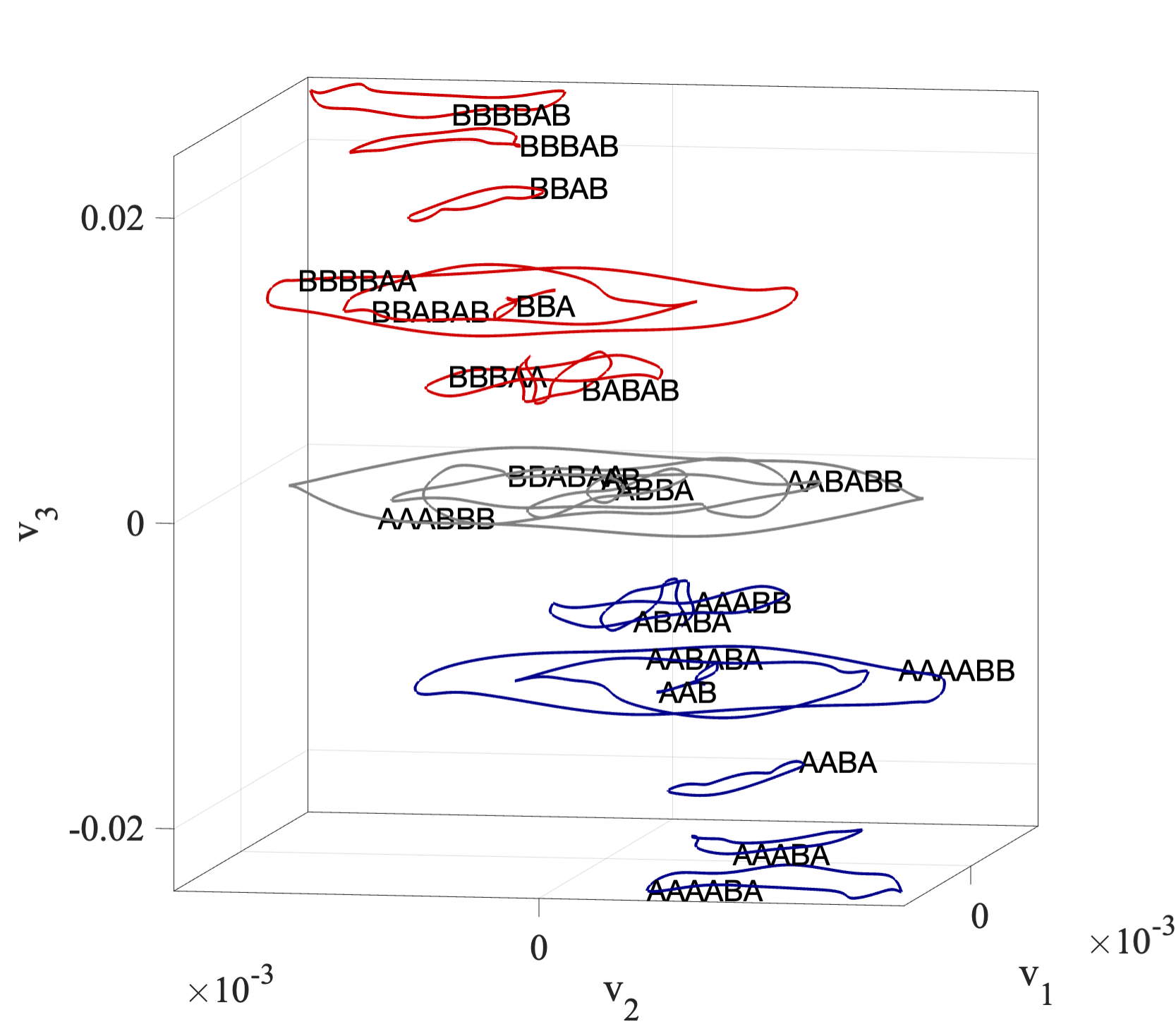}}
    \caption{Lorenz attractor: Clustering of the separated unstable periodic orbits. We observe that the symmetric UPOs cluster around the center whereas the $A$-heavy and $B$-heavy orbits cluster on diametrically opposite ends of the symmetric cluster. }
    \label{fig:AHeavyBHeavyUPOs}
\end{figure}
This is explored further in Fig.~(\ref{fig:AHeavyBHeavyUPOs}b) for all UPOs of sequence length 6 and less. We observe the first cluster at the center of the figure which has the following UPOs: $AB, ABBA, AAABBB$, $AABABB, BBABAA$. We observe that the ratio of $A$:$B$ for all the UPOs in the aforementioned cluster is 1. The second cluster we observe is the cluster on the $A$-heavy side (color coded in blue) adjacent to cluster $1$ which comprises of the UPOs: $ABABA,AAABB$. In this cluster the ratio of $A$:$B$ is $3:2=1.5$. The cluster adjacent to this one is the one comprising of the UPOs: $AABABA, AAAABB, AAB$ where the ratio is $2:1=2$. As we move away from the center the clusters have the ratios $3,4,5$. Similar trend is observed on the $B$-heavy side of the UPOs where the ratio of $B:A$ is taken into consideration. The distance between these clusters reduces as higher and higher ratios are taken into account. The distance of the clusters from the plane of separation $v_3=0$ can be given by the function of the ratios $f(\rho) = \frac{\rho-1}{\rho+1}$, where $\rho$ represents the ratio of the $A$ symbols to $B$ symbols for a given UPO. 
\begin{figure}
\setkeys{Gin}{width=\linewidth}
    \begin{minipage}{0.915\textwidth}
\hspace*{-0.02\linewidth}\begin{subfigure}[b]{0.36\linewidth}
  \includegraphics{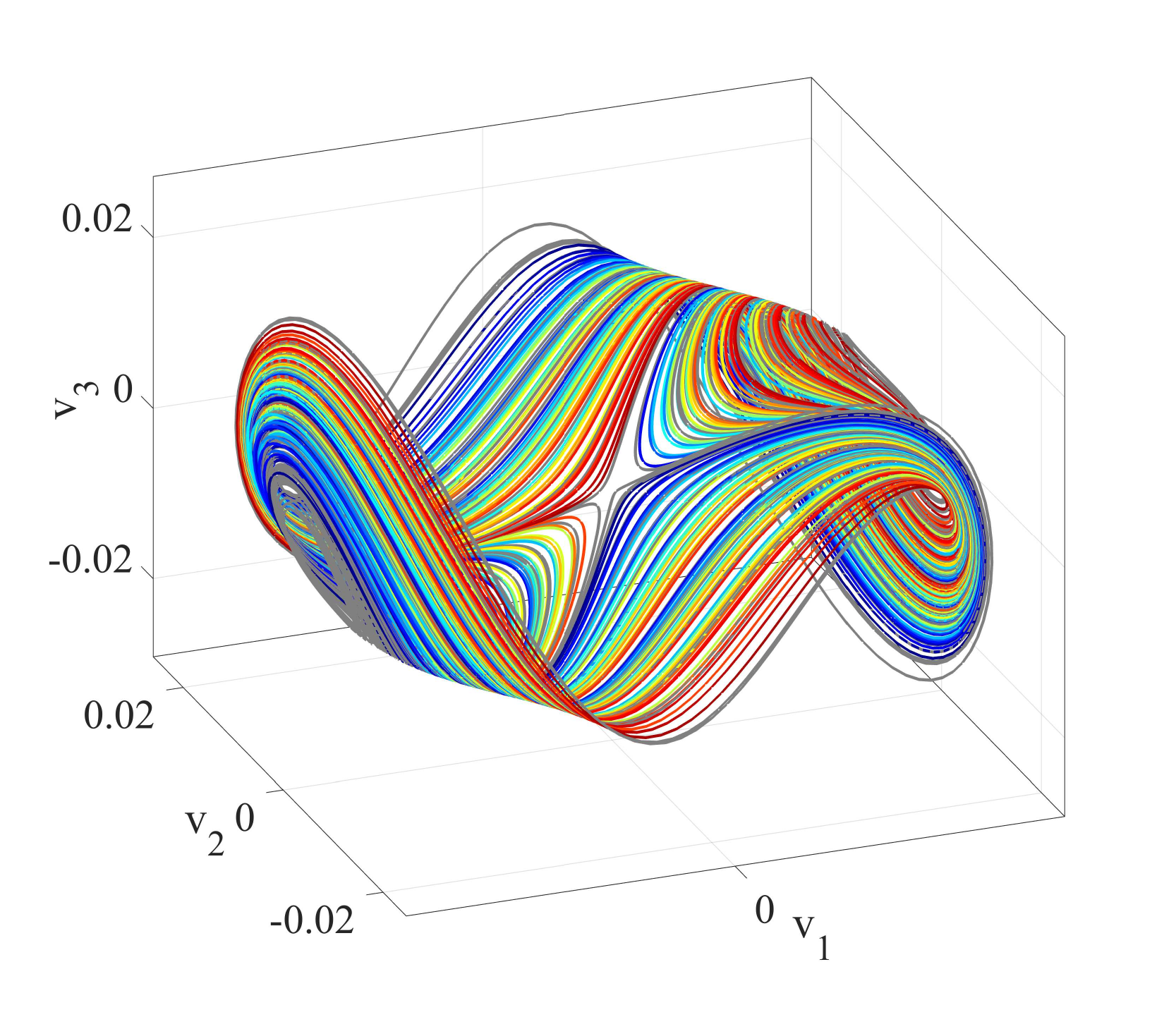}
  \caption{$t_{height}=0.5$}
  \label{fig:Case3Fig1}
\end{subfigure}
\hfill
\hspace*{-0.04\linewidth}\begin{subfigure}[b]{0.36\linewidth}
  \includegraphics{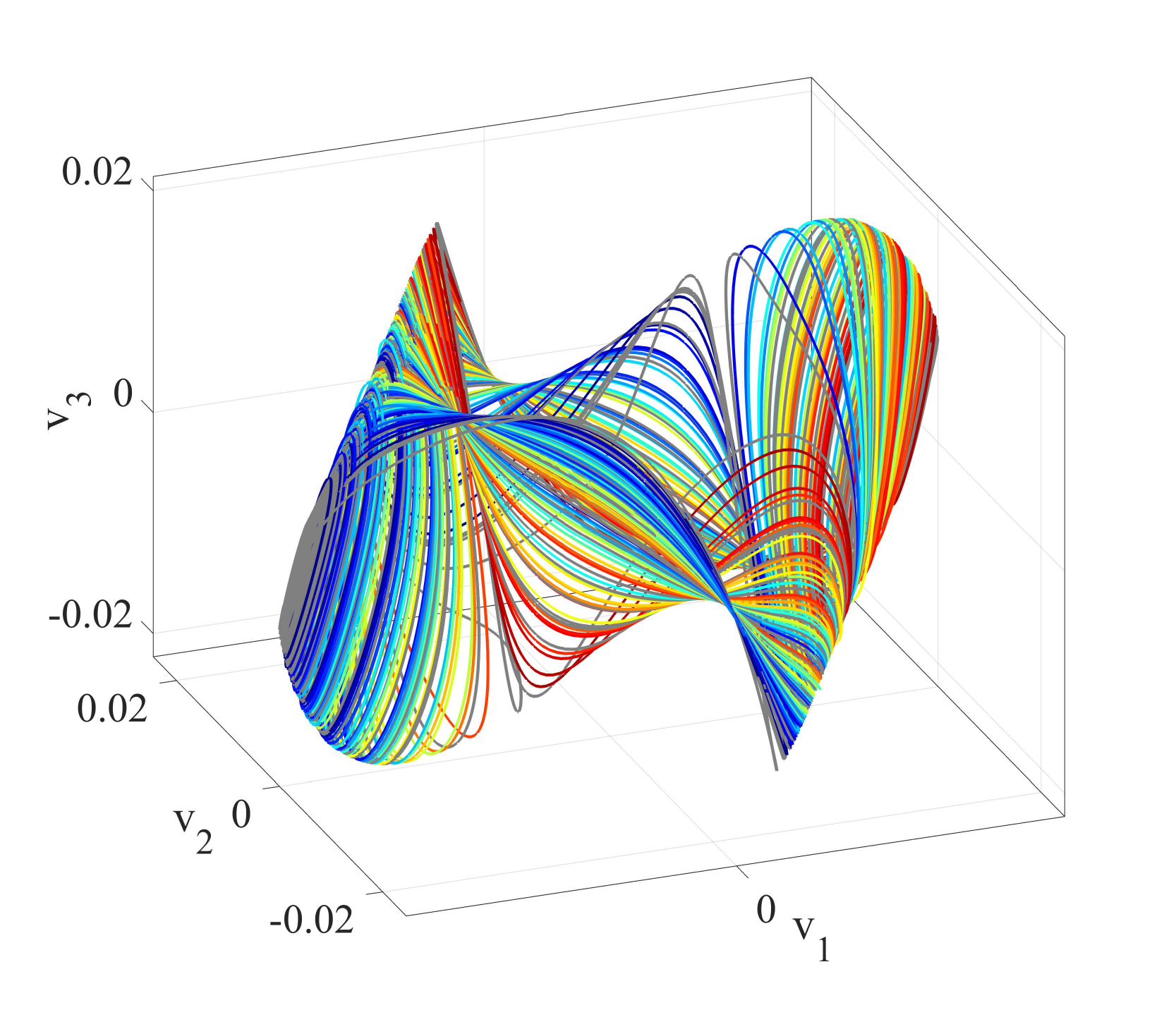}
  \caption{$t_{height}=1$}
  \label{fig:Case3Fig2}
\end{subfigure}
\hfill
\hspace*{-0.04\linewidth}\begin{subfigure}[b]{0.36\linewidth}
  \includegraphics{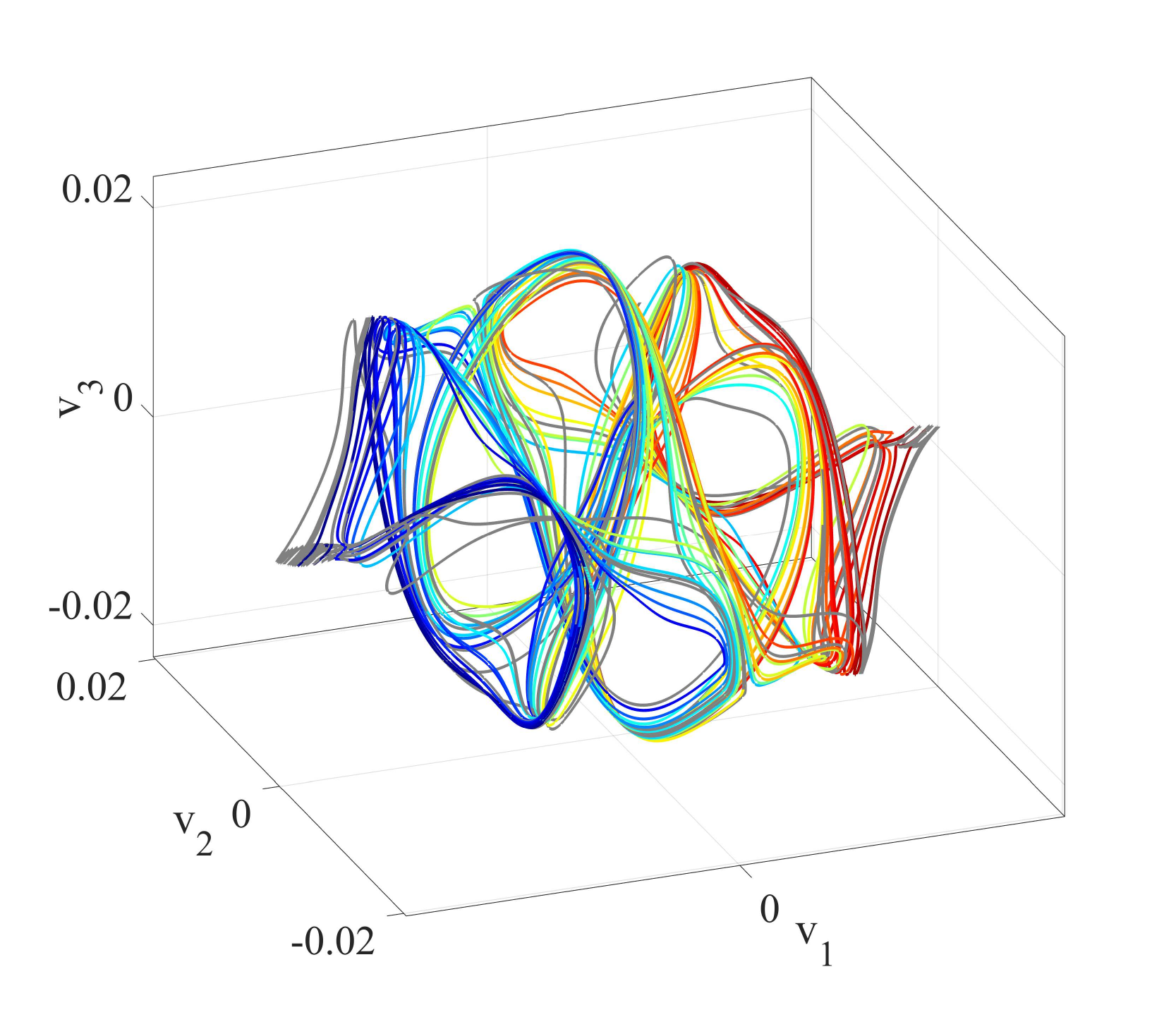}
  \caption{$t_{height}=3$}
  \label{fig:Case3Fig3}
\end{subfigure}
\hfill
\hspace*{-0.02\linewidth}\begin{subfigure}[b]{0.36\linewidth}
  \includegraphics{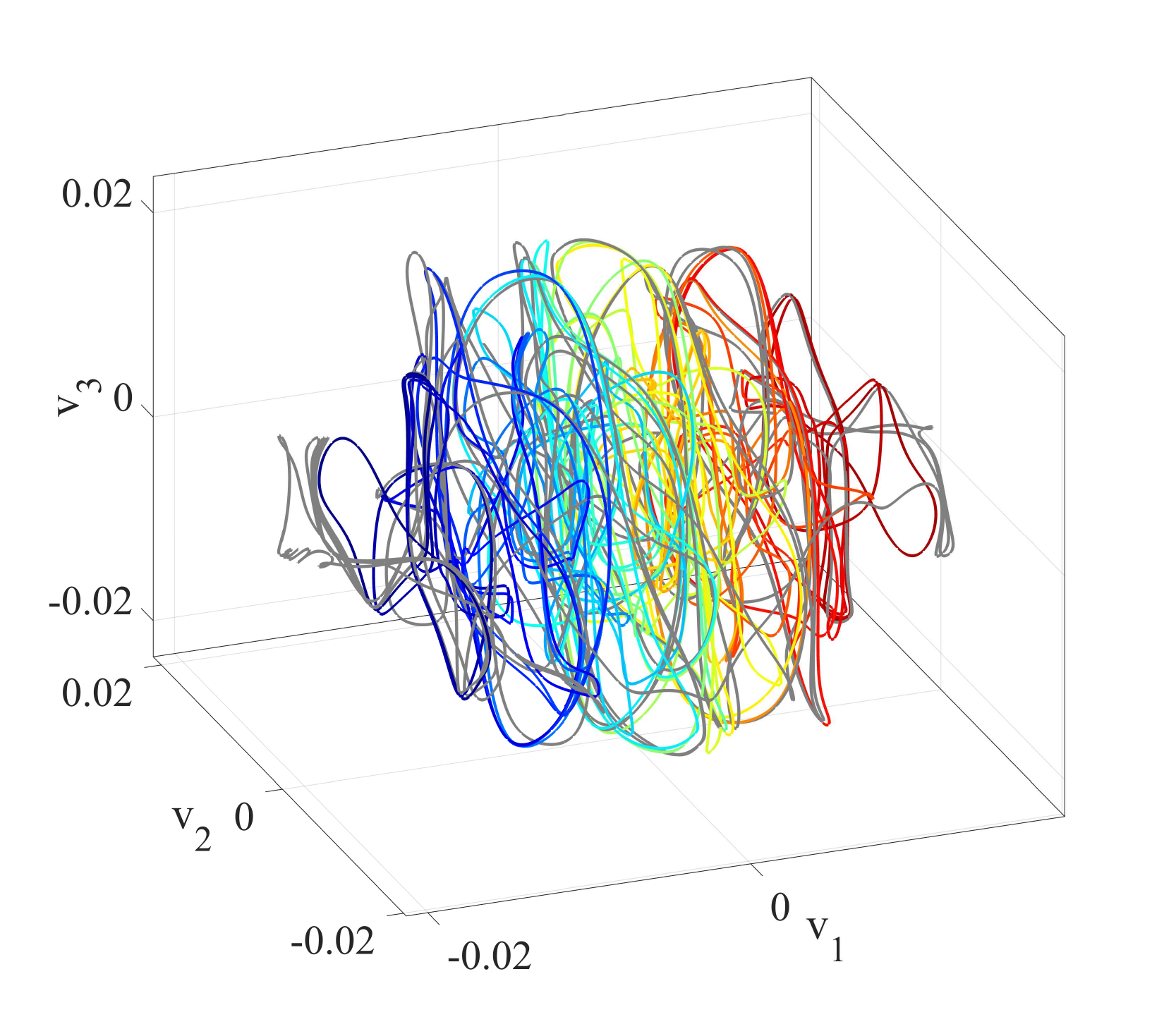}
  \caption{$t_{height}=5$}
  \label{fig:Case3Fig4a}
\end{subfigure}
\hfill
\hspace*{-0.04\linewidth}\begin{subfigure}[b]{0.36\linewidth}
  \includegraphics{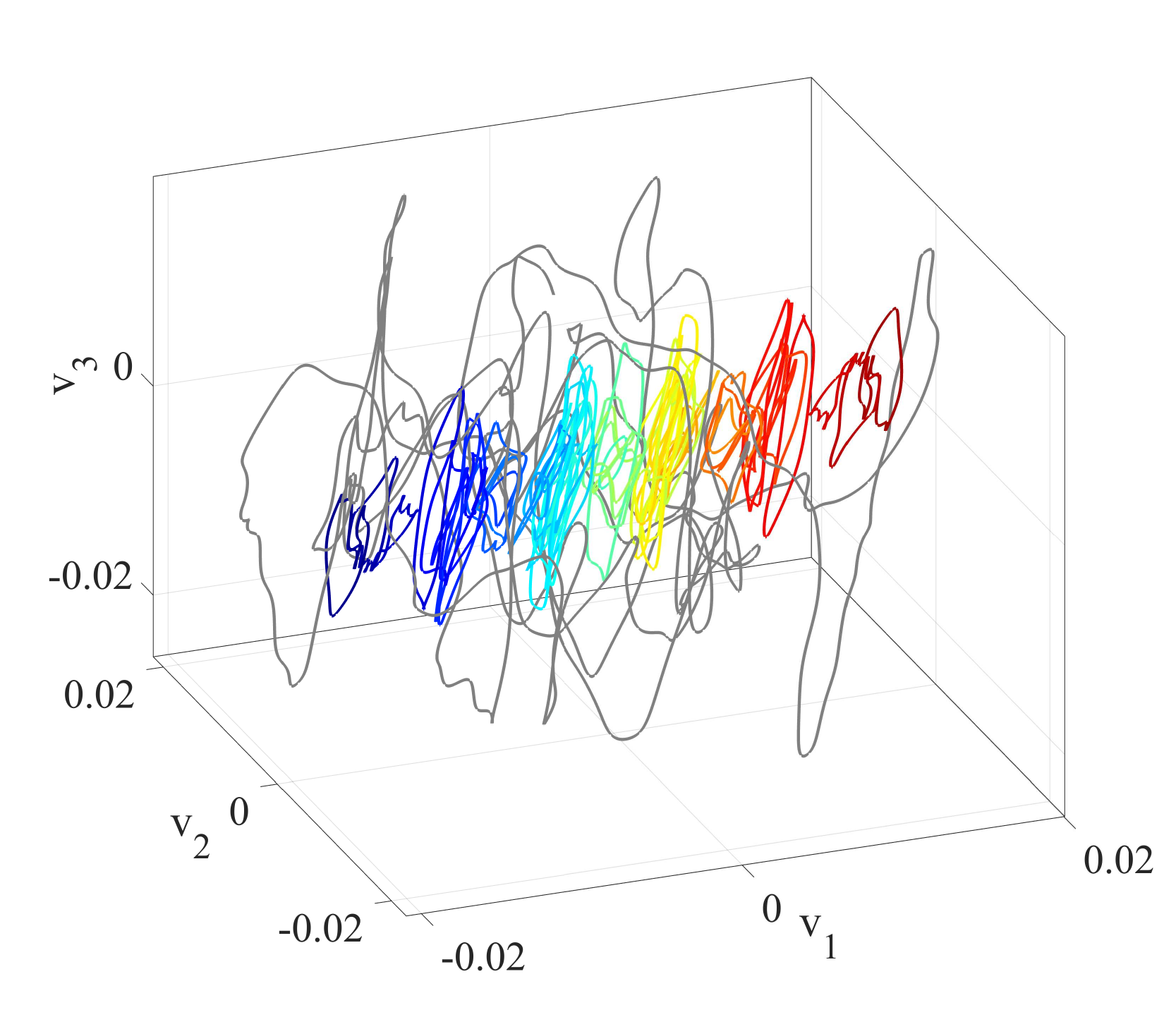}
  \caption{$t_{height}=10$}
  \label{fig:Case3Fig4}
\end{subfigure}
\hfill
\hspace*{-0.04\linewidth}\begin{subfigure}[b]{0.36\linewidth}
  \includegraphics{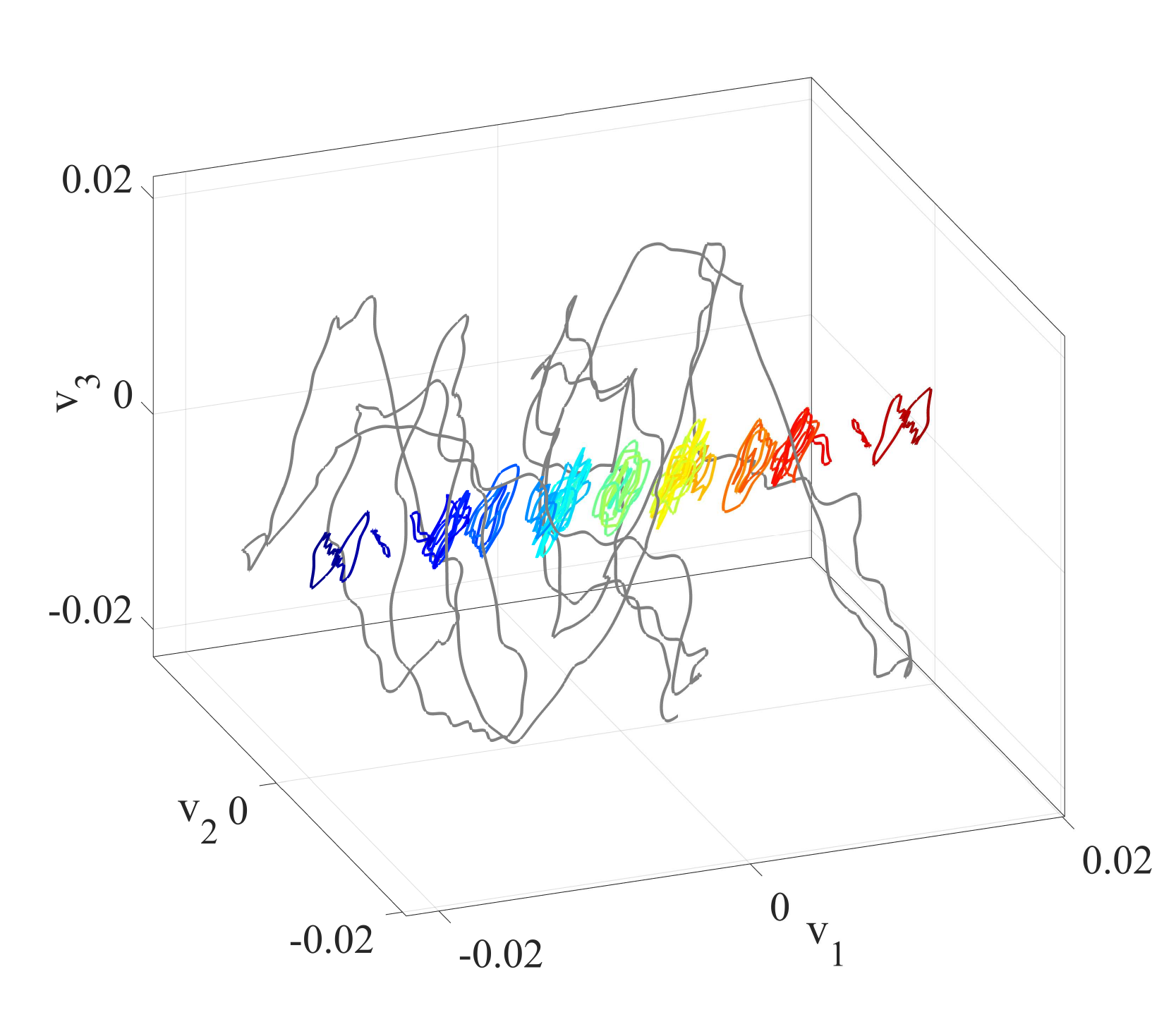}
  \caption{$t_{height}=15$}
  \label{fig:Case3Fig5}
\end{subfigure}
    \end{minipage}
\hfill
\hspace{-0.03\linewidth}\begin{minipage}{0.1\textwidth}
\includegraphics{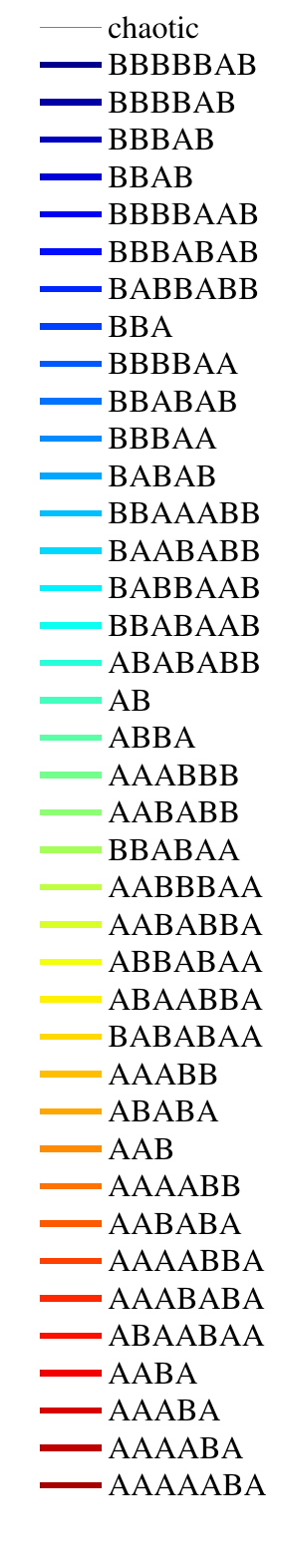}
\end{minipage}
\caption{Lorenz attractor: Unfolding of the attractor for the symbolic sequences length less than 8.}
\label{fig:SeqLenLTDEUPOs}
\end{figure}
 %% Results
\subsection{Rössler attractor} 
The initial conditions for the chaotic trajectory are taken to be $[x(t_0),y(t_0),z(t_0)]=[-2.8585, 0.2234,$ $ 0.2979]$. The values of the trajectory are computed for $\Delta t = 0.005$. In the Lorenz case, the $\Delta t$ for computation is taken to be $0.01$ whereas, in case of the Rössler attractor the $\Delta t$ is taken to be $0.005$. The Hankel matrix in this case, is constructed from the partial measurements of the $y$ variable. Additionally, the $t_{max}$(maximum time-period of UPO in the dataset) in case of Lorenz and Rössler attractor are $9.15097$ and $44.695$ respectively. The reduced time-step magnitude and increased time-period of the periodic orbits in the Rössler case make the SVD of the Hankel matrix memory-intensive. To mitigate this, variable values of $\tau$ parameter are used when computing the Hankel matrix, preventing memory issues during the SVD computation.

\subsubsection{Case I: Unstable periodic orbits of the type \texorpdfstring{$\mathbf{A^nB}$ and $\mathbf{AB^n}$}{AnB and ABn}}

The data for this case is taken from \cite{letellier1994caracterisation} \textit{Christophe dataset} which contains 41 unstable periodic orbits computed for sequence length less than 8. The dataset has been divided into smaller subsets to further understand the separation of the UPOs. However, as this dataset is not as extensive as the \textit{Divakar dataset}, the number of UPOs for each case is very limited. In case I, the unstable periodic orbits of type $A^nB$ and $AB^n$ is considered for $n=1,2,3,\cdots,6$. The results of the long time delay embeddings are obtained for varying the $t_{height}$ of the Hankel matrices. The $t_{width}=1200$ is kept constant for all the embeddings. However, different values are taken for $\tau$ to reduce the computational time. The Hankel matrices are computed for $t_{height} = [0.5,5,50,250,500,1000]$ and it's corresponding $\tau$ values are $\tau=[0.05,0.05,0.05,0.05,0.1,0.25]$. The results of the embeddings are compiled in Fig.(\ref{fig:RossC1}a-f). 

Similar to the Lorenz case, the $A^nB$ and $AB^n$ sets of orbits cluster on the diametric opposite ends of the unfolded chaotic attractor. However, unlike the Lorenz case the UPO do not show clear separation based on the ratio of the symbols (i.e., A and B) from the symbolic dynamics definition of the UPO. While the division of the phase space into $A$ and $B$ regions is done at $y=-3.04$ ($B$ is taken to be $y>-3.04$), we observe that in Fig.(\ref{fig:RosslerFRM}c) a considerable section of $B$ extends along $y<-3.04$. As as result the corresponding separation of trajectories does not exhibit the same behavior as the Lorenz attractor.

\begin{figure}
\setkeys{Gin}{width=\linewidth}
    \begin{minipage}{0.925\textwidth}
\hspace*{-0.02\linewidth}\begin{subfigure}[b]{0.36\linewidth}
  \includegraphics{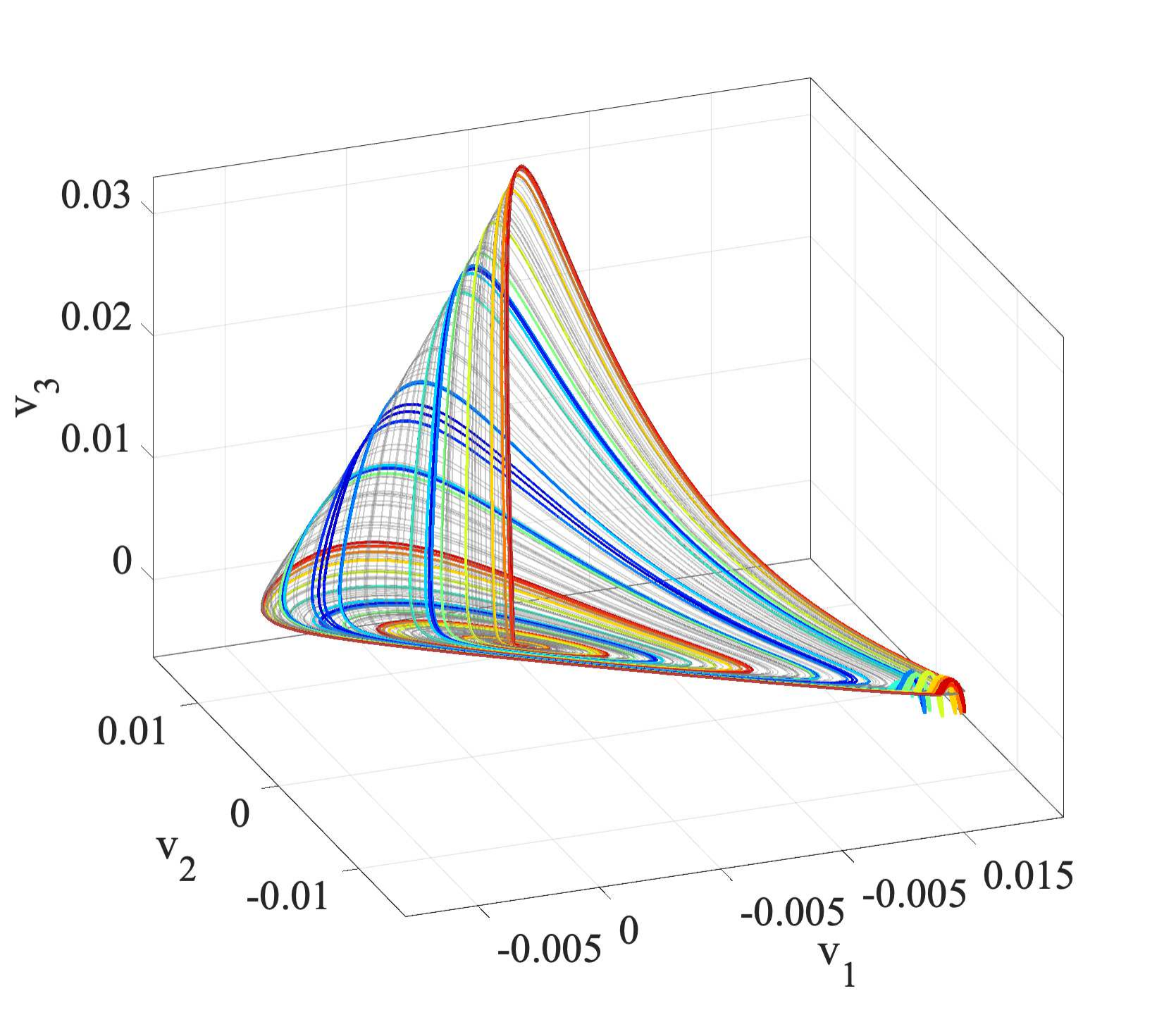}
  \caption{$t_{height}=0.5, \tau=0.05$}
  \label{fig:RCase1Fig1}
\end{subfigure}
\hfill
\hspace*{-0.04\linewidth}\begin{subfigure}[b]{0.36\linewidth}
  \includegraphics{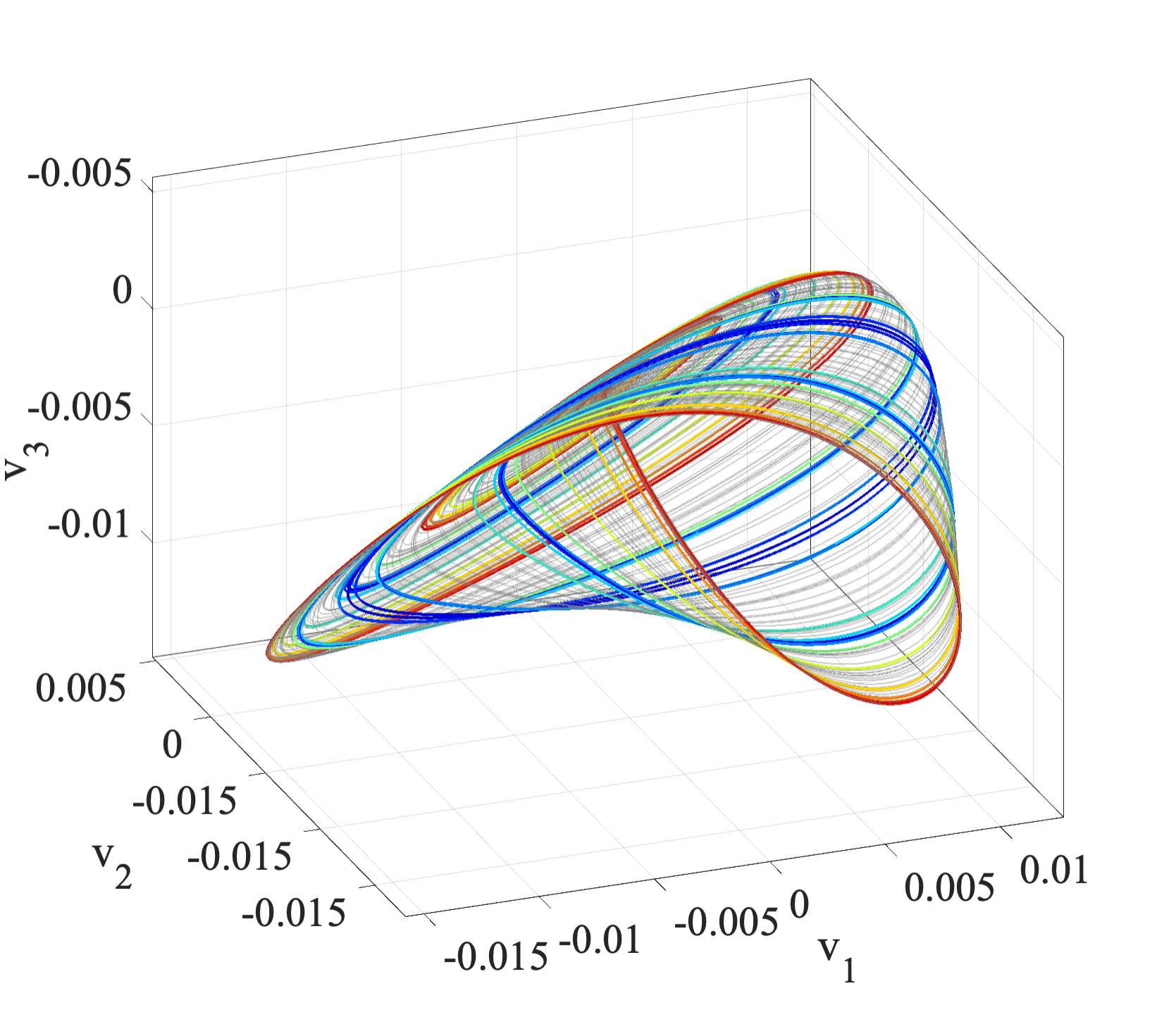}
  \caption{$t_{height}=5,\tau=0.05$}
  \label{fig:RCase1Fig2a}
\end{subfigure}
\hfill
\hspace*{-0.04\linewidth}\begin{subfigure}[b]{0.36\linewidth}
  \includegraphics{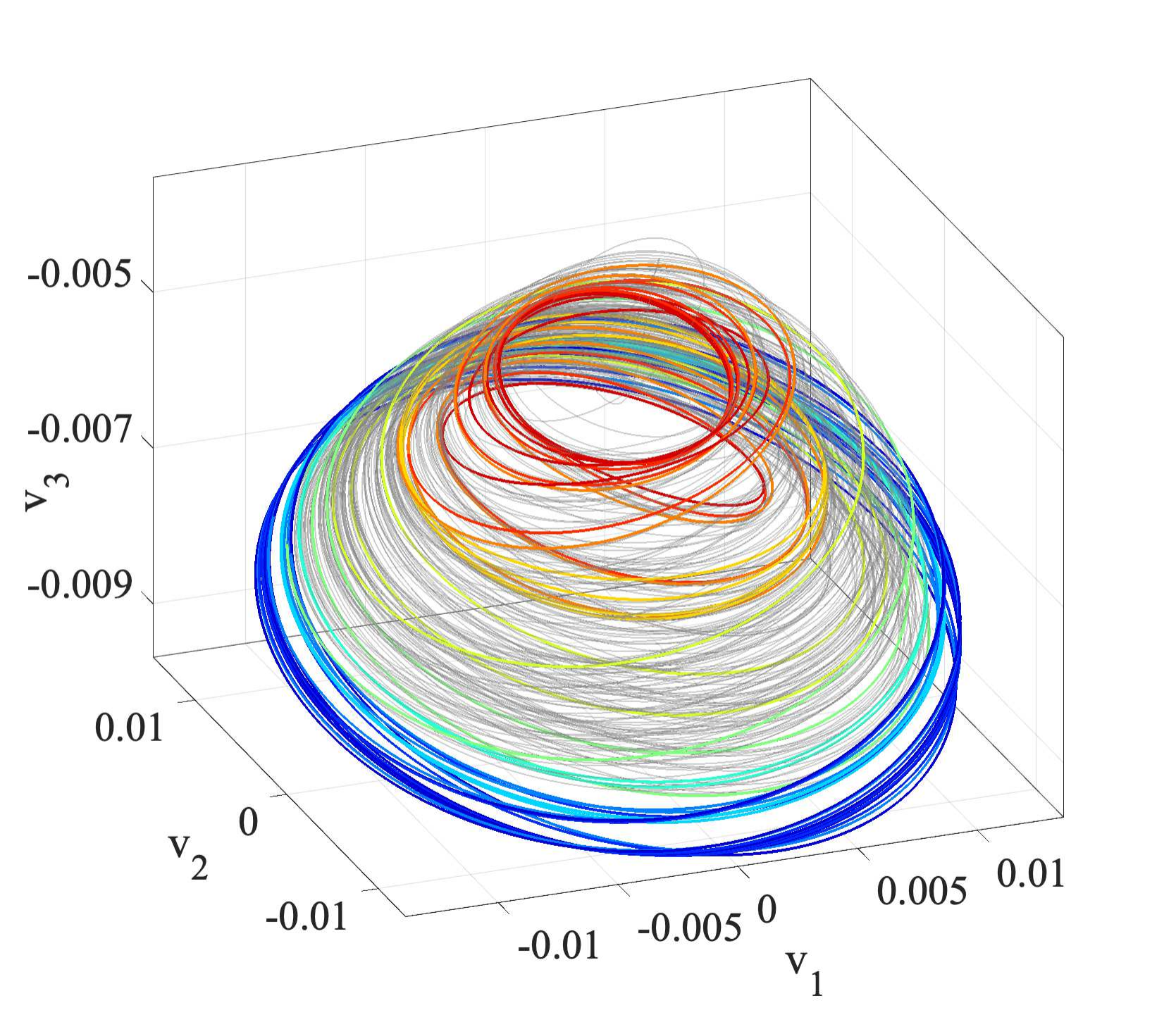}
  \caption{$t_{height}=50,\tau=0.05$}
  \label{fig:RCase1Fig3a}
\end{subfigure}
\hfill
\hspace*{-0.02\linewidth}\begin{subfigure}[b]{0.36\linewidth}
  \includegraphics{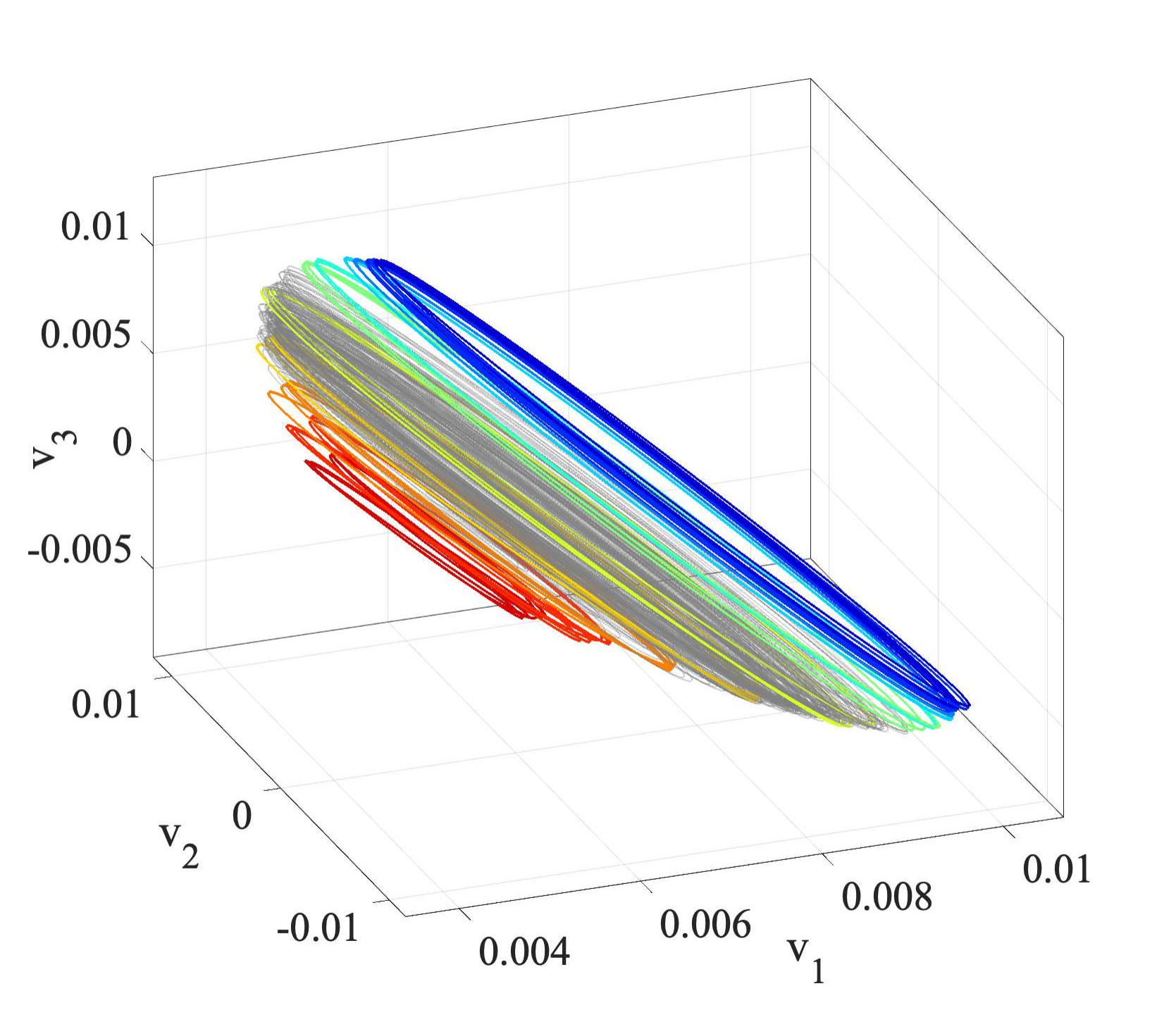}
  \caption{$t_{height}=250,\tau=0.05$}
  \label{fig:RCase1Fig4a}
\end{subfigure}
\hfill
\hspace*{-0.04\linewidth}\begin{subfigure}[b]{0.36\linewidth}
  \includegraphics{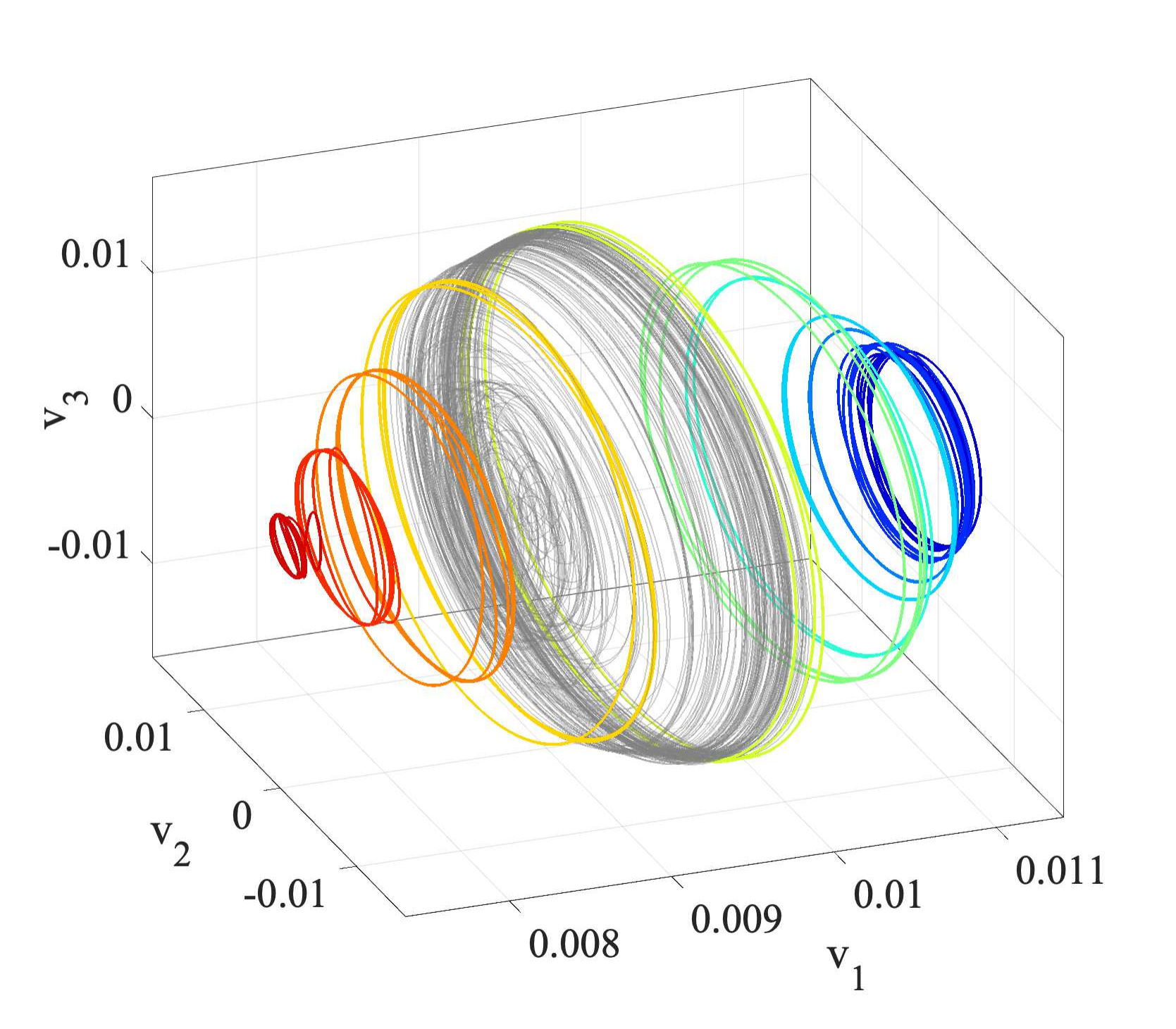}
  \caption{$t_{height}=500,\tau=0.1$}
  \label{fig:RCase1Fig5a}
\end{subfigure}
\hfill
\hspace*{-0.04\linewidth}\begin{subfigure}[b]{0.36\linewidth}
  \includegraphics{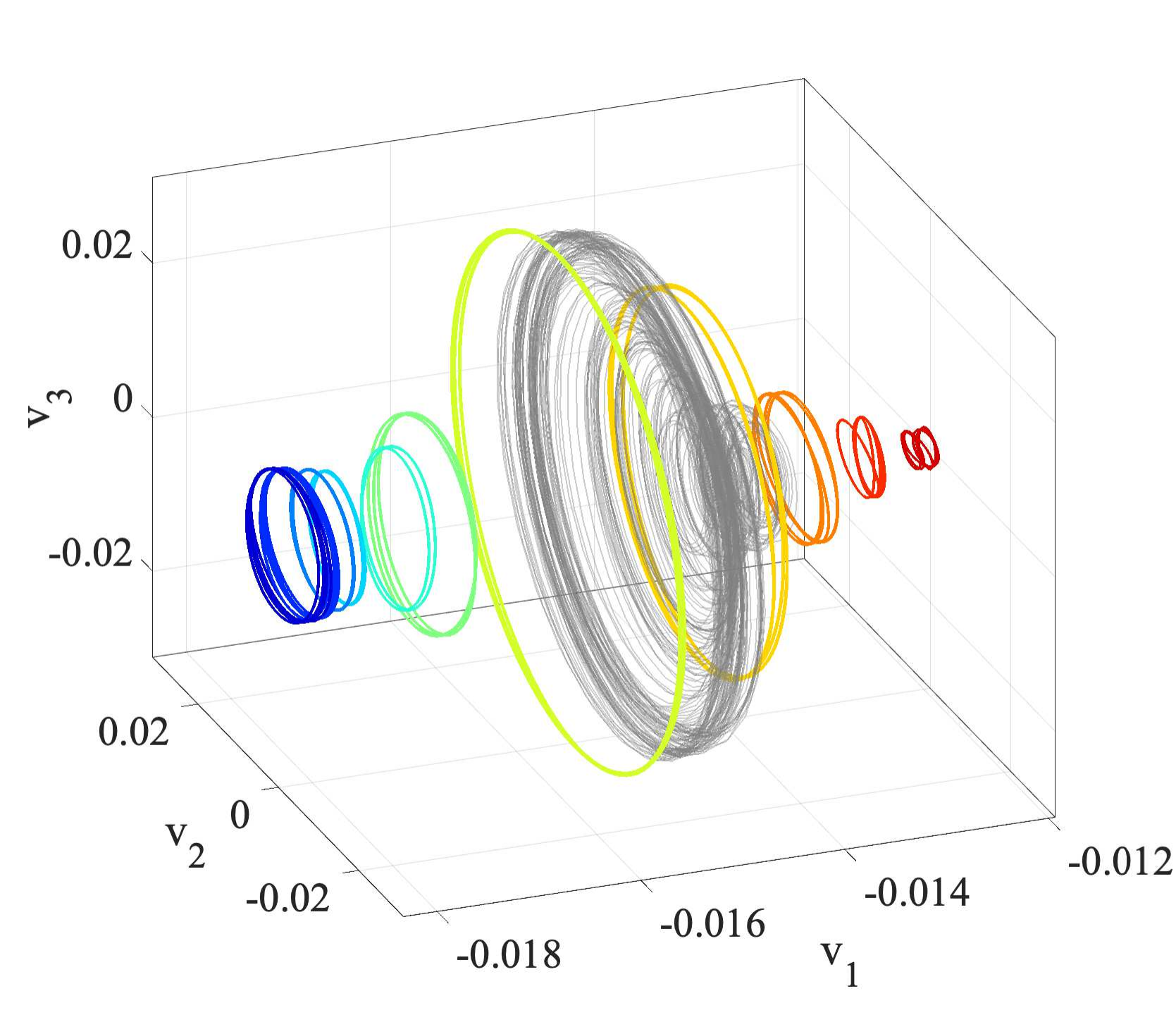}
  \caption{$t_{height}=1000,\tau=0.25$}
  \label{fig:RCase1Fig6a}
\end{subfigure}
\end{minipage}
\hfill
\hspace{-0.03\linewidth}\begin{minipage}{0.1\textwidth}
\includegraphics{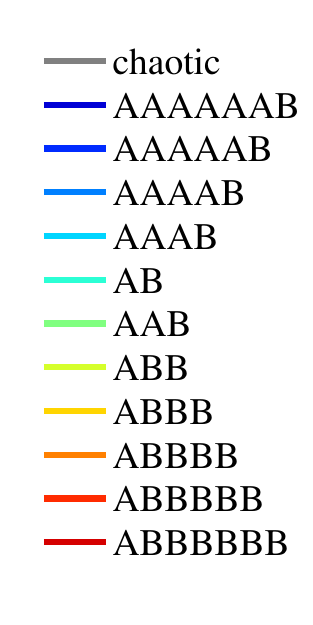}
\end{minipage}
\caption{Rössler attractor: Unfolding of the attractor for UPOs of type $A^nB$ and $AB^n$. While we still observe the clustering of the UPOs on the diametrically opposite ends of the unfolded attractor (similar to the Lorenz case), the arrangement of the UPOs depends on the ratio of time spent by the UPOs in $y>0$ and $y<0$.}
\label{fig:RossC1}
\end{figure}

\subsection{Case II: Unstable periodic orbits of sequence length less than 8}
In this case, we consider all the UPOs with sequence length less than 8 and the results for long time delay embeddings are obtained by varying the $t_{height}=[0.5, 5, 50,250,500,1000]$ with corresponding $\tau$ values:$\tau=[0.05,0.05,0.05,0.1,0.25]$. In Fig.(\ref{fig:RossC2}a-f), we observe the separation of UPOs into clusters wherein, the cluster groupings are based on the ratio $\rho$ which is ratio of time spent by the UPO in the $y>0$ region to $y<0$ region. The UPOs are then sorted and color-coded according to the ratio $\frac{\rho-1}{\rho+1}$ and they arrange themselves monotonically according to this ratio. Note that the color-coding in this case differs from the symbolic dynamics (A-heavy/B-heavy) but is rather recomputed for the ratio based on $y>0$ and $y<0$.

Thus, unlike the Lorenz case, the separation of UPOs in the embedded space for the Rössler attractor is not related to the symbolic dynamics separation plane but is instead determined by their distance from $y=0$. For the Lorenz attractor, the separation occurs along the plane $x=0$, allowing analysis in terms of symbolic dynamics, which does not apply to the Rössler attractor.

\begin{figure}
\setkeys{Gin}{width=\linewidth}
    \begin{minipage}{0.9\textwidth}
\hspace*{-0.02\linewidth}\begin{subfigure}[b]{0.36\linewidth}
  \includegraphics{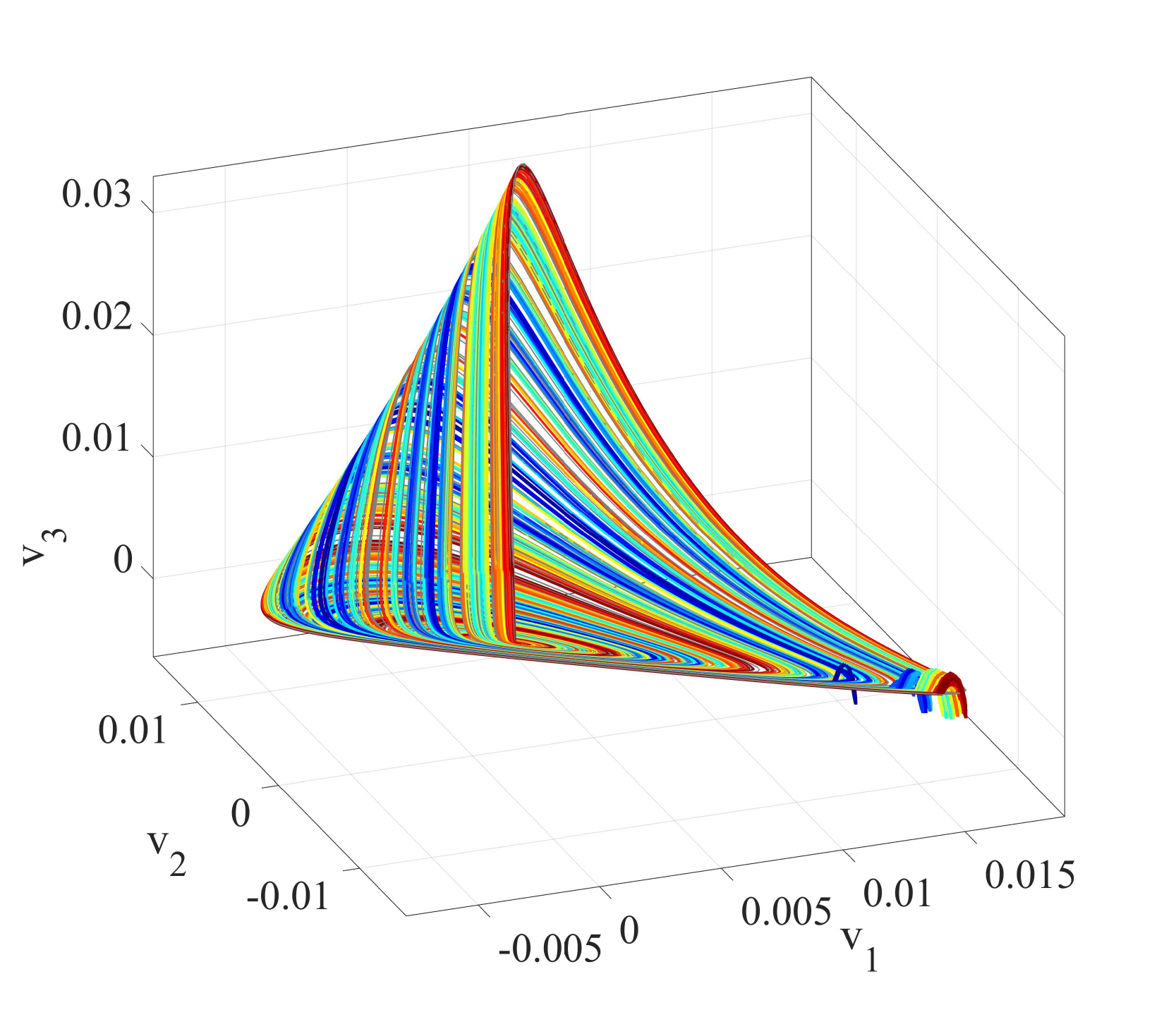}
  \caption{$t_{height}=0.5, \tau=0.05$}
  \label{fig:RCase1Fig1a}
\end{subfigure}
\hfill
\hspace*{-0.04\linewidth}\begin{subfigure}[b]{0.36\linewidth}
  \includegraphics{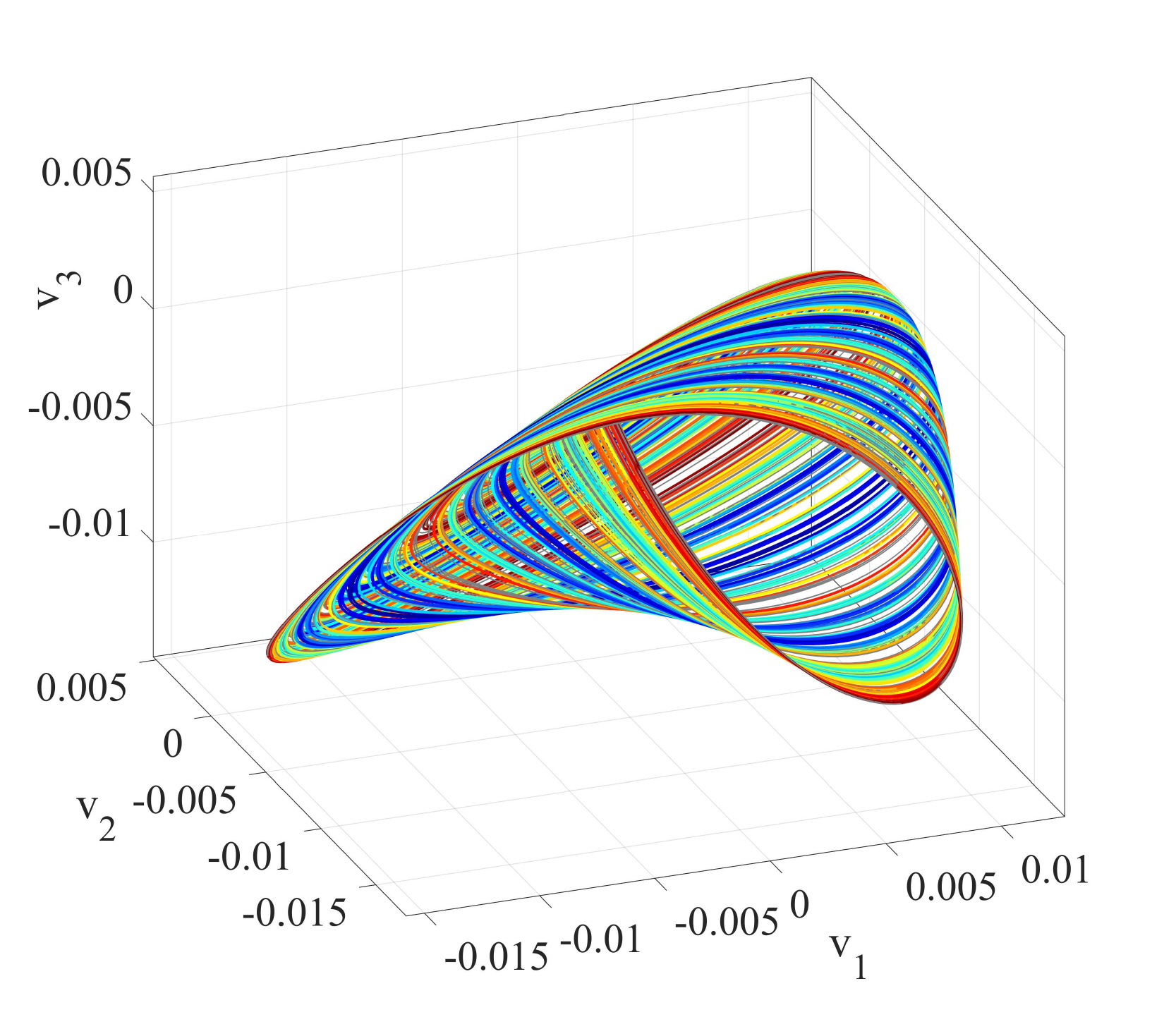}
  \caption{$t_{height}=5,\tau=0.05$}
  \label{fig:RCase1Fig2}
\end{subfigure}
\hfill
\hspace*{-0.04\linewidth}\begin{subfigure}[b]{0.36\linewidth}
  \includegraphics{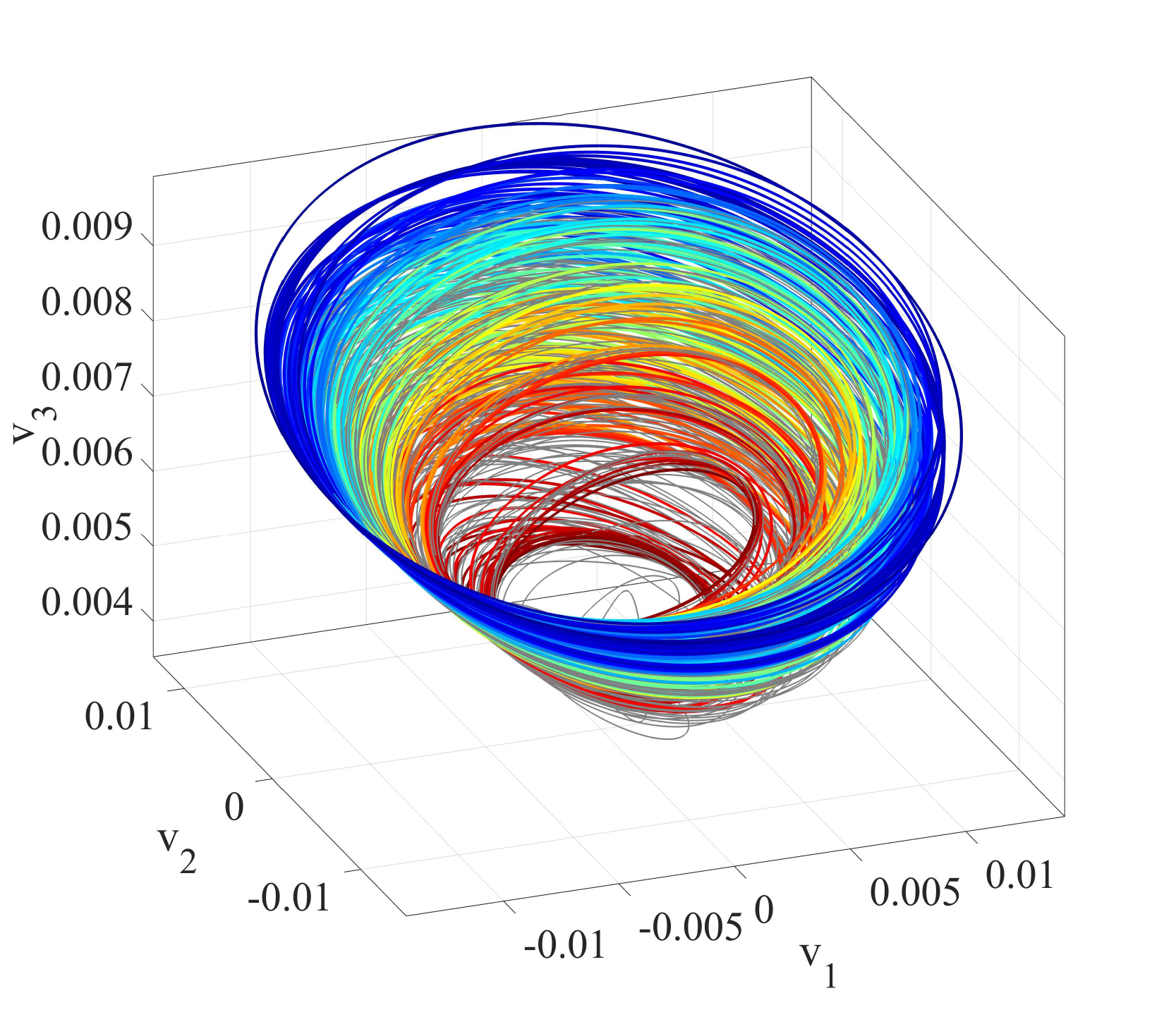}
  \caption{$t_{height}=50,\tau=0.05$}
  \label{fig:RCase1Fig3}
\end{subfigure}
\hfill
\hspace*{-0.02\linewidth}\begin{subfigure}[b]{0.36\linewidth}
  \includegraphics{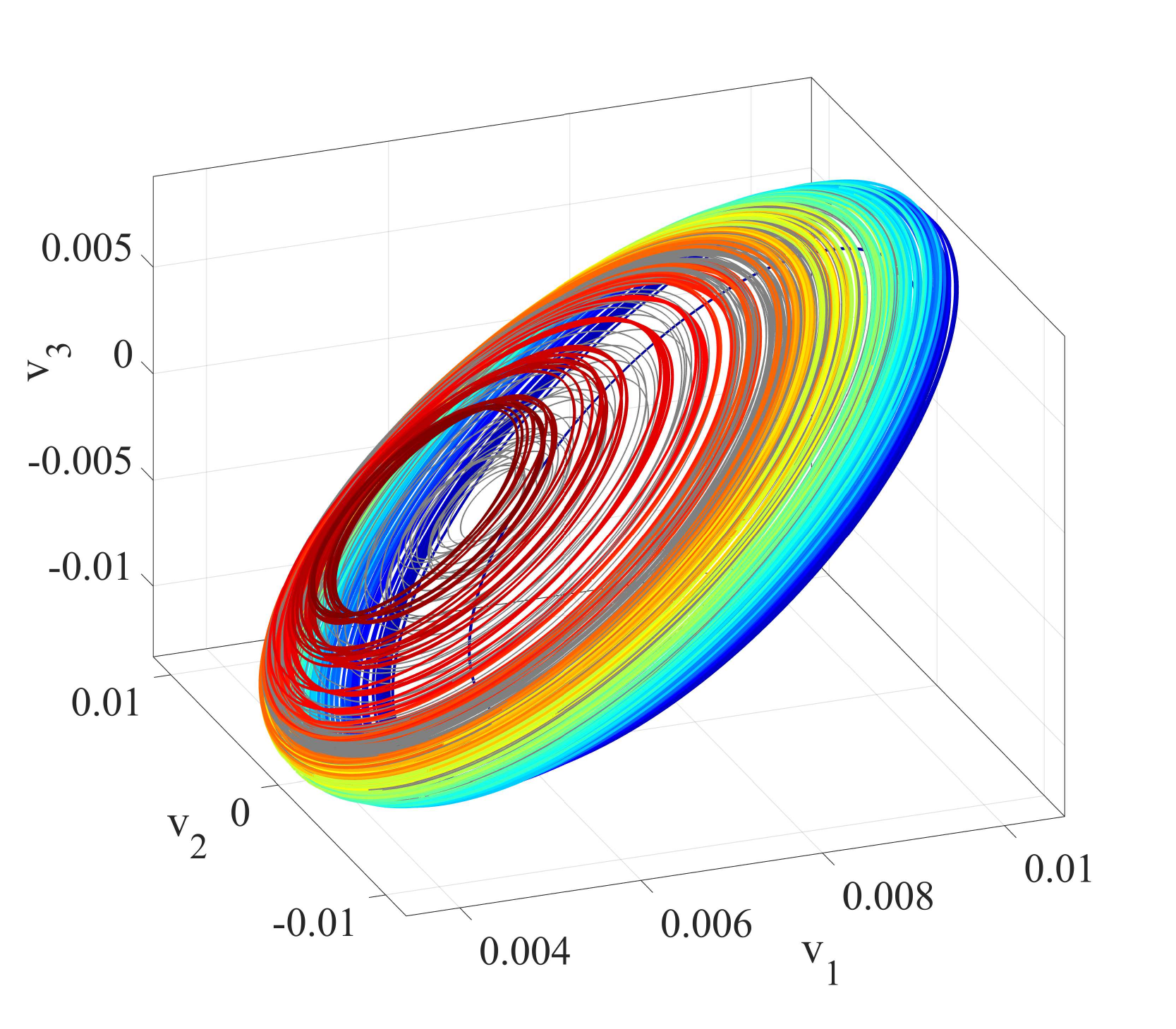}
  \caption{$t_{height}=250,\tau=0.05$}
  \label{fig:RCase1Fig4}
\end{subfigure}
\hfill
\hspace*{-0.04\linewidth}\begin{subfigure}[b]{0.36\linewidth}
  \includegraphics{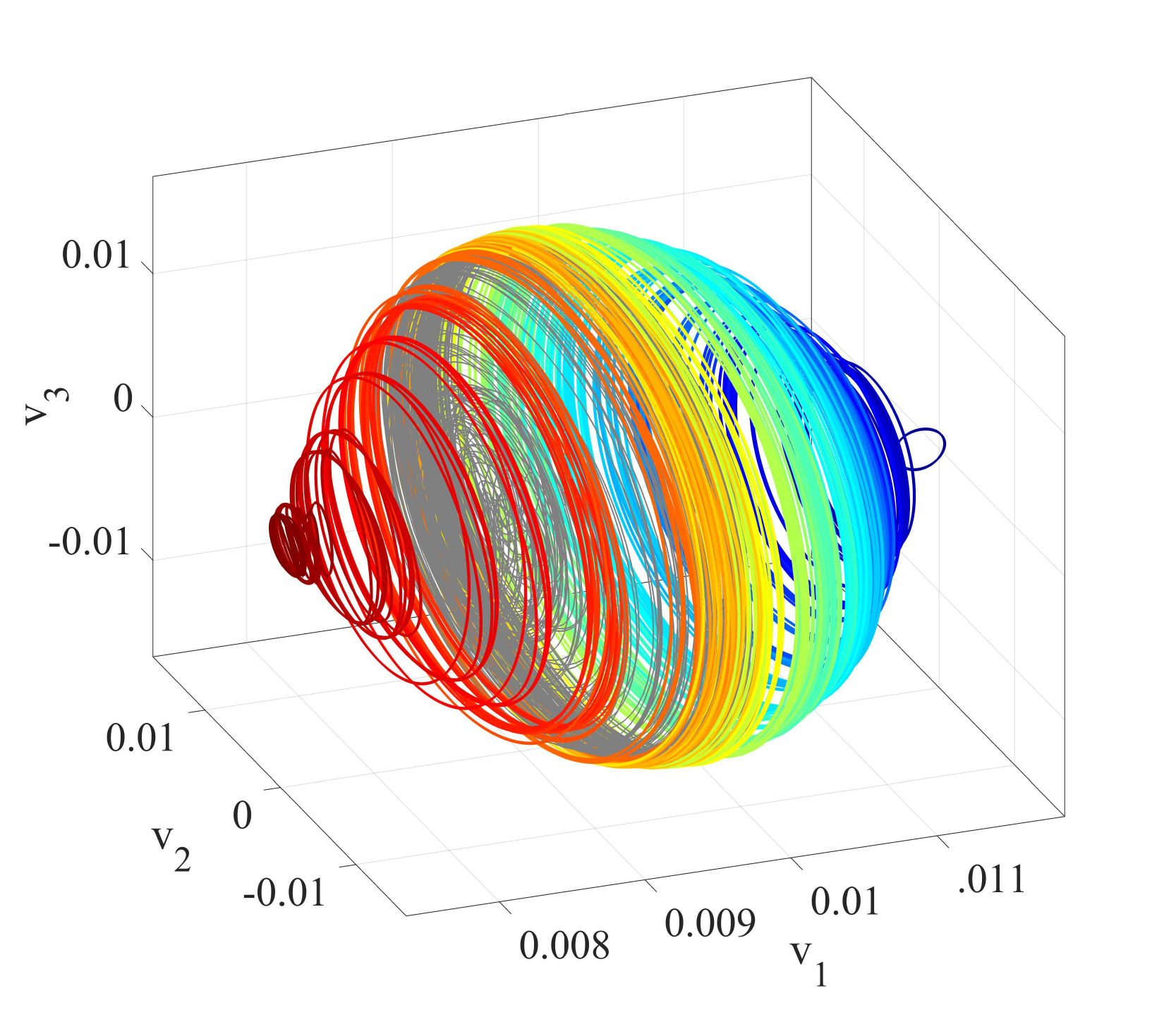}
  \caption{$t_{height}=500,\tau=0.1$}
  \label{fig:RCase1Fig5}
\end{subfigure}
\hfill
\hspace*{-0.04\linewidth}\begin{subfigure}[b]{0.36\linewidth}
  \includegraphics{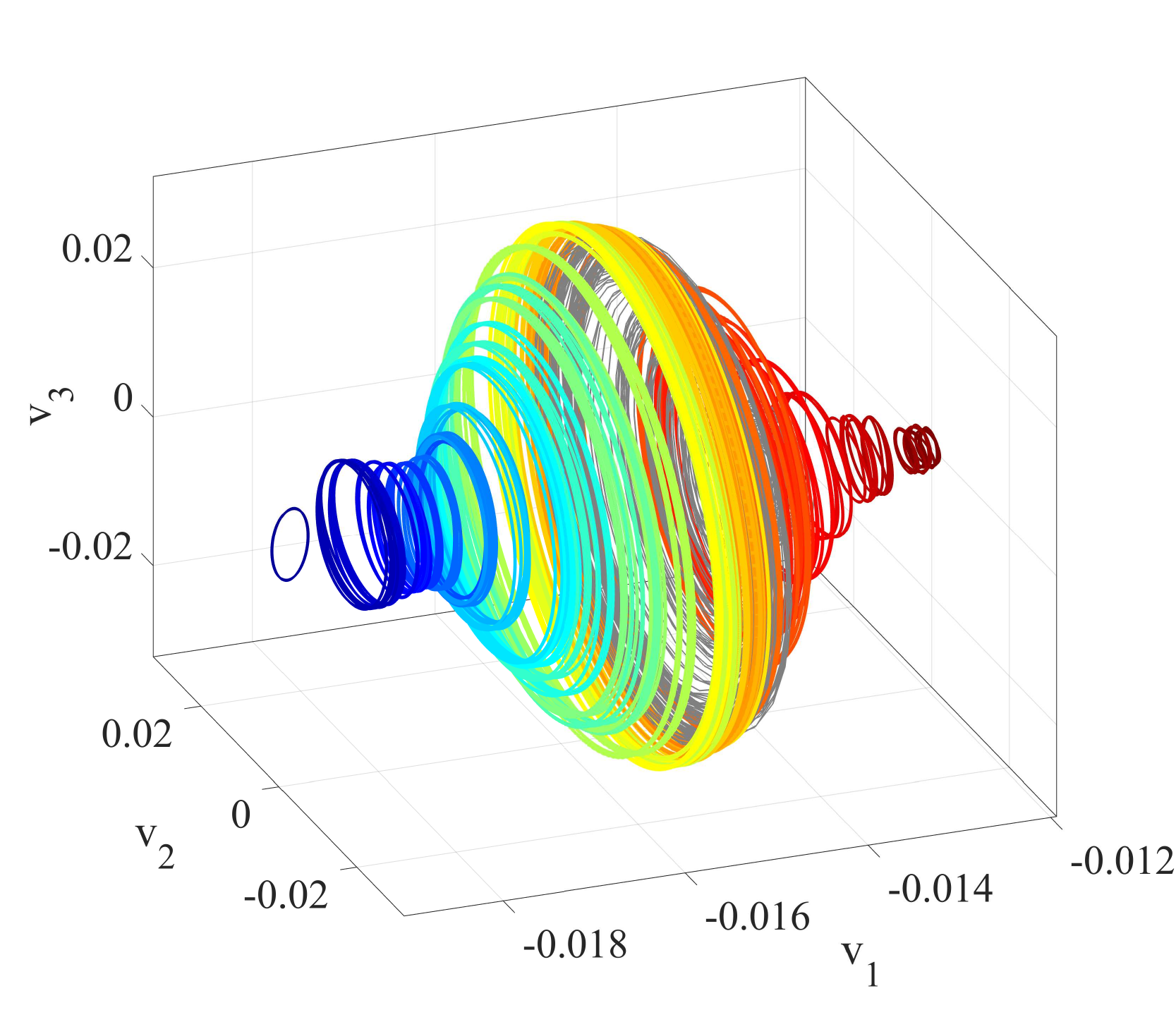}
  \caption{$t_{height}=1000,\tau=0.25$}
  \label{fig:RCase1Fig6}
\end{subfigure}
\end{minipage}
\hfill
\hspace{-0.03\linewidth}\begin{minipage}{0.1\textwidth}
\includegraphics{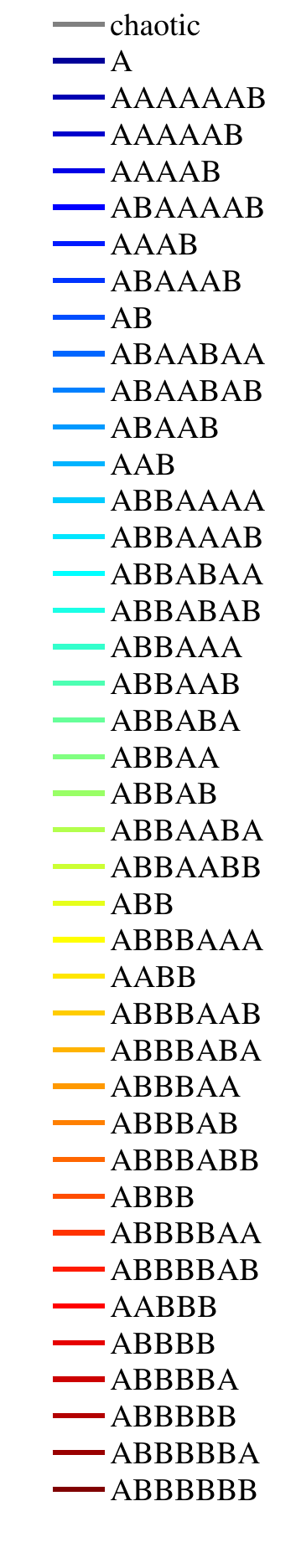}
\end{minipage}
\caption{Rössler attractor: Unfolding of the attractor for UPOs of sequence length less than 8. However, in this case the UPOs are rearranged according the ratio of time spent in $y<0$ and $y>0$. It is observed that the UPOs monotonically arrange themselves based on this ratio.}
\label{fig:RossC2}
\end{figure} %% Results 

\section{Conclusions}
We analyzed the periodic orbits of the Lorenz and Rössler systems using long time-delay embeddings. For the Lorenz system, we observed that long time-delay embeddings of the Hankel matrix cause the UPOs to separate in the embedded space. The UPOs also cluster based on the ratio $\frac{\rho-1}{\rho+1}$ where $\rho$ is the ratio of the $A$ symbols to the $B$ symbols in the symbolic designation of the UPO (obtained from Theorem(\ref{th:SeparationOfUPOs})). We examine the clustering of symmetric UPOs around the center of the unfolded chaotic trajectory whereas the $A$-heavy and $B$-heavy orbits cluster on diametrically opposite ends of the central cluster. UPOs with the same ratio cluster together irrespective of the order of the symbols in the symbolic notation.  

In case of the Rössler attractor, the separation of the UPOs follows the same principal as the Lorenz attractor. For the Rössler attractor this analysis is not dependent on the symbolic designation of the UPOs. The plane of separation for the UPOs for the symbolic designation is not the $y=0$ unlike the Lorenz case. For the Rössler the plane of separation is obtained from the first return map of the Poincaré section at $x=0$. Also the time spent by the UPO in each of the symbolic lobes ($A$ and $B$) are not co-related to the number of symbols. The analysis has been presented and the UPOs have been rearranged and the ratio $\rho$ has been recalculated depending on the amount of time spent by the UPO in $y>0$ and $y<0$. It is observed that for the separation of UPOs in the embedded space, this recalculated ratio plays a significant role than the plane of separation of the symbolic dynamics. 

We further present the modified Redfield-Polyá enumeration theorem which provides the number of UPOs for a given sequence length ($n$) and number of symbols ($k$). This theorem is obtained by imposing additional constraints on the Redfield-Polyá enumeration theorem based on the properties of the attractor for which the periodic orbits are computed.

%% Applications 
We anticipate the use of this analysis in Hankel based methods like DMDc, HAVOK and Koopman operators. This analysis can be applied for the study of control in orbits of chaotic attractor or in turbulent flows. This will also be relevant in the study of stabilization of  periodic orbits. 

This study opens up numerous avenues for future research. Extending this work to complex PDEs like the Navier-Stokes equations or the Kuramoto-Shivashinsky equation would be an important extension of this research. The separation of the UPOs in the embedded space for long time delay embeddings and the clustering of the orbits can be used to study the shadowing behavior of the chaotic trajectory and obtain metrics to determine which UPOs are visited more often~\cite{yalniz2021coarse,page2024recurrent,maiocchi2022decomposing}. Understanding the transition behavior between UPOs can be used to construct reduced order models of the chaotic dynamical system or turbulent flows. This study can also be leveraged for control of unstable periodic orbits. It will be interesting to combine the control of UPOs using the time delay embeddings. 
\section*{Acknowledgments}
The authors acknowledge support from the National Science Foundation AI Institute in Dynamic Systems (Grant number 2112085) and The Boeing Company. This work has benefited from fruitful discussions with Samuel Otto, Zachary Nicolaou, Doris Voina and Hangjun Cho. 

\section*{Data accessibility}
The data and codes used for this analysis can be found at \cite{myCode}.

% %%%%%%%%%%%% %% BIBLIOGRAPHY % %%%%%%%%%%%
 \begin{spacing}{.9}
 \setlength{\bibsep}{6.25pt}
 \bibliographystyle{unsrt}
 \bibliography{References}
 \end{spacing}
\end{document}